\newcommand{\repn}[1]{1,\ldots,#1}
\DeclareFontFamily{U}{BOONDOX-calo}{\skewchar\font=45 }
\DeclareFontShape{U}{BOONDOX-calo}{m}{n}{
  <-> s*[1.05] BOONDOX-r-calo}{}
\DeclareFontShape{U}{BOONDOX-calo}{b}{n}{
  <-> s*[1.05] BOONDOX-b-calo}{}
\DeclareMathAlphabet{\mathcalboondox}{U}{BOONDOX-calo}{m}{n}
\SetMathAlphabet{\mathcalboondox}{bold}{U}{BOONDOX-calo}{b}{n}
\DeclareMathAlphabet{\mathbcalboondox}{U}{BOONDOX-calo}{b}{n}
\definecolor{OliveGreen}{rgb}{0,0.4,0}
\definecolor{wine-stain}{rgb}{0.5,0,0}
\def\R{\mathbb{R}}
\def\eps{\varepsilon}
\def\A{{\mathcal A}}
\def\B{{\mathcal B}}
\def\D{{\mathcal D}}
\def\E{{\mathbb E}}
\def\F{{\mathcal F}}
\def\H{{\mathcal H}}
\def\C{{\mathcal C}}
\def\P{{\mathcal P}}
\def\X{{\mathcal X}}
\def\Y{{\mathcal Y}}
\def\M{{\mathcal M}}
\def\N{{\mathcal N}}
\def\S{{\mathcal S}}
\def\L{{\mathcal L}}
\def\Z{{\mathcal Z}}
\def\U{{\mathcal U}}
\def\indep{{\perp\!\!\!\perp}}
\def\sBer{{\mathsf{Bernoulli}}}
\def\sM{{\mathsf {sENSR}}}
\def\sW{{\mathsf {wENSR}}}
\def\sG{{\mathsf G}}
\def \var {{\mathsf{var}}}
\def \cov {{\mathsf{cov}}}
\def \mmse {{\mathsf {mmse}   }}
\newcommand{\cP}{\mathsf{P}_{\mathsf{c}}}
\newcommand{\dsty}[1]{$\displaystyle #1$}
\newcommand{\ndsty}[1]{$#1$}
\def \sM {{\mathsf{sENSR}}}
\def \sW {{\mathsf{wENSR}}}
\newcommand{\eq}[1]{\begin{equation*}
#1
\end{equation*}}
\newcommand{\eqn}[2]{\begin{equation}
\label{#1}
#2
\end{equation}}
\newcommand{\al}[1]{\begin{align*}
#1
\end{align*}}
\newcommand{\aln}[1]{\begin{align}
#1
\end{align}}
\newcommand{\ul}[1]{\underline{#1}}
\newcommand{\repdc}[3]{#1_{#2} , \ldots , #1_{#3}}
\newcounter{example}
\newenvironment{example}[1][]{\refstepcounter{example}\par\medskip
   \noindent \textit{Example~\theexample. #1} \rmfamily}{\medskip}
\newtheorem{definition}{Definition}
\newtheorem{theorem}{Theorem}
\newtheorem{corollary}{Corollary}
\newtheorem*{corollary-non}{Corollary}
\newtheorem{proposition}{Proposition}
\newtheorem{lemma}{Lemma}
\theoremstyle{remark}
\newcommand{\markov}{\mathrel\multimap\joinrel\mathrel-%
\mspace{-9mu}\joinrel\mathrel-}
\tikzstyle{RectObject}=[rectangle,fill=white,draw,line width=0.2mm]
\tikzstyle{line}=[draw]
\tikzstyle{arrow}=[draw, -latex]
\begin{document}

\title{Estimation Efficiency Under Privacy Constraints}

\author{Shahab~Asoodeh,~Mario~Diaz,~Fady~Alajaji,~\IEEEmembership{Senior Member,~IEEE,}
        and~Tam\'{a}s~Linder,~\IEEEmembership{Fellow,~IEEE}
\thanks{This  work  was  supported  by  the  Natural  Sciences  and Engineering  Research  Council  of  Canada.  This  paper  was  presented  in  part at the IEEE International  Symposium on Information Theory 2016 and 2017 \cite{Asoode_MMSE_submitted, Asoode_ISIT17_submitted}.}%
\thanks{S. Asoodeh is with the Computation Institute, The University of Chicago, Chicago, IL 60637 USA (e-mail: shahab@uchicago.edu).}
\thanks{M. Diaz is with the School of Electrical, Computer and Energy Engineering, Arizona State University, Tempe, AZ 85287-5706 USA and the School of Engineering and Applied Sciences, Harvard University, Cambridge, MA 02138 USA (emails: mdiaztor@\{asu,g.harvard\}.edu).}%
\thanks{F. Alajaji and T. Linder are with the Department of Mathematics and Statistics, Queen's University, Kingston, ON K7L3N6 Canada (e-mails: fa@queensu.ca; tamas.linder@queensu.ca).}}
\maketitle
\begin{abstract}
We investigate the problem of estimating a random variable $Y$ under a privacy constraint dictated by another correlated random variable $X$. 
When $X$ and $Y$ are discrete, we express the underlying privacy-utility tradeoff in terms of the privacy-constrained guessing probability $\mathcalboondox{h}(P_{XY}, \eps)$, the maximum probability $\cP(Y|Z)$ of correctly guessing $Y$ given an auxiliary random variable $Z$, where the maximization is taken over all $P_{Z|Y}$ ensuring that $\cP(X|Z)\leq \eps$ for a given privacy threshold $\eps \geq 0$. We prove that $\mathcalboondox{h}(P_{XY}, \cdot)$ is concave and piecewise linear, which allows us to derive its expression in closed form for any $\eps$  when $X$ and $Y$ are binary. 
In the non-binary case, we derive $\mathcalboondox{h}(P_{XY}, \eps)$ in the high utility regime (i.e., for sufficiently large, but nontrivial, values of $\eps$) under the assumption that $Y$ and $Z$ have the same alphabets. 
We also analyze the privacy-constrained guessing probability for two scenarios in which $X$, $Y$ and $Z$ are binary vectors. 
When $X$ and $Y$ are continuous random variables, we formulate the corresponding privacy-utility tradeoff in terms of $\sM(P_{XY}, \eps)$, the smallest normalized minimum mean squared-error (mmse) incurred in estimating $Y$ from a Gaussian perturbation $Z$. Here the minimization is taken over a family of Gaussian perturbations $Z$ for which the mmse of $f(X)$ given $Z$ is within a factor $1-\eps$ from the variance of $f(X)$ for any non-constant real-valued function $f$. We derive tight upper and lower bounds for $\sM$ when $Y$ is Gaussian. For general absolutely continuous random variables, we obtain a tight lower bound for $\sM(P_{XY}, \eps)$ in the high privacy regime, i.e., for small $\eps$.
\end{abstract} 
\begin{IEEEkeywords}
	Data privacy, privacy-utility tradeoff, guessing probability, R\'{e}nyi's entropy, minimum mean-squared error, maximal correlation, Gaussian additive privacy mechanism.
\end{IEEEkeywords}

\section{Introduction}
\IEEEPARstart{W}{e} consider the following constrained estimation problem: given two correlated random variables $X$ and $Y$, how accurately can $Y$ be estimated from another correlated random variable $Z$, while ensuring that the "information leakage" about $X$ is limited? More precisely, we seek to design a randomized mechanism $\M$ which maps $Y$ to an auxiliary random variable $Z$ such that the information leakage from $X$ to $Z$ is limited, and the "estimation efficiency" of $Y$ given $Z$ is maximal. This basic question arises often in data privacy problems, where Alice wishes to disclose {\em non-private information} $Y$ to Bob as accurately as possible in order to receive a payoff, 
but in such a way that her {\em private information} $X$ cannot be effectively inferred by Bob. For instance, her browsing history might constitute the non-private information which a social media website collects in order to provide personalized recommendations. In an ideal world, her browser should sanitize $Y$ before its release in order to avoid compromising her private information $X$ (which may for example include her political leanings). In this context, her browser has access only to $Y$, but the potential correlation between $X$ and $Y$ makes the sanitization of $Y$ critical. Motivated by this type of applications, we assume throughout the paper that $X$, $Y$, and $Z$ form a Markov chain in that order, denoted by $X\markov Y\markov Z$.

 Given the joint distribution $P_{XY}$, Alice chooses a random mapping $\M$ to generate the \emph{displayed data} $Z$ in such a way that Bob can guess $Y$ from $Z$ as accurately as possible while being unable to use $Z$ to efficiently guess $X$. Note that $\M$, the so-called \emph{privacy filter}, is completely determined by $P_{Z |Y}$. The system block diagram of  this model is depicted in Fig.~\ref{fig:XYZ}.

\begin{figure}[h]      
	\centering
	\begin{tikzpicture}
	\draw (-7.5,-1) node[fill=red!20,ellipse, anchor=base, minimum size=0.9cm] (ind2) {$X$};
	\draw (-5.8,-1) node[fill=green!20, ellipse, anchor=base, minimum size=0.9cm] (ind1) {$Y$};
	\node at (-5.9, -1.6) {\small{non-private}};
	\node at (-5.9, -1.9) {\small{data}};
	\node at (-7.5, -1.6) {\small{private}};
	\node at (-7.5, -1.9) {\small{data}};
	\path[arrow] (ind2) -- (ind1);
	\path [arrow] (ind1) -- (-4, -0.9);
	\draw[black,thick] (-3,-0.5) -- (-4,-0.5) --(-4,-1.3) -- (-3,-1.3)-- (-3,-0.5);
	\draw[red,dashed, thick] (-8.2,-0.1) -- (-5,-0.1) --(-5,-2.2) -- (-8.2,-2.2)-- (-8.2,-0.1);
	\draw[blue!20,dashed, thick] (-2.1,-0.1) -- (-0.6,-0.1) --(-0.6,-2.2) -- (-2.1,-2.2)-- (-2.1,-0.1);
	\node at (-3.5, -0.9) {$\mathcal{M}$};
	
	\draw [gray!20, fill=gray!20, anchor=base] (-3.55,0.35) ellipse (0.9cm and 0.4cm);
	\node at (-3.55, 0.53) {\footnotesize{local}};
	\node at (-3.55, 0.28) {\footnotesize{randomness}};
	\draw[arrow] (-3, -0.9) -- (-1.8, -0.9);
	\draw[arrow]  (-3.55, -0.06)-- (-3.55, -0.5);
	\draw (-1.35, -1)  node[fill=blue!20, ellipse, anchor=base, minimum size=0.9cm] (generate) {$Z$};
	\node at (-1.35, -1.6) {\small{displayed}};
	\node at (-1.35, -1.9) {\small{data}};
	\node at (-6.5, -2.5) {\textcolor{red}{Alice}};
	\node at (-1.3, -2.5) {\textcolor{blue!40}{Bob}};
	\node at (-3.5, -1.6) {\small{privacy}};
	\node at (-3.5, -1.9) {\small{mechanism}};
	\end{tikzpicture}
	\caption{The system block diagram.}
	\label{fig:XYZ}
\end{figure}

 A quantitative answer to this problem requires: (i) an appropriate measure $\L(X\to Z)$ of information leakage from $X$ to $Z$; and (ii) an appropriate measure $\S(Y|Z)$ of the estimation efficiency of $Y$ given $Z$. A quantitative and operationally well-justified measure of information leakage has been long sought to assess the performance of different mechanisms used in practice.  In this paper, we set $\mathcal{S}(Y|Z)=\L(Y\to Z)$ and propose two measures of information leakage depending on the support of $X$ and $Y$. 
\begin{description}
\item[Discrete case:]  When $X\in \X$ and $Y\in \Y$ are both discrete, it is natural to define information leakage as Bob's efficiency in guessing $X$. Hence, we propose $\L(X\to Z)$ to be $\frac{\cP(X|Z)}{\cP(X)}$, where $\cP(X) \coloneqq \max_{x\in\X} P_X(x)$ is the \emph{probability of correctly guessing} $X$ and
\begin{equation}
    \begin{aligned}
   	\cP(X|Z)&\coloneqq\sum_{z\in \Z}P_{Z}(z)\max_{x\in \X} P_{X|Z}(x|z)\\
    &=\sum_{z\in \Z}\max_{x\in \X}P_{X}(x)P_{Z|X}(z|x), \label{Def_PC}
    \end{aligned}
	\end{equation}
is the \emph{probability of correctly guessing} $X$ given $Z$. Note that a large value of $\L(X\to Z)$ corresponds to a small probability of error in guessing $X$ upon observing $Z$. Although we only assume that $\Z$, the alphabet of $Z$, has finite cardinality, we will show that any $\Z$ with cardinality $|\Y|+1$ is sufficient for our purpose.

\item[Continuous case:]  When $X$ and $Y$ are continuous random variables with $\X=\Y=\R$, we associate information leakage with Bob's efficiency in estimating $X$ given $Z$. Consequently, we define $\L(X\to Z)$ to be $\frac{\var(X)}{\mmse(X|Z)},$	where $\var(X)\coloneqq\E[(X-\E[X])^2]$ is the variance of $X$ and $\mmse(X|Z)\coloneqq\E[(X-\E[X|Z])^2]$ is the minimum mean squared-error of $X$ given $Z$.
\end{description}

Returning to the setup of Fig.\ \ref{fig:XYZ}, recall that in order to receive a utility, Alice wishes to disclose her non-private information $Y$ to Bob. However, $Y$ might be correlated with her private information, represented by $X$. In order to quantify the tradeoff between information display and privacy leakage, we investigate the quantity
\begin{equation}\label{Def_Optimization}
\sup_{P_{Z|Y}:X\markov Y\markov Z\atop \L(X\to Z)\leq \eps} \L(Y\to Z).
\end{equation} 
We seek to characterize this constrained optimization problem in both the discrete and the continuous cases.  
It is worth mentioning that the chosen information leakage functions are special cases of leakage functions based on a large family of general loss functions, see the discussion in \cite[Section 6.2]{Shahab_PhD_thesis} and references therein. For example, Hamming and squared-error loss functions give rise to the proposed leakage functions in the discrete and continuous cases, respectively.

In the discrete case, the optimization problem in  \eqref{Def_Optimization} gives rise to the following definition.
\begin{definition}\label{Def_RPF}
	Let $(X,Y)$ be a pair of discrete random variables with joint distribution $P_{XY}$. We define the \emph{privacy-constrained guessing} function,
	$$\mathcalboondox{h}(P_{XY}, \,\cdot\,):[\cP(X),1]\to [0,1],$$
	by
	\begin{equation}\label{Def_h_eps}
	\mathcalboondox{h}(P_{XY}, \eps)\coloneqq \sup_{P_{Z|Y}:X\markov Y\markov Z\atop \cP(X|Z)\leq \eps}\cP(Y|Z).
	\end{equation}
\end{definition}
We write $\mathcalboondox{h}(\eps)$ whenever $P_{XY}$ is clear from the context.

    Let $H_\infty(X)\coloneqq -\log\cP(X)$  be the R\'{e}nyi entropy of order $\infty$ and $H_\infty(X|Z)\coloneqq-\log\cP(X|Z)$ be its conditional version \cite{Arimoto's_Conditional_Entropy}. It follows that $\cP(X|Z)=2^{-H_\infty(X|Z)}$ and $\cP(X)=2^{-H_\infty(X)}$. Then, $\mathcalboondox{h}$ is in correspondance with the function $g^{\infty}(P_{XY},\ \cdot\ ):\R^+\to\R^+$ defined by
    \begin{equation}\label{Def:g_infty}
    g^{\infty}(P_{XY},\eps)\coloneqq\sup_{P_{Z|Y}:X\markov Y\markov Z\atop I_\infty(X;Z)\leq \eps} I_\infty(Y;Z),
    \end{equation}
    where $I_\infty(X;Z)\coloneqq H_\infty(X)-H_\infty(X|Z)$ is  Arimoto's mutual information of order $\infty$ \cite{Arimoto_Original_Paper,csiszar95,Verdu_ALPHA}. Indeed, it is straightforward to show that \eqn{eq:ghFunctionalRelation}{g^\infty(P_{XY},\eps)=\log\frac{\mathcalboondox{h}(P_{XY},2^\eps\cP(X))}{\cP(Y)}.}
    The above functional relationship allows us to translate results for $\mathcalboondox{h}$ into results for $g^\infty$. 
    Two functions closely related to $g^\infty$ are the "rate-privacy function" \cite{Asoode_submitted}, defined as in \eqref{Def:g_infty} with $I_\infty$ replaced by Shannon's mutual information, and the "privacy funnel" \cite{Funnel} which is the dual representation of the rate-privacy function. Consequently, $g^\infty$ can be thought of as the \textit{rate-privacy function  of order $\infty$}. 
    
    In the machine learning literature, the {\em information bottleneck} (IB) method  has been proposed by Tishby et al.\ \cite{Bottleneck} to quantify a fundamental relevance-compression tradeoff. Specifically, the IB method minimizes the "compression rate" $I(Y;Z)$ subject to a relevance constraint given by $I(X;Z)\geq R$ for some $R\geq 0$. Thus, the IB problem is conceptually the dual of the privacy funnel problem. Recently, the privacy funnel and the IB function were unified in a single geometric framework \cite{Asoode_Flavio} which also encompasses the privacy funnel of order $\infty$ (or equivalently $g^\infty$) and its dual which may be called the \textit{IB function of order $\infty$}. The relation between the different properties of IB function (of order $\infty$) and the privacy funnel (of order $\infty$) within this framework is the subject of ongoing research.
     
It is important to note that Arimoto's mutual information of order $\infty$ differs from other notions of information leakage, for example the ones studied in \cite{Asoode_submitted, Issa_Sibson, Calmon_fundamental-Limit, Calmon_bounds_Inference}, in the fact that $I_\infty(X;Z)=0$ is not necessarily equivalent to $X$ and $Z$ being independent. Indeed, if $X\sim \sBer(p)$ with $p\in[\frac{1}{2},1]$ and $P_{Z|X}=\mathsf{BSC}(\alpha)$  with  $\alpha\in[0,\frac{1}{2}]$ (the binary symmetric channel with crossover probability $\alpha$), then $\cP(X)=p$ and $\cP(X|Z)=p\bar{\alpha}+\max\{\bar{p}\bar{\alpha}, \alpha p\}$, where $\bar{a}=1-a$. In this case, it is straightforward to verify that $\cP(X|Z)=\cP(X)$ if and only if $p\geq\bar{\alpha}$. Therefore, for $\frac{1}{2} < \bar{\alpha} \leq p < 1$, $I_\infty(X;Z)=0$ despite the fact that $X$ and $Z$ are not independent.

For continuous real-valued random variables $X$, $Y$, and $Z$, the optimization problem  in \eqref{Def_Optimization} is hard and seems intractable in general. In order to have a tractable model, we assume that the displayed data $Z$ is a Gaussian perturbation of $Y$, i.e., $Z=Z_\gamma\coloneqq \sqrt{\gamma}Y+N_\sG$, where $\gamma\geq 0$ and $N_\sG\sim\mathcal{N}(0, 1)$ is independent of $(X,Y)$. We thus consider the following privacy-utility tradeoff, which is a dual representation of \eqref{Def_Optimization} with the privacy constraint strengthened.

\begin{definition}\label{Def_StrongENSR}
	Let $(X,Y)$ be a pair of real-valued random variables with joint density $P_{XY}$. We define the strong estimation noise-to-signal ratio $\sM(P_{XY},\ \cdot\ ):\R^+\to\R^+$ by
	$$\sM(P_{XY},\eps)\coloneqq\inf_{\gamma\geq 0} \frac{\mmse(Y|Z_\gamma)}{\var(Y)},$$
	where the infimum is taken over all $\gamma\geq0$ such that
	$$\mmse(f(X)|Z_\gamma)\geq (1-\eps)\var(f(X))$$
	whenever $f:\R\to\R$ is measurable and $\var(f(X))<\infty$.
\end{definition}

	\subsection{Main Contributions}
We begin in Section~\ref{Sec:DiscreteScalarCase} by investigating the salient properties of $\mathcalboondox{h}$. In Theorem~\ref{Thm:PiecewiseLinearity}, we show that the map $\mathcalboondox{h}(P_{XY}, \cdot)$ is piecewise linear (Fig.~\ref{fig:Typical_h}). The proof relies on a geometric reformulation of $\mathcalboondox{h}$ and a careful study of the directional derivatives in the space of stochastic matrices. As a byproduct of Theorem~\ref{Thm:PiecewiseLinearity}, a formula for the derivative of $\mathcalboondox{h}$ at $\cP(X|Y)$ is established in \eqref{eq:DirivativehMinimum}. This formula, along with the concavity of $\mathcalboondox{h}$, permits us to obtain a tight upper bound for $\mathcalboondox{h}$. In particular, when $|\X|=|\Y|=2$, this upper bound and the chord lower bound for concave functions allow us to derive a closed form expression for $\mathcalboondox{h}$ in Theorem~\ref{Theorem_Linearity_BIBO}. Moreover, it is also shown that, depending on the backward channel $P_{X|Y}$, either a Z-channel or a \emph{reverse} Z-channel (Fig.~\ref{fig:Optimal_Filter_BIBO}) achieves $\mathcalboondox{h}(P_{XY},\eps)$ for each $\eps$.

We next consider a variant quantity $\underline{\mathcalboondox{h}}$ which we define analogously to $\mathcalboondox{h}$ except that $Z$ is required to be supported over $\Y$. By definition, $\underline{\mathcalboondox{h}}$ captures the fundamental trade-off between privacy and utility in situations where enlarging the alphabet is not possible. This is particularly relevant when the displayed data might be used by parties not aware of the implemented privatization scheme. The function $\underline{\mathcalboondox{h}}$ may not be concave and consequently the techniques developed to study $\mathcalboondox{h}$ do not apply. Nevertheless, we can still study the functional properties of $\ul{\mathcalboondox{h}}$ in the {\em high utility regime} (i.e., for sufficiently large privacy threshold $\eps$), deriving a closed form expression in Theorem~\ref{Thm:GeneralizedLocalLinearity}.
	
We then specialize Theorem~\ref{Thm:GeneralizedLocalLinearity} to the binary vector case. 
 Here, $Z^n$ is revealed publicly and the goal is to guess $Y^n$ under the privacy constraint $\cP(X^n|Z^n)\leq\eps^n$. We consider two models for the pair of random vectors $(X^n, Y^n)$. In the first model (Theorem~\ref{Prop:IIDDataUnderbar}), we assume that $X^n$ consists of $n$ independent and identically distributed (i.i.d.) $\sBer(p)$ samples with $p\in [\frac{1}{2}, 1)$. In the second model (Theorem~\ref{Theorem_MarkovMemory}), we assume that $X^n$ comprises the first $n$ samples of a first-order homogeneous Markov process having a simple symmetric transition matrix. We assume that in both cases $Y_k$, $k=1,\dots,n$, is the output of a $\mathsf{BSC}(\alpha)$, $\alpha\in [0, \frac{1}{2})$, whose input is $X_k$.  We also study in detail the problem of \textit{learning from a private distribution}, which corresponds to the special case $X_1=\dots=X_n$ of the second model (Proposition~\ref{Proposition_Parametric_Dis_Privacy}).

	In the continuous case, we first show that the strong privacy constraint in Definition~\ref{Def_StrongENSR} is equivalent to a condition on the maximal correlation (also referred to as the Hirschfeld-Gebelein-R\'enyi maximal correlation~\cite{hirschfild,gebelien,Renyi-dependence-measure}) between $X$ and $Z$.  We then derive the value of $\sM$ for the Gaussian case (Example~\ref{example_Gaussian}) and obtain sharp lower and upper bounds for general $X$ and Gaussian $Y$ in Theorem~\ref{Lemma_Gaussian_Y_Arbit_X}. Finally, we establish in Lemma~\ref{Lemma_Approx_S} a tight lower bound for $\sM(P_{XY}, \eps)$ for general $(X,Y)$ in the high privacy regime (i.e., sufficiently small $\eps$).

	\subsection{Related Work}
	There have been several choices proposed for an appropriate measure $\L$ of information leakage  in the information theory and computer science literature.  Shannon's mutual information $I(X;Z)$ (or equivalently the conditional entropy $H(X|Z)$), while an intuitively reasonable choice, does not lead to an arguably "operational" privacy guarantee and thus
	may not satisfactorily serve as an appropriate information leakage function, see \cite{Min_Entropy_leakage} and \cite{efv_MI_Deficiency}. Smith~\cite{Min_Entropy_leakage} discussed that the guessing entropy \cite{Massey_Guessing_Entropy} (defined as the expected number of guesses required to guess $X$ from $Z$) cannot be adopted as an information leakage function and then proposed Arimoto's mutual information of order $\infty$  as an appropriate notion of information leakage. Operationally, $I_\infty(X;Z)\leq \eps$ for sufficiently small $\eps$ implies that it is nearly as hard for an adversary observing $Z$ to guess $X$ as it is without $Z$. Braun et al.\ \cite{Quantitative_notion_Leakage} proposed the information leakage measures
	$\cP(X|Z)-\cP(X)$ and $\max I_\infty(X;Z)$, where the maximization is taken over all priors $P_X$.  In \cite{Inf_leakage_Barthe}, Barthe and K\"{o}pf  studied the latter quantity in the context of differential privacy \cite{dwork2}. 

	Issa et al.\ \cite{Issa_Sibson} recently found an interesting operational interpretation for $I^\mathsf{s}_\infty(X;Z)$, Sibson's mutual information of order $\infty$ \cite{sibson69,Verdu_ALPHA}. Specifically, they showed that the requirement $I^\mathsf{s}_\infty(X;Z)\leq \eps$ is equivalent to $I_\infty(U;Z)\leq \eps$  for \emph{all} auxiliary random variables $U$ satisfying $U\markov X\markov Z$. Consequently, this constraint guarantees that no \emph{randomized} function of $X$ can be efficiently estimated from $Z$, which leads to a strong privacy guarantee. In contrast, the privacy requirement $I_\infty(X;Z)\leq\eps$ only guarantees to keep $X$ itself private. Nonetheless, the latter requirement comes at a lower utility cost, as illustrated by the following example. Suppose that $X$ and $Y$ are binary and that Alice wishes to reveal absolutely no information about $X$ (i.e., perfect privacy) when disclosing a sanitized version of $Y$. According to the privacy constraint dictated by Sibson's mutual information, perfect privacy leads to the independence of $X$ and $Z$. It can be shown that for binary $Y$ and $X\markov Y\markov Z$,  independence of $X$ and $Z$ implies independence of $Y$ and $Z$ (cf \cite[Corollary 11]{Asoode_submitted}). Hence, perfect privacy under Sibson's mutual information results in trivial utility. However, as shown in Theorem~\ref{Theorem_Linearity_BIBO},  a non-trivial utility might be achieved for the perfect privacy requirement $I_\infty(X;Z)=0$.
		
	There exist other estimation-theoretic measures of information leakage in the literature. For example, Makhdoumi and Fawaz \cite{Fawaz_Makhdoumi} proposed to use maximal correlation $\rho_m$ as a measure of information leakage. Later, Calmon et al.\ \cite[Theorem 9]{Calmon_PIC} showed that if $X$ and $Z$ are discrete random variables, then $\cP(f(X)|Z)-\cP(f(X))\leq \rho_m(X,Z)$ for every function $f$, thus providing an interesting operational interpretation for maximal correlation as a measure of information leakage. Similarly, we show that
	$$\mmse(f(X)|Z)\geq (1-\rho_m^2(X,Z))\var(f(X))$$
	for every measurable real-valued function $f$. This then provides an operational interpretation for the privacy guarantee $\rho_m^2(X,Z)\leq \eps$ that we study in Section~\ref{Sec:ContinuousCase} for $X$ and $Y$ absolutely continuous random variables. We refer the readers to \cite{Wagner_Survey_Privacy_Leakage} for a fairly comprehensive list of existing information leakage measures.

	The study of the privacy-utility tradeoff from an information theoretic point of view was initiated by Yamamoto \cite{yamamotoequivocationdistortion} and further extended by several authors, see, e.g., \cite{Funnel,Calmon_fundamental-Limit,Sankar_Utility_privacy_forensics,Asoodeh_Allerton, Asoode_submitted, Lalitha_Smart_grid, Calmon_MaximumLeakage, Calmon_MI_Euclidean}. In relation with the present work, as already noted the rate-privacy function $g(P_{XY},\eps)$ was introduced in \cite{Asoode_submitted} as the maximum $I(Y; Z)$ over all privacy filters $P_{Z|Y}$ such that $I(X;Z)\leq \eps$ (the privacy funnel \cite{Funnel} is a dual representation of $g(P_{XY},\eps)$). Motivated by \cite{Calmon_bounds_Inference}, a more operational privacy-rate function $\tilde{g}(P_{XY}, \eps)$ was introduced also in \cite{Asoode_submitted} by replacing the privacy guarantee $I(X;Z)\leq \eps$ with $\rho_m^2(X,Z)\leq \eps$. It was also shown that $g(P_{XY}, \eps)$ can bound $\tilde{g}(P_{XY}, \eps)$ from above. 	

\subsection{Notation}

Throughout, we use capital letters, e.g., $X$, to denote random variables and lowercase letters, e.g., $x$, to denote their realizations. We use $X^n$ to denote the vector $(X_1, X_2, \dots, X_n)$. 
We let $\mathsf{Z}(\beta)$ denote the Z-channel with crossover probability $\beta$. For any $a\in [0, 1]$, we write $\bar{a}$ for $1-a$. As already mentioned, we let $\mathsf{BSC}(\alpha)$ denote the binary symmetric channel with crossover probability $\alpha$; we also  use $X\indep Z$ to indicate the independence of random variables $X$ and $Z$ and we write $X\markov Y\markov Z$ when $X$ and $Z$ are conditionally independent given $Y$ (i.e., when $X, Y,$ and $Z$ form a Markov chain in this order). Finally, for real-valued random variables $X$ and $Z$, the conditional variance of $X$ given $Z$ is given by $\var(X|Z) \coloneqq \E[(X-\E(X|Z))^2|Z]$.

\subsection{Organization}
The rest of the paper is organized as follows. We study the discrete case in Section~\ref{Sec:DiscreteScalarCase}. In particular, we determine $\mathcalboondox{h}$ in the binary case and obtain a tight lower bound for $\mathcalboondox{h}$ for general discrete alphabets in the high utility regime by studying $\underline{\mathcalboondox{h}}$. In Section~\ref{Section:DiscreteVectorCase}, we specialize our results to study $\underline{\mathcalboondox{h}}$ when $X^n$, $Y^n$, and $Z^n$ are binary random vectors. In Section~\ref{Sec:ContinuousCase}, we focus on the continuous case and obtain sharp bounds on $\sM$. We summarize our findings in Section~\ref{Section:Conclusion}. Finally, we point out that all proofs in the paper are deferred to the appendix.

\section{Discrete Scalar Case}
\label{Sec:DiscreteScalarCase}

In this section,  we assume that $X$ and $Y$ are finite-alphabet random variables taking values in $\X=\{1, \dots, M\}$ and $\Y=\{1, \dots, N\}$, respectively. Let $P(x,y)$ with $x\in\X$ and $y\in\Y$ be their joint distribution and $p_X$ and $q_Y$ the marginal distributions of $X$ and $Y$, respectively. The goal here is to maximize  the information leakage from $Y$ to $Z$ (i.e., utility) while ensuring that the information leakage from $X$ to $Z$ (i.e., privacy leakage) remains bounded. As stated earlier, we quantify the tradeoff between privacy and utility by means of $\mathcalboondox{h}$, as defined in \eqref{Def_h_eps}.

\subsection{Geometric Properties of $\mathcalboondox{h}$}
\label{Subsection:GeometricalFormulation}
First,  note that $\cP(X|Y,Z) \geq \cP(X|Z) \geq \cP(X)$ for jointly distributed  random variables $X$, $Y$ and $Z$. Therefore, from \eqref{Def_h_eps} we have that $\cP(Y)\leq \mathcalboondox{h}(\eps)\leq 1$ and that $\mathcalboondox{h}(\eps)=1$ if and only if $\eps\geq\cP(X|Y)$. Thus it is enough to study $\mathcalboondox{h}$ on the interval $[\cP(X), \cP(X|Y)]$.

An application of the Support Lemma \cite[Lemma 15.4]{csiszarbook} shows that it is enough to consider random variables $Z$ supported on $\Z=\{1,\ldots,N+1\}$.  Thus, the privacy filter $P_{Z|Y}$ can be realized by an $N\times (N+1)$ stochastic matrix $F\in  \M_{N\times (N+1)}$, where $\M_{N\times M}$ denotes the set of all real-valued $N\times M$ matrices. Let $\F$ be the set of all such matrices $F$. Then both privacy  $\P(P, F)=\cP(X|Z)$ and utility $\U(P,F)=\cP(Y|Z)$ are functions of $F\in \F$ and can be written as 
\begin{equation}
\begin{aligned}
\P(P, F) &\coloneqq\sum_{z=1}^{N+1} \max_{1\leq x\leq M}\sum_{y=1}^N P(x,y)F(y,z),\\ 	
\U(P,F) &\coloneqq \sum_{z=1}^{N+1} \max_{1\leq y\leq N} q(y)F(y,z).
\end{aligned}
 \label{eq:DefPU}
\end{equation}
In particular, we can express $\mathcalboondox{h}(\eps)$ as
\eqn{Def_h_Eps2}{\mathcalboondox{h}(\eps)=\sup_{F\in\F, \atop \P(P, F)\leq \eps} \U(P, F).}
As before, we omit $P$ in $\P(P, F)$ and $\U(P, F)$ when there is no risk of confusion.

It is straightforward to verify that $\P$ and $\U$ are continuous and convex on $\F$. As a consequence, for every $\eps\in[\cP(X),\cP(X|Y)]$, there exists $G\in \F$ such that $\P(G)=\eps$ and $\U(G)=\mathcalboondox{h}(\eps)$. It is then direct to show that $\mathcalboondox{h}$ is continuous on $[\cP(X),\cP(X|Y)]$. 
Using a proof technique similar to \cite[Theorem 2.3]{Witsenhausen_Wyner}, it can also be shown\footnote{Note that \cite[Theorem 2.3]{Witsenhausen_Wyner} deals with a similar problem where $\cP(X|Z)$ and $\cP(Y|Z)$ are replaced by $H(X|Z)$ and $H(Y|Z)$, respectively. Just as $(H(X|Z), H(Y|Z))$, the pair $(\cP(X|Z), \cP(Y|Z))$ can be written as a convex combination of points in a two-dimensional set. In our setting, this set turns out to be $\{(\cP(X'), \cP(Y')):Y'\sim q'\in \P(\Y)~\text{and}~X'\sim p', \text{where}~p'(x)=\sum_{y}P_{X|Y}(x|y)q'(y)\}$. See \cite{Asoode_Flavio} for a generalization of this argument.} that the graph of $\mathcalboondox{h}$ is the upper boundary of the two-dimensional convex set  $\{(\P(F), \U(F)):F\in \F\}$ and thus $\mathcalboondox{h}$ is concave and strictly increasing. The following theorem, which is the most important and technically difficult result of this paper, states that $\mathcalboondox{h}$ is a piecewise linear function, as illustrated in Fig. \ref{fig:Typical_h}.

\begin{theorem}
\label{Thm:PiecewiseLinearity}
The function $\mathcalboondox{h}:[\cP(X),\cP(X|Y)]\to\R^+$ is piecewise linear, i.e., there exist $K\geq1$ and thresholds $\cP(X)=\eps_0\leq\eps_1\leq\ldots\leq\eps_K=\cP(X|Y)$ such that $\mathcalboondox{h}$ is linear on $[\eps_{i-1}, \eps_i]$ for all $i=1, \dots, K$.
\end{theorem}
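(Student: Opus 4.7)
The key observation is that both $\P(F)$ and $\U(F)$ are piecewise-linear convex functions on the stochastic polytope $\F$, since each is a column-wise sum of maxima of linear forms. My plan is to lift this piecewise linearity through the map $F\mapsto(\P(F),\U(F))$ to the upper boundary of its image, which by the geometric reformulation already quoted in the excerpt is exactly the graph of $\mathcalboondox{h}$.

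First, I would partition $\F$ into polyhedral cells on which both $\P$ and $\U$ are linear. For each pair of selection sequences $\sigma=(\bar x_1,\ldots,\bar x_{N+1},\bar y_1,\ldots,\bar y_{N+1})$ with $\bar x_z\in\{1,\ldots,M\}$ and $\bar y_z\in\{1,\ldots,N\}$, set
\[
R_\sigma=\bigcap_{z=1}^{N+1}\Bigl\{F\in\F:\ \textstyle\sum_y P(\bar x_z,y)F(y,z)\geq\sum_y P(x,y)F(y,z)\ \forall x;\ \ q(\bar y_z)F(\bar y_z,z)\geq q(y)F(y,z)\ \forall y\Bigr\}.
\]
Each $R_\sigma$ is a polytope (intersection of the bounded polytope $\F$ with finitely many closed half-spaces), the $R_\sigma$'s cover $\F$, and there are at most $(MN)^{N+1}$ of them. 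Crucially, on each cell the argmax in every column is pinned down, so $\P|_{R_\sigma}$ and $\U|_{R_\sigma}$ reduce to fixed linear forms in $F$.

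Next I would transfer the analysis to $\R^2$. The set $\Omega_\sigma:=\{(\P(F),\U(F)):F\in R_\sigma\}$ is the linear image of a polytope, hence a polygon with finitely many vertices. The achievable region $\Omega=\bigcup_\sigma \Omega_\sigma$ is therefore a finite union of polygons, and every extreme point of $\operatorname{conv}(\Omega)$ must be extreme in some $\Omega_\sigma$, so $\operatorname{conv}(\Omega)$ has only finitely many extreme points and is itself a polygon. Invoking the geometric reformulation recalled in the excerpt, the graph of $\mathcalboondox{h}$ is the upper boundary of $\operatorname{conv}(\Omega)$, so it consists of finitely many straight edges, and the breakpoints $\eps_0<\eps_1<\cdots<\eps_K$ are the first coordinates of the vertices lying on this upper boundary.

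The main obstacle is the convex-hull step: one has to justify that $\operatorname{conv}(\Omega)$ really has only finitely many extreme points, which leans crucially on the boundedness of $\F$ (otherwise the $\Omega_\sigma$'s could contribute recession directions and blow the vertex count up). A minor subtlety is that the cells $R_\sigma$ overlap along their boundaries whenever argmaxes are tied, but this is harmless because the competing linear forms for $\P$ and $\U$ coincide on any such overlap. By contrast, the directional-derivative route hinted at in the introduction would presumably give sharper information about the slopes of the linear pieces and the precise location of the thresholds $\eps_i$; the cell-decomposition argument above is cruder but delivers the piecewise-linearity statement essentially by combinatorial bookkeeping.
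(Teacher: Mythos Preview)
Your argument is correct and takes a genuinely different route from the paper's proof. You exploit directly that $\P$ and $\U$ are piecewise-linear convex on the polytope $\F$: decomposing $\F$ into the finitely many polyhedral cells $R_\sigma$ on which both are affine, you push each cell forward to a polygon $\Omega_\sigma\subset\R^2$, observe that every extreme point of $\operatorname{conv}(\Omega)$ must already be a vertex of some $\Omega_\sigma$, and conclude that $\operatorname{conv}(\Omega)$ is itself a polygon whose upper boundary is the graph of $\mathcalboondox{h}$ (using that $\Omega$ is convex, as stated just before the theorem). This is clean and short, and the points you flag as delicate (boundedness of $\F$, overlaps of cells on tie sets) are handled correctly.

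The paper instead works locally and constructively: it shows that for every $G\in\F$ there is a single $\delta>0$ such that $(\P,\U)$ is linear on $[G,G+\delta D]$ \emph{uniformly} over all admissible directions $D\in\D(G)$, then builds (via a compactness and accumulation-point argument, Lemmas~\ref{lemma:existence_Optimal_Filter_UP}--\ref{lemma:existence_Optimal_Filter_Down}) a piecewise-linear path of optimal filters $\eps\mapsto F_\eps$ and extracts a finite subcover. The payoff of this heavier machinery is the variational formula \eqref{eq:DirivativehMinimum} for $\mathcalboondox{h}'(\cP(X|Y))$ as a minimum of ratios $\beta^{(D)}/b^{(D)}$, which is then the workhorse in the binary case (Lemma~\ref{Lemma:Derivativeh} and Theorem~\ref{Theorem_Linearity_BIBO}) and in the analysis of $\underline{\mathcalboondox{h}}$ (Theorem~\ref{Thm:GeneralizedLocalLinearity}). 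Your cell-decomposition argument establishes Theorem~\ref{Thm:PiecewiseLinearity} more economically, but does not by itself deliver this slope formula; for the downstream applications one would still need some form of the paper's directional analysis, exactly as you anticipate in your closing remark.
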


\begin{figure}
\centering
\begin{tikzpicture}[thick, scale=0.8]
\begin{axis}[xmin=0,xmax=1,ymin=0.4,ymax=1, y=10.5cm,
  samples=100,  y label style={at={(axis description cs:0.01,0.4)},rotate=-90,anchor=south},
  grid=both,xlabel= $\eps$, ylabel=${\mathcalboondox{h}(\eps)}$,
  no markers]
\addplot[no marks,blue,dotted, line width=1pt,samples=600] {0.923*x+0.215};
\addplot[domain=0.2:0.35][ no marks,red,line width=1pt,samples=600] {2*x};
\addplot[domain=0.35:0.5][no marks,red,line width=1pt,samples=600] {x+0.35};
\addplot[domain=0.5:0.65][no marks,red,line width=1pt,samples=600] {0.6*x+0.55};
\addplot[domain=0.65:0.85][no marks,red,line width=1pt,samples=600] {0.3*x+0.745};
\addplot[mark=none, black, dotted, line width=1pt] coordinates {(0.85,0) (0.85,1)};
\end{axis}
\node[below,black] at (5.91, -0.39) {\small{$\cP(X|Y)$}};
\node[below,black] at (1.25, -0.39) {\small{$\mathsf{P}_\mathsf{c}(X)$}};
\end{tikzpicture}
\caption{Typical $\mathcalboondox{h}$ and its trivial lower bound, the chord connecting $(\cP(X),\mathcalboondox{h}(\cP(X)))$ and $(\cP(X|Y), 1)$.}
\label{fig:Typical_h}
\end{figure}

The proof of this theorem, which is given in Appendix \ref{Appendix:ProofPiecewiseLinearity}, relies on the geometrical formulation of $\mathcalboondox{h}$. In particular, it is proved that  $\P$ and $\U$, are piecewise linear functions on $\F$. Using this fact, we establish the existence of a piecewise linear path of \emph{optimal filters} in $\F$. The proof technique allows us to derive the slope of $\mathcalboondox{h}$ on $[\eps_{i-1}, \eps_i]$, given the family of optimal filters  at a single point $\eps\in [\eps_{i-1}, \eps_i]$. For example, since the family of optimal filters at $\eps=\cP(X|Y)$ is easily obtainable, it is possible to compute $\mathcalboondox{h}$ on the last interval. We utilize this observation in Section \ref{Subsection:BinaryCase} to prove that in the binary case $\mathcalboondox{h}$ is indeed linear. 

\subsection{Perfect Privacy}
\label{Subsection:PerfectPrivacy}
When $\eps=\cP(X)$, observing $Z$ does not increase the probability of guessing $X$. In this case we say that perfect privacy holds. An interesting problem is to characterize when non-trivial utility can be obtained under perfect privacy, that is,  to characterize when  $\mathcalboondox{h}(\cP(X))>\cP(Y)$ holds. To the best of our knowledge, a general necessary and sufficient condition for this requirement is unknown.

Note that $\mathcalboondox{h}(\cP(X))>\cP(Y)$ is equivalent to $g^{\infty}(0)>0$. As opposed to the Shannon mutual information, $I_\infty(X;Z)=0$ does not necessarily imply that $X\indep Z$. In particular, the \emph{weak independence}\footnote{$X$ is said to be weakly independent of $Z$ if the vectors $\{P_{X|Z}(\cdot|z):~z\in \Z\}$ are linearly dependent \cite{berger}.} argument from \cite[Lemma 10]{Asoode_submitted} (see also \cite{Calmon_fundamental-Limit}) cannot be applied for $g^\infty$. However, we have the following result whose proof is given in Appendix~\ref{Appenddix:Proposition}.

\begin{proposition}\label{Proposition_Independent}
\label{Prop:MutualInformationIndependenceUniform}
Let $(X,Z)$ be a pair of random variables with $X$ uniformly distributed. If $I_\infty(X;Z)=0$, then $X\indep Z$.
\end{proposition}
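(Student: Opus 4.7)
The plan is to exploit the fact that uniformity of $X$ pins down the value of $\cP(X)$ to $1/|\X|$, which turns out to be the smallest value that $\cP(X|Z=z)$ can take for any realization $z$. Hence $I_\infty(X;Z)=0$ forces a pointwise equality that collapses each conditional distribution $P_{X|Z}(\cdot|z)$ to the uniform one, yielding independence.

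More concretely, write $\X=\{1,\ldots,M\}$, so that $X$ uniform gives $\cP(X)=1/M$. The first step is the simple averaging observation that, for any fixed $z$ with $P_Z(z)>0$, the probability vector $P_{X|Z}(\cdot|z)$ has $M$ nonnegative entries summing to $1$, and therefore
\eq{
\max_{x\in\X}P_{X|Z}(x|z)\ \geq\ \frac{1}{M}.
}
Integrating against $P_Z$ gives $\cP(X|Z)\geq 1/M=\cP(X)$.

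The second step uses the hypothesis $I_\infty(X;Z)=0$, i.e., $\cP(X|Z)=\cP(X)=1/M$. Combined with the pointwise lower bound above, this forces $\max_{x}P_{X|Z}(x|z)=1/M$ for $P_Z$-almost every $z$. But equality in the averaging inequality for $M$ nonnegative numbers summing to $1$ is only possible when all entries equal $1/M$. Thus $P_{X|Z}(x|z)=1/M=P_X(x)$ for every $x$ and for $P_Z$-a.e.\ $z$, which is exactly $X\indep Z$.

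There is no real obstacle in this argument; the point to be careful about is only the trivial one of handling atoms $z$ with $P_Z(z)=0$ (which may be discarded) and of noting that equality in $\max_x p_x\geq 1/M$ over the probability simplex indeed forces uniformity, which follows because any coordinate strictly less than $1/M$ would have to be compensated by another strictly greater than $1/M$, contradicting the equality in the maximum.
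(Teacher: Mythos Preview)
Your argument is correct and in fact more direct than the paper's. You use the elementary observation that a probability vector over $M$ outcomes has maximum entry at least $1/M$, with equality precisely when the vector is uniform; averaging this over $z$ and invoking $\cP(X|Z)=\cP(X)=1/M$ forces $P_{X|Z}(\cdot|z)$ to be uniform for $P_Z$-a.e.\ $z$, hence $X\indep Z$. The paper instead proves the intermediate inequality $I_\infty(X;Z)\geq I(X;Z)$ for uniform $X$ (via Jensen's inequality on $\log$ and the bound $\max_x \log P_{X|Z}(x|z)\geq -H(X|Z=z)$), and then concludes from $I(X;Z)=0$. Your route avoids the detour through Shannon mutual information and is shorter; the paper's route yields as a byproduct the stronger comparison $I_\infty(X;Z)\geq I(X;Z)$ whenever $X$ is uniform, which may be of independent interest but is not needed for the proposition itself.
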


As a consequence of Proposition~\ref{Prop:MutualInformationIndependenceUniform}, when $X$ and $Y$ are uniformly distributed, one can apply the weak independence arguments from \cite[Lemma 10]{Asoode_submitted} to obtain the following. 
\begin{corollary}
If $X$ and $Y$ are uniformly distributed, then $g^\infty(0)>0$ if and only if $X$ is weakly independent of $Y$.
\end{corollary}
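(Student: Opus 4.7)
The plan is to reduce the statement to the Shannon-mutual-information version of this characterization, namely \cite[Lemma 10]{Asoode_submitted}, by using Proposition~\ref{Prop:MutualInformationIndependenceUniform} twice in order to turn each Arimoto-$I_\infty$ condition into an ordinary (in)dependence condition.

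First I would unfold the definition: $g^\infty(0)>0$ is equivalent to the existence of a privacy filter $P_{Z|Y}$ with $X\markov Y\markov Z$ satisfying simultaneously $I_\infty(X;Z)=0$ and $I_\infty(Y;Z)>0$. Since $X$ is uniform, Proposition~\ref{Prop:MutualInformationIndependenceUniform} applied to the pair $(X,Z)$ yields $I_\infty(X;Z)=0 \iff X\indep Z$. Since $Y$ is also uniform, applying the same proposition to the pair $(Y,Z)$ gives $I_\infty(Y;Z)=0 \iff Y\indep Z$, whence $I_\infty(Y;Z)>0 \iff Y\not\indep Z$.

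Combining these two equivalences, $g^\infty(0)>0$ reduces to the purely distributional feasibility condition that there exist $P_{Z|Y}$ with $X\markov Y\markov Z$, $X\indep Z$, and $Y\not\indep Z$. But this feasibility condition is precisely the one that characterizes $g(P_{XY},0)>0$ in the Shannon-mutual-information setup, because $I(X;Z)=0\iff X\indep Z$ and $I(Y;Z)>0\iff Y\not\indep Z$ hold unconditionally. Invoking \cite[Lemma 10]{Asoode_submitted} then identifies this feasibility with the weak independence of $X$ from $Y$, i.e., with linear dependence of the vectors $\{P_{X|Y}(\cdot\mid y):y\in\Y\}$, which completes both implications of the corollary.

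The only point that requires care is that Proposition~\ref{Prop:MutualInformationIndependenceUniform} be applicable in the second invocation, where we use it on $(Y,Z)$: this is fine because its hypothesis only asks the first coordinate to be uniform, and nothing in the Markov structure $X\markov Y\markov Z$ interferes. I do not anticipate a technical obstacle; the work is essentially bookkeeping, with Proposition~\ref{Prop:MutualInformationIndependenceUniform} doing the heavy lifting and \cite[Lemma 10]{Asoode_submitted} supplying the weak-independence equivalence.
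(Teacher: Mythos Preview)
Your proposal is correct and follows essentially the same approach as the paper, which only offers a one-sentence sketch invoking Proposition~\ref{Prop:MutualInformationIndependenceUniform} and the weak-independence argument from \cite[Lemma 10]{Asoode_submitted}. Your write-up usefully makes explicit why uniformity of \emph{both} $X$ and $Y$ is needed: uniformity of $X$ handles the privacy side ($I_\infty(X;Z)=0\iff X\indep Z$), while uniformity of $Y$ is required in the backward direction to ensure $Y\not\indep Z\Rightarrow I_\infty(Y;Z)>0$, since in general $I_\infty(Y;Z)=0$ does not force independence.
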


When $X$ is uniform, the privacy requirement $I_\infty(X;Z)\leq \eps$ guarantees that an adversary observing $Z$ cannot efficiently estimate any arbitrary \emph{randomized function} of $X$. To see this, consider a random variable $U$ satisfying $U\markov X\markov Z$. Then we have 
\begin{eqnarray*}
\cP(U|Z) &=&  \sum_{z\in \Z}\max_{u\in \U} \sum_{x\in \X}P_{UX}(u, x)P_{Z|X}(z|x)\\
 &\leq&  \sum_{z\in \Z} \left[\max_{x\in \X}P_{Z|X}(z|x)\right] \left[\max_{u\in \U} \sum_{x\in \X}P_{UX}(u, x)\right]\\
 &=&\frac{\cP(X|Z)\cP(U)}{\cP(X)},
\end{eqnarray*}
which can be rearranged to yield  $I_\infty(U;Z)\leq I_\infty(X;Z)$. It is worth mentioning that the data processing inequality for $I_\infty$  \cite{Arimoto's_Conditional_Entropy}  states that  $I_\infty(Z;U)\leq I_\infty(Z;X)$. However,  $I_\infty(Z;U)$ is not necessarily equal to $I_\infty(U; Z)$.

\subsection{Binary Case}
\label{Subsection:BinaryCase}

A channel $\mathsf{W}$ is called a binary input binary output channel with crossover probabilities $\alpha$ and $\beta$, denoted by $\mathsf{BIBO}(\alpha,\beta)$, if $\mathsf{W}(\cdot|0)=(\bar{\alpha}, \alpha)$ and $\mathsf{W}(\cdot|1)=(\beta, \bar{\beta})$. Note that if $X\sim\sBer(p)$ with $p\in[\frac{1}{2},1)$ and $P_{Y|X}=\mathsf{BIBO}(\alpha,\beta)$ with $\alpha,\beta\in[0,\frac{1}{2})$, then $\cP(X)=p$ and
$$\cP(X|Y)=\max\{\bar{\alpha}\bar{p}, \beta p\}+\bar{\beta}p.$$
In this case, if $\bar{\alpha}\bar{p}\leq\beta p$ then $\cP(X|Y)=p=\cP(X)$ and hence $\mathcalboondox{h}(p)=1$. The following theorem, whose proof is given in Appendix~\ref{Appendix_Linearity}, establishes the linear behavior of $\mathcalboondox{h}$ in the non-trivial case $\bar{\alpha}\bar{p}>\beta p$.

\begin{theorem}
\label{Theorem_Linearity_BIBO}
Let $X\sim\sBer(p)$ with $p\in[\frac{1}{2},1)$ and $P_{Y|X}=\mathsf{BIBO}(\alpha,\beta)$ with $\alpha,\beta\in[0,\frac{1}{2})$ such that $\bar{\alpha}\bar{p}>\beta p$. Then, for any $\eps\in[p,\bar{\alpha}\bar{p}+\bar{\beta}p]=[\cP(X), \cP(X|Y)]$,
\eq{\mathcalboondox{h}(\eps)=\begin{cases}1-\zeta(\eps) q, & \alpha\bar{\alpha}\bar{p}^2<\beta\bar{\beta}p^2,\\1-\tilde{\zeta}(\eps) \bar{q}, & \alpha\bar{\alpha}\bar{p}^2\geq\beta\bar{\beta}p^2,\end{cases}}
where $q\coloneqq q_Y(1)=\alpha\bar{p}+\bar{\beta}p$,
\begin{equation}\label{Def_Zeta_Zeta_tilde}
\zeta(\eps)\coloneqq\frac{\bar{\alpha}\bar{p}+\bar{\beta}p-\eps}{\bar{\beta} p-\alpha\bar{p}}, \quad \text{and} \quad \tilde{\zeta}(\eps)\coloneqq\frac{\bar{\alpha}\bar{p}+\bar{\beta}p-\eps}{\bar{\alpha}\bar{p}-\beta p}.
\end{equation}
Furthermore, the Z-channel $\mathsf{Z}(\zeta(\eps))$ and the reverse Z-channel $\tilde{\mathsf{Z}}(\tilde{\zeta}(\eps))$ achieve $\mathcalboondox{h}(\eps)$ when $\alpha\bar{\alpha}\bar{p}^2<\beta\bar{\beta}p^2$ and  $\alpha\bar{\alpha}\bar{p}^2\geq\beta\bar{\beta}p^2$, respectively. The optimal privacy filters are depicted in Fig.~\ref{fig:Optimal_Filter_BIBO}.
\end{theorem}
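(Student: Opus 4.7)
The strategy is a sandwich argument: I would match a lower bound coming from an explicit one-parameter family of privacy filters against an upper bound derived from the concavity of $\mathcalboondox{h}$ (Theorem~\ref{Thm:PiecewiseLinearity}) together with the slope formula \eqref{eq:DirivativehMinimum} at the right endpoint $\cP(X|Y)$. Throughout I would treat the first case $\alpha\bar\alpha\bar p^2<\beta\bar\beta p^2$; the second case follows by the analogous argument with the Z-channel replaced by its reverse, i.e. by swapping the roles of $Y=0$ and $Y=1$.

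For the lower bound, take the family $F_\zeta=\mathsf{Z}(\zeta)$ with $\zeta\in[0,\zeta^\ast]$, where $\zeta^\ast=(\cP(X|Y)-p)/(\bar\beta p-\alpha\bar p)$. Using $X\markov Y\markov Z$ together with \eqref{eq:DefPU}, a direct computation on the joint law induced by $X\sim\sBer(p)$ and $P_{Y|X}=\mathsf{BIBO}(\alpha,\beta)$ gives $\P(F_\zeta)=\cP(X|Y)-\zeta(\bar\beta p-\alpha\bar p)$, so that $\P$ ranges linearly over $[\cP(X),\cP(X|Y)]$, while $\U(F_\zeta)=\max\{\bar q,q\zeta\}+q\bar\zeta$. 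The second expression reduces to the affine function $1-q\zeta$ throughout $[0,\zeta^\ast]$ precisely when $q\zeta^\ast\le\bar q$, and a short algebraic manipulation shows this is equivalent to the case hypothesis $\alpha\bar\alpha\bar p^2\le\beta\bar\beta p^2$. Inverting $\eps=\P(F_\zeta)$ recovers $\zeta(\eps)$ from \eqref{Def_Zeta_Zeta_tilde}, and the filters $\mathsf{Z}(\zeta(\eps))$ witness $\mathcalboondox{h}(\eps)\ge 1-\zeta(\eps)\,q$ on $[\cP(X),\cP(X|Y)]$.

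For the matching upper bound, I would apply the derivative formula \eqref{eq:DirivativehMinimum} at $\eps=\cP(X|Y)$. At this endpoint the privacy constraint is vacuous (as $\cP(X|Z)\le\cP(X|Y)$ holds for every filter), so the identity $Z=Y$ is an optimal filter. Specialized to binary $X$ and $Y$ at this filter, \eqref{eq:DirivativehMinimum} yields the left derivative $\mathcalboondox{h}'(\cP(X|Y)^-)=q/(\bar\beta p-\alpha\bar p)$ in the first case. By the concavity of $\mathcalboondox{h}$ from Theorem~\ref{Thm:PiecewiseLinearity}, the tangent line at the right endpoint is a global upper bound, giving
\[
\mathcalboondox{h}(\eps)\le 1+(\eps-\cP(X|Y))\cdot\frac{q}{\bar\beta p-\alpha\bar p}=1-\zeta(\eps)\,q
\]
for every $\eps\in[\cP(X),\cP(X|Y)]$. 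The two bounds coincide, so $\mathcalboondox{h}(\eps)=1-\zeta(\eps)\,q$ and $\mathsf{Z}(\zeta(\eps))$ is an optimal filter. The other case is identical with $\mathsf{Z}(\zeta)$ and $q$ replaced by $\tilde{\mathsf{Z}}(\tilde\zeta)$ and $\bar q$.

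The main obstacle is the slope computation underpinning the upper bound: one has to show that the general directional-derivative expression \eqref{eq:DirivativehMinimum} simplifies, for the identity filter in the binary setting, to precisely $q/(\bar\beta p-\alpha\bar p)$ (respectively $\bar q/(\bar\alpha\bar p-\beta p)$), and that the sign of $\alpha\bar\alpha\bar p^2-\beta\bar\beta p^2$ arises naturally as the criterion selecting between the two candidate perturbation directions at the identity (towards $\mathsf{Z}(\zeta)$ versus towards $\tilde{\mathsf{Z}}(\tilde\zeta)$). Everything else — parametrizing the two channel families, verifying that $\P$ and $\U$ are affine along each family on the correct parameter range, and locating where the active $\arg\max$ in \eqref{eq:DefPU} switches — is routine bookkeeping.
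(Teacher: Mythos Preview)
Your sandwich strategy is exactly the paper's: a lower bound from the explicit Z-channel family, matched against an upper bound coming from concavity and the slope formula \eqref{eq:DirivativehMinimum}. The lower-bound half is fine and mirrors Appendix~\ref{Appendix_Linearity}.

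The gap is in how you propose to use \eqref{eq:DirivativehMinimum}. That formula is a \emph{double} minimum,
\[
\mathcalboondox{h}'(T)=\min_{F\in\F,\ \P(F)=T}\ \min_{D\in\D(F),\ b^{(D)}<0}\ \frac{\beta^{(D)}}{b^{(D)}},
\]
and what you need for the tangent-line upper bound is $\mathcalboondox{h}'(T)\ge c$. Restricting the outer minimum to a single filter (the identity) can only give $\mathcalboondox{h}'(T)\le\min_{D}\beta^{(D)}/b^{(D)}$ at that filter, which is the wrong inequality. In the paper's $2\times3$ framework the set $\{F:\P(F)=T\}$ is a one-parameter continuum (the family \eqref{FamilyFilters}), and Lemma~\ref{Lemma:Derivativeh} establishes $\beta^{(D)}/b^{(D)}\ge c$ uniformly over \emph{every} such $F$ and every admissible $D$; the identity filter alone already requires the four-case analysis (Cases I--IV) because its third output column is degenerate and both $\M_{z_3}$ and $\N_{z_3}$ contain two elements.

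So your outline is correct, but the sentence ``specialized \ldots\ at this filter, \eqref{eq:DirivativehMinimum} yields the left derivative'' hides precisely the work the paper does: you must either (i) verify the lower bound on $\beta^{(D)}/b^{(D)}$ for the full family \eqref{FamilyFilters}, or (ii) give an argument reducing the outer minimum to the identity. The paper takes route (i); you have not supplied (ii), and it is not obvious.
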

\begin{figure}[t]
\subfigure[$\alpha\bar{\alpha}\bar{p}^2<\beta\bar{\beta}p^2$.]{
\begin{tikzpicture}[scale=1.5]
\node (a) [circle] at (0,0) {$1~$};
\node (b) [circle] at (0,1.4) {$0~~$};
\node (c) [circle] at (2,0) {$~1$};
\node (d) [circle] at (2,1.4) {$~0$};
\node (ee1) [circle] at (2.1,0.7) {$Y$};
\node (ee2) [circle] at (-0.1,0.7) {$X$};
\node (ee3) [circle] at (4.35,0.7) {$Z$};
\draw[->] (0.15,0) -- (1.85,0) node[pos=.5,sloped,below] {};
\draw[->] (0.15,0) -- (1.85,1.4) node[pos=.35,sloped,below] {\footnotesize{$\beta$}};
\draw[->] (0.15,1.4) -- (1.85,0) node[pos=.35,sloped,above] {\footnotesize{$\alpha$}};
\draw[->] (0.15,1.4) -- (1.85,1.4) node[pos=.5,sloped,above] {};
\node (e) [circle] at (2.3,0) {};
\node (f) [circle] at (2.3,1.4) {};
\node (g) [circle] at (4.27,0) {$~1$};
\node (h) [circle] at (4.27,1.4) {$~0$};
\draw[->] (2.3,0) -- (4,0) node[pos=.5,sloped,below] {\footnotesize{1-$\zeta(\eps)$}};
\draw[->] (2.3,0) -- (4, 1.4) node[pos=.5,sloped,above] {\footnotesize{$\zeta(\eps)$}};
\draw[->] (2.3,1.4) -- (4,1.4) node[pos=.5,sloped,above] {};
\end{tikzpicture}}
\hfill
  \subfigure[$\alpha\bar{\alpha}\bar{p}^2\geq\beta\bar{\beta}p^2$.]{
\begin{tikzpicture}[scale=1.5]
        \node (a) [circle] at (0,0) {$1~$};
\node (b) [circle] at (0,1.4) {$0~~$};
\node (c) [circle] at (2,0) {$~1$};
\node (d) [circle] at (2,1.4) {$~0$};
\node (ee1) [circle] at (2.1,0.7) {$Y$};
\node (ee2) [circle] at (-0.1,0.7) {$X$};
\node (ee3) [circle] at (4.35,0.7) {$Z$};
\draw[->] (0.15,0) -- (1.85,0) node[pos=.5,sloped,below] {};
\draw[->] (0.15,0) -- (1.85,1.4) node[pos=.35,sloped,below] {\footnotesize{$\beta$}};
\draw[->] (0.15,1.4) -- (1.85,0) node[pos=.35,sloped,above] {\footnotesize{$\alpha$}};
\draw[->] (0.15,1.4) -- (1.85,1.4) node[pos=.5,sloped,above] {};
\node (e) [circle] at (2.3,0) {};
\node (f) [circle] at (2.3,1.4) {};
\node (g) [circle] at (4.27,0) {$~1$};
\node (h) [circle] at (4.27,1.4) {$~0$};
\draw[->] (2.3,0) -- (4,0) node[pos=.5,sloped,below] {};
\draw[->] (2.3,1.4) -- (4,1.4) node[pos=.5,sloped,above] {\footnotesize{$1-\tilde{\zeta}(\eps)$}};
\draw[->] (2.3,1.4) -- (4, 0) node[pos=.5,sloped,above]{\footnotesize{$\tilde{\zeta}(\eps)$}}; 
\end{tikzpicture}}
  \caption{Optimal privacy mechanisms in Theorem \ref{Theorem_Linearity_BIBO}.}
  \label{fig:Optimal_Filter_BIBO}
\end{figure}
Note that the condition $\alpha\bar{\alpha}\bar{p}^2< \beta\bar{\beta}p^2$ is equivalent to
$$P_{X|Y}(1|1)>P_{X|Y}(0|0),$$
and that $P_{X|Y}(0|0) > \frac{1}{2}$ whenever $\bar{\alpha} \bar{p} > \beta p$. Hence, intuitively speaking, the event $Y=1$ reveals more information about $X$ than the event $Y=0$. Consequently, an optimal privacy mechanism $\M$ needs to distort the event  $Y=1$. 

Under the hypotheses of Theorem~\ref{Theorem_Linearity_BIBO}, there exists a Z-channel for every $\eps\in[\cP(X),\cP(X|Y)]$ that achieves $\mathcalboondox{h}(\eps)$. A minor modification to the proof of Theorem~\ref{Theorem_Linearity_BIBO} shows that the Z-channel is the only binary privacy filter with this optimality property for $p\in (\frac{1}{2}, 1)$. It is worth mentioning that in the symmetric case ($\alpha=\beta$) with uniform input ($p=\frac{1}{2}$), the channel $\mathsf{BSC}(0.5\zeta(\eps))$ can be shown to also achieve $\mathcalboondox{h}(\eps)$.


It is straightforward to show that $1-\zeta(p)q > \bar{q}$ if and only if $p\in(\frac{1}{2},1)$, and $1-\zeta(p) q > q$ if and only if $\alpha\bar{\alpha}\bar{p}^2< \beta\bar{\beta}p^2$. Also, note that $\mathcalboondox{h}(p)=q$ when $\alpha\bar{\alpha}\bar{p}^2\geq \beta\bar{\beta}p^2$.
In particular, we have the following necessary and sufficient condition for the non-trivial utility under perfect privacy.
\begin{corollary}
Let $X\sim\sBer(p)$ with $p\in[\frac{1}{2},1)$ and $P_{Y|X}=\mathsf{BIBO}(\alpha,\beta)$ with $\alpha,\beta\in[0,\frac{1}{2})$ such that $\bar{\alpha}\bar{p}>\beta p$. Then $g^\infty(0)>0$ if and only if $\alpha\bar{\alpha}\bar{p}^2<\beta\bar{\beta}p^2$ and $p\in(\frac{1}{2},1)$.
\end{corollary}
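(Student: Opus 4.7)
The plan is to reduce the claim to a comparison of $\mathcalboondox{h}(\cP(X))$ with $\cP(Y)$ and then apply Theorem~\ref{Theorem_Linearity_BIBO} at the left endpoint $\eps=p$. By \eqref{eq:ghFunctionalRelation}, $g^\infty(0)=\log(\mathcalboondox{h}(p)/\cP(Y))$, so the corollary is equivalent to the statement that $\mathcalboondox{h}(p)>\max\{q,\bar q\}$ if and only if $\alpha\bar\alpha\bar p^2<\beta\bar\beta p^2$ and $p\in(\tfrac12,1)$.

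Substituting $\eps=p$ into \eqref{Def_Zeta_Zeta_tilde}, the common numerator $\bar\alpha\bar p+\bar\beta p-p$ collapses to $\bar\alpha\bar p-\beta p$, giving in particular $\tilde\zeta(p)=1$. Hence in the first case $\alpha\bar\alpha\bar p^2\geq\beta\bar\beta p^2$, Theorem~\ref{Theorem_Linearity_BIBO} yields $\mathcalboondox{h}(p)=1-\tilde\zeta(p)\bar q=q$. Combined with the trivial bound $\mathcalboondox{h}(p)\geq\cP(Y)=\max\{q,\bar q\}\geq q$, we conclude that equality holds throughout, so $\mathcalboondox{h}(p)=\cP(Y)$ and $g^\infty(0)=0$; this rules out this case for the claimed characterization and, as a byproduct, shows that it automatically forces $q\geq\bar q$.

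In the remaining case $\alpha\bar\alpha\bar p^2<\beta\bar\beta p^2$, Theorem~\ref{Theorem_Linearity_BIBO} gives $\mathcalboondox{h}(p)=1-\zeta(p)q$ with $\zeta(p)=(\bar\alpha\bar p-\beta p)/(\bar\beta p-\alpha\bar p)$, and I check $\mathcalboondox{h}(p)>q$ and $\mathcalboondox{h}(p)>\bar q$ separately. For the former, $\mathcalboondox{h}(p)-q=\bar q-\zeta(p)q$, and multiplying by the positive denominator $\bar\beta p-\alpha\bar p$ turns $\zeta(p)q<\bar q$ into $(\bar\alpha\bar p-\beta p)(\alpha\bar p+\bar\beta p)<(\bar\alpha\bar p+\beta p)(\bar\beta p-\alpha\bar p)$. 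Expanding both products, the cross-terms $\bar\alpha\bar\beta\bar p\,p$ and $\alpha\beta\bar p\,p$ cancel and the difference reduces to $2(\alpha\bar\alpha\bar p^2-\beta\bar\beta p^2)$, so the inequality is equivalent to the case hypothesis. For the latter, $\mathcalboondox{h}(p)-\bar q=q(1-\zeta(p))$, so $\mathcalboondox{h}(p)>\bar q$ iff $\zeta(p)<1$, which simplifies to $(\bar\alpha+\alpha)\bar p<(\bar\beta+\beta)p$, i.e., $\bar p<p$, i.e., $p>\tfrac12$.

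Putting the pieces together, $\mathcalboondox{h}(p)>\cP(Y)$ holds precisely when $\alpha\bar\alpha\bar p^2<\beta\bar\beta p^2$ and $p>\tfrac12$, which is the desired equivalence. There is no real obstacle beyond bookkeeping; the only nontrivial step is the algebraic identity showing that the two expanded products differ by $2(\alpha\bar\alpha\bar p^2-\beta\bar\beta p^2)$, after which everything follows from Theorem~\ref{Theorem_Linearity_BIBO} and the universal bound $\mathcalboondox{h}\geq\cP(Y)$. One should also silently verify that $\bar\beta p-\alpha\bar p>0$ under the standing assumptions $\bar\alpha\bar p>\beta p$ and $p\geq\tfrac12$, so that the sign of the cross-multiplication step is legitimate.
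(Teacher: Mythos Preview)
Your proof is correct and follows essentially the same approach as the paper. The paper records, just before the corollary, the three facts you derive in detail: that $\mathcalboondox{h}(p)=q$ when $\alpha\bar\alpha\bar p^2\geq\beta\bar\beta p^2$, that $1-\zeta(p)q>\bar q$ is equivalent to $p\in(\tfrac12,1)$, and that $1-\zeta(p)q>q$ is equivalent to $\alpha\bar\alpha\bar p^2<\beta\bar\beta p^2$; you simply supply the explicit algebra behind these claims and combine them via \eqref{eq:ghFunctionalRelation}.
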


\subsection{A variant of $\mathcalboondox{h}$}
	Thus far, we studied the privacy-constrained guessing probability $\mathcalboondox{h}(\eps)$ where no constraint on the cardinality of the alphabet of the displayed  data $Z$ is imposed (other than being finite). Nevertheless, we know that it is sufficient to consider $\Z$ with cardinality $|\Y|+1$. However, as mentioned in the introduction, it may be desirable to generate the displayed data on the same alphabet as that of $Y$. In this section, we consider the case where $\Z$ is constrained to satisfy $|\Z|=|\Y|$, which leads to the following variant of  $\mathcalboondox{h}$, denoted by $\underline{\mathcalboondox{h}}$.   

\begin{definition}\label{eq:DefPsi}
	For arbitrary discrete random variables $X$ and $Y$ supported on $\X$ and $\Y$ respectively, we define the function $\underline{\mathcalboondox{h}}:[\cP(X),\cP(X|Y)]\to\R^+$ by
	$$\underline{\mathcalboondox{h}}(\eps) \coloneqq \sup_{P_{Z|Y} \in \ul{\mathfrak{D}}_\eps} \cP(Y|Z),$$
	where $$\ul{\mathfrak{D}}_\eps \coloneqq \left\{P_{Z|Y} : \Z=\Y,X\markov Y\markov Z, \cP(X|Z)\leq \eps\right\}.$$
\end{definition}
Unlike $\mathcalboondox{h}$, the definition of $\underline{\mathcalboondox{h}}$ requires $\Z=\Y$. This difference makes the tools from \cite{Witsenhausen_Wyner} unavailable. In particular, the concavity and hence the piecewise linearity of $\mathcalboondox{h}$ do not carry over to $\underline{\mathcalboondox{h}}$. However, we have the following theorem for $\underline{\mathcalboondox{h}}$ whose proof is given in Appendix~\ref{Appendix:ProofGeneralizedLocalLinearity}.  For notational convenience, we adopt the convention $\frac{x}{0} = +\infty$ for $x>0$. For $(y_0,z_0)\in\Y\times\Y$, a channel $\mathsf{W}$ is said to be an $N$-ary Z-channel  with crossover probability $\gamma$ from $y_0$ to $z_0$, denoted by $\mathsf{Z}^{y_0,z_0}(\gamma)$, if the input and output alphabets are $\Y$ and $\mathsf{W}(y|y)=1$ for $y\neq y_0$, $\mathsf{W}(z_0|y_0)=\gamma$, and $\mathsf{W}(y_0|y_0)=\bar{\gamma}$. We also let $\underline{\mathcalboondox{h}}'(\cP(X|Y))$ denote the left derivative of $\underline{\mathcalboondox{h}}(\cdot)$ evaluated at $\eps=\cP(X|Y)$.
\begin{theorem}
	\label{Thm:GeneralizedLocalLinearity}
	Let $X$ and $Y$ be discrete random variables. If $\cP(X)<\cP(X|Y)$, then there exists $\eps_\mathsf{L} \in (\cP(X),\cP(X|Y))$ such that $\underline{\mathcalboondox{h}}$ is linear on $[\eps_\mathsf{L},\cP(X|Y)]$. In particular, for every $\eps\in[\eps_\mathsf{L},\cP(X|Y)]$,
	\eqn{eq:PsiExtremePoint}{\underline{\mathcalboondox{h}}(\eps) = 1 - (\cP(X|Y)-\eps) \underline{\mathcalboondox{h}}'(\cP(X|Y)).}
	Moreover, if $q_Y(y)>0$ for all $y\in\Y$ and for each $y\in\Y$ there exists (a unique) $x_y\in\X$ such that $P_{X|Y}(x_y|y) > P_{X|Y}(x|y)$ for all $x\neq x_y$, then
	\eqn{eq:DerivativePsi}{\underline{\mathcalboondox{h}}'(\cP(X|Y)) = \min_{(y,z)\in\Y\times\Y} \frac{q_Y(y)}{P_{XY}(x_y,y) - P_{XY}(x_z,y)}.}
	In addition, if $(y_0,z_0)\in\Y\times\Y$ attains the minimum in \eqref{eq:DerivativePsi}, then there exists $\eps_\mathsf{L}^{y_0,z_0}<\cP(X|Y)$ such that $\mathsf{Z}^{y_0,z_0}(\zeta^{y_0,z_0}(\eps))$ achieves $\underline{\mathcalboondox{h}}(\eps)$ for every $\eps\in[\eps_\mathsf{L}^{y_0,z_0},\cP(X|Y)]$, where
	\eq{\zeta^{y_0,z_0}(\eps) = \frac{\cP(X|Y) - \eps}{P_{XY}(x_{y_0},y_0) - P_{XY}(x_{z_0},y_0)}.}
\end{theorem}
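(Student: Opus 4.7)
The plan is to bracket $\underline{\mathcalboondox{h}}(\eps)$ near $\eps=\cP(X|Y)$ by matching lower and upper bounds, both of which arise from a one-dimensional linear program among channels that are small perturbations of the identity. For the \emph{lower bound}, I would plug the candidate $\mathsf{Z}^{y_0,z_0}(\gamma)$ into the definitions of $\cP(X|Z)$ and $\cP(Y|Z)$. Under the uniqueness hypothesis on $x_y$, and for $\gamma$ below an explicit threshold depending only on $P_{XY}$ and $q_Y$, the MAP indices $\argmax_y q_Y(y)F(z|y)$ and $\argmax_x P_{XZ}(x,z)$ coincide with their identity-channel values $z$ and $x_z$. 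A short computation then yields
\eq{\cP(X|Z) = \cP(X|Y) - \gamma\bigl[P_{XY}(x_{y_0},y_0) - P_{XY}(x_{z_0},y_0)\bigr], \quad \cP(Y|Z) = 1 - \gamma q_Y(y_0).}
Setting $\gamma = \zeta^{y_0,z_0}(\eps)$ forces $\cP(X|Z)=\eps$; taking the best $(y_0,z_0)$ yields $\underline{\mathcalboondox{h}}(\eps)\geq 1 - (\cP(X|Y)-\eps)D$, where $D$ is the right-hand side of~\eqref{eq:DerivativePsi}, on a left-neighbourhood of $\cP(X|Y)$.

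For the \emph{upper bound}, I would first argue that any optimiser $F^*(\eps)$ with $\eps$ close to $\cP(X|Y)$ is a small perturbation of the identity, after a relabelling of $Z$. The lower bound already implies $\underline{\mathcalboondox{h}}(\eps)\to 1$ as $\eps\to\cP(X|Y)^-$; by compactness of the simplex of $|\Y|\times|\Y|$ stochastic matrices and continuity of $\cP(X|Z)$, $\cP(Y|Z)$ in $F$, any accumulation point $F^\infty$ of optimisers satisfies $\cP(Y|F^\infty)=1$. Since $q_Y(y)>0$ for every $y$, this identity forces $F^\infty$ to be a permutation matrix. Both $\cP(X|\cdot)$ and $\cP(Y|\cdot)$ are invariant under permuting the $Z$-labels, so without loss of generality $F^\infty=I$ and $F^*(\eps)=I+\eta\Delta$ with $\eta\downarrow 0$, $\Delta(z|y)\geq 0$ for $z\neq y$ and $\sum_z\Delta(z|y)=0$. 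Uniqueness of $x_z$ keeps both argmaxes unchanged for $\eta$ small, and a first-order expansion gives
\al{1-\cP(Y|F) &= \eta\sum_{s\neq t}q_Y(s)\Delta(t|s), \\ \cP(X|Y)-\cP(X|F) &= \eta\sum_{s\neq t}\bigl[P_{XY}(x_s,s)-P_{XY}(x_t,s)\bigr]\Delta(t|s).}

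Introducing the non-negative variables $a_{s,t}:=\eta\Delta(t|s)$, the problem reduces to minimising $\sum_{s\neq t}q_Y(s)\,a_{s,t}$ subject to $\sum_{s\neq t}[P_{XY}(x_s,s)-P_{XY}(x_t,s)]\,a_{s,t}\geq\cP(X|Y)-\eps$ and $a\geq 0$. As a standard LP with a single linear constraint, its optimum is attained at a vertex placing all mass on one pair $(y_0,z_0)$ minimising the ratio $q_Y(s)/[P_{XY}(x_s,s)-P_{XY}(x_t,s)]$; this vertex corresponds exactly to $F=\mathsf{Z}^{y_0,z_0}(\zeta^{y_0,z_0}(\eps))$, and its value matches the lower bound. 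This simultaneously establishes \eqref{eq:PsiExtremePoint}, \eqref{eq:DerivativePsi}, and the optimality of the claimed N-ary Z-channel on $[\eps_\mathsf{L},\cP(X|Y)]$, with $\eps_\mathsf{L}$ being the smallest $\eps$ at which the first-order analysis is still valid (equivalently, the argmax structure is preserved and $\zeta^{y_0,z_0}(\eps)\leq 1$).

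The hardest step is the \emph{uniform} reduction to a neighbourhood of the identity: extracting a convergent subsequence of optimisers gives the formula only along that subsequence, whereas one needs it for every $\eps$ in a genuine interval $[\eps_\mathsf{L},\cP(X|Y)]$. The key observation is that, by the symmetry of the LP under relabelling $Z$, the value $D$ is the same whether one perturbs around the identity or around any other permutation, so the upper bound holds uniformly in $\eps$. Additional care is required when several pairs $(y_0,z_0)$ attain the minimum defining $D$, in which case convex combinations of the associated N-ary Z-channels are also optimal. Finally, the linearity claim \eqref{eq:PsiExtremePoint} holds more generally than \eqref{eq:DerivativePsi}: removing the uniqueness of $x_y$ demands a slightly more involved LP with tie-breaking constraints, but the overall skeleton---lower bound from a one-parameter family of Z-channels, upper bound via perturbative LP duality---remains unchanged.
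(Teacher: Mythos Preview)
Your approach is essentially the paper's: both prove the explicit formula by sandwiching $\underline{\mathcalboondox{h}}$ near $\cP(X|Y)$ between a Z-channel lower bound and an upper bound obtained by reducing optimal filters to perturbations of the identity. Your LP formulation with a single constraint is equivalent to the paper's elementary inequality $\frac{\sum a_k x_k}{\sum b_k x_k}\geq\min_k \frac{a_k}{b_k}$. Your handling of the uniformity issue---arguing that optimisers must eventually lie in a $\delta$-ball around some permutation, then using permutation-invariance of $\cP(X|\cdot),\cP(Y|\cdot)$ to reduce to the identity---is exactly the content of the paper's Lemma~\ref{Lemma_Ball}, though you state it via accumulation points rather than the cleaner contradiction the paper uses (if optimisers stayed outside the finite union of balls for a sequence $\eps_n\to\cP(X|Y)$, compactness of the complement would force a limit with utility $1$ that is a permutation, a contradiction).

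The one genuine gap is the first claim, linearity of $\underline{\mathcalboondox{h}}$ on a left-neighbourhood of $\cP(X|Y)$ \emph{without} the uniqueness hypothesis on $x_y$. Your proposal to handle this by ``a slightly more involved LP with tie-breaking constraints'' does not work directly: when the maximiser $x_y$ is not unique, the directional derivative of $\ul{\P}$ at ${\rm I}_N$ in direction $D$ involves $\max_{x\in\M_z}(PD)(x,z)$, which is piecewise linear but not linear in $D$, so the reduction to a one-constraint LP breaks down. The paper takes a different route: it first establishes a \emph{uniform} $\delta>0$ such that $(\ul{\P},\ul{\U})$ is linear on $[{\rm I}_N,{\rm I}_N+\delta D]$ simultaneously for all directions $D$ (compactness of the direction set, Lemma~\ref{Lemma:LocalLinearityPsi}), writes each optimal filter as $G_\eps={\rm I}_N+t_\eps D_\eps$, shows that the slope $\beta^{(D_\eps)}/b^{(D_\eps)}$ is monotone non-increasing in $\eps$, extracts a limit direction $D^*$, and checks $b^{(D^*)}<0$ (using that filters with $\ul{\U}=1$ are exactly the finitely many permutations). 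This yields linearity and the variational formula $\underline{\mathcalboondox{h}}'(\cP(X|Y))=\min_D \beta^{(D)}/b^{(D)}$ without ever needing a unique MAP index. Your continuity argument also leans on the Z-channel lower bound and hence on uniqueness; the paper proves continuity separately (Lemma~\ref{Lemma:Contin}) via a specific non-Z direction that works for any $P_{XY}$.
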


It is clear, from Definition~\ref{eq:DefPsi}, that $\underline{\mathcalboondox{h}}(\eps) \leq \mathcalboondox{h}(\eps)$ for all $\eps\in[\cP(X),\cP(X|Y)]$. Hence, Theorem~\ref{Thm:GeneralizedLocalLinearity} provides a lower bound for $\mathcalboondox{h}$ in the high utility regime.

Although \eqref{eq:PsiExtremePoint} establishes the linear behavior of $\underline{\mathcalboondox{h}}$ over  $[\eps_{\mathsf{L}}, \cP(X|Y)]$ for general $X$ and $Y$, a priori it is not clear how to obtain $\underline{\mathcalboondox{h}}'(\cP(X|Y))$. Under the assumptions of Theorem~\ref{Thm:GeneralizedLocalLinearity}, \eqref{eq:DerivativePsi} expresses $\underline{\mathcalboondox{h}}'(\cP(X|Y))$ as the minimum of \emph{finitely} many numbers, and a suitable $\mathsf{Z}$-channel achieves $\underline{\mathcalboondox{h}}$ for $\eps$ close to $\cP(X|Y)$. As we will see in the following section, these assumptions are rather general and allow us to derive a closed form expression for $\underline{\mathcalboondox{h}}$ in the high utility regime for some pairs of binary random \emph{vectors} $(X^n, Y^n)$ with  $X^n, Y^n\in\{0, 1\}^n$.

\section{Binary Vector Case}
\label{Section:DiscreteVectorCase}
We next study privacy aware guessing for a pair of binary random vectors $(X^n, Y^n)$. First note that since having more side information only improves the probability of correct guessing,  one can write
\begin{equation*}
\cP(X^n)\leq\cP(X^n|Z^n)\leq \cP(X^n|Y^n,Z^n)=\cP(X^n|Y^n)
\end{equation*}
for $X^n\markov Y^n\markov Z^n$ and thus, we can restrict $\eps^n$ in the following definition to $[\cP(X^n), \cP(X^n|Y^n))]$.
\begin{definition}
For a given pair of binary random vectors $(X^n, Y^n)$, let $\underline{\mathcalboondox{h}}_n:[\cP^{1/n}(X^n),\cP^{1/n}(X^n|Y^n)]\to \R^+$ be the function defined by
\begin{equation}
\label{Def_h_n}
\underline{\mathcalboondox{h}}_n(\eps) \coloneqq \sup_{P_{Z^n|Y^n}\in\ul{\mathfrak{D}}_{n,\eps}} \cP^{1/n}(Y^n | Z^n),
\end{equation}
where \ndsty{\ul{\mathfrak{D}}_{n,\eps} \coloneqq \{P_{Z^n|Y^n} : \Z^n=\{0,1\}^n,X^n\markov Y^n\markov Z^n, \cP^{1/n}(X^n|Z^n) \leq \eps\}}.
\end{definition}
Note that this definition does not make any assumption about the privacy filters $P_{Z^n|Y^n}$ apart from $\Z^n=\{0,1\}^n$. Nonetheless, this restriction makes the functional properties of $\underline{\mathcalboondox{h}}_n$ different from those of $\mathcalboondox{h}$. 

We study $\underline{\mathcalboondox{h}}_n$ in the following two scenarios for $(X^n, Y^n)$:
\begin{itemize}
	\item[($\textnormal{a}_1$)] $\repdc{X}{1}{n}$ are i.i.d. samples drawn from $\sBer(p)$,
	\item[($\textnormal{a}_2$)] $X_1\sim \sBer(p)$ and $X_k=X_{k-1}\oplus U_k$ for $k=2,\dots,n$, where $\repdc{U}{2}{n}$ are i.i.d. samples drawn from $\sBer(r)$  and  independent of $X_1$, and $\oplus$ denotes mod 2 addition,
\end{itemize}
and in both cases, we assume that
\begin{itemize}
	\item[($\textnormal{b}$)] $Y_k=X_k\oplus V_k$ for $k=1,\dots,n$, where $\repdc{V}{1}{n}$ are i.i.d. samples drawn from  $\sBer(\alpha)$ and  independent of $X^n$. 
\end{itemize}
We study $\underline{\mathcalboondox{h}}_n$ for $(X^n, Y^n)$ satisfying the assumptions ($\textnormal{a}_1$) and ($\textnormal{b}$) in Section~\ref{Subsection:IIDPrivateData} and for $(X^n, Y^n)$ satisfying the assumptions ($\textnormal{a}_2$) and ($\textnormal{b}$) in Section~\ref{Section:Memory}. In the latter section, we also study $\underline{\mathcalboondox{h}}_n$ in the special case $r=0$ in more detail.
\subsection{I.I.D.\  Case}
\label{Subsection:IIDPrivateData}
Here, we assume that $(X^n, Y^n)$ satisfy ($\textnormal{a}_1$) and ($\textnormal{b}$) and  apply Theorem~\ref{Thm:GeneralizedLocalLinearity} to derive a closed form expression for $\underline{\mathcalboondox{h}}_n(\eps)$ for $\eps$ close to $\cP(X^n|Y^n)$. Additionally, we determine an optimal filter in the same regime. 

We begin by identifying  the domain $[\cP(X^n), \cP(X^n|Y^n)]$ of $\underline{\mathcalboondox{h}}_n$  in the following lemma, whose proof follows directly from the definition of $\cP$.
\begin{lemma}\label{Lemma_P_C_IID}
	Assume that $(X_1,Z_1),\ldots,(X_n,Z_n)$ are independent pairs of random variables. Then
	\eq{\cP(X^n|Z^n) = \prod_{k=1}^n \cP(X_k|Z_k).}
\end{lemma}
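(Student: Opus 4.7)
The plan is to reduce the statement to two elementary facts: (i) under pairwise independence, both $P_{Z^n}$ and the conditional $P_{X^n|Z^n}$ factor across coordinates, and (ii) the maximum over a product set of a non-negative coordinate-separable product function is itself the product of the coordinate-wise maxima. Starting from the definition
\[
\cP(X^n|Z^n) = \sum_{z^n} P_{Z^n}(z^n) \max_{x^n \in \X^n} P_{X^n|Z^n}(x^n|z^n),
\]
I would first invoke independence of the pairs $(X_1,Z_1),\ldots,(X_n,Z_n)$ to write $P_{Z^n}(z^n) = \prod_{k=1}^n P_{Z_k}(z_k)$ and $P_{X^n|Z^n}(x^n|z^n) = \prod_{k=1}^n P_{X_k|Z_k}(x_k|z_k)$ for every $z^n$ in the support.

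The key step is then to observe that, since each factor $P_{X_k|Z_k}(x_k|z_k) \geq 0$ and depends only on the coordinate $x_k$, the maximization separates:
\[
\max_{x^n \in \X^n} \prod_{k=1}^n P_{X_k|Z_k}(x_k|z_k) = \prod_{k=1}^n \max_{x_k \in \X_k} P_{X_k|Z_k}(x_k|z_k).
\]
This is a standard fact about products of non-negative separable functions on Cartesian-product domains; one direction is immediate (each factor is bounded by its maximum), and the reverse follows by taking $x_k^\star$ attaining each coordinate maximum and forming $x^{n,\star}=(x_1^\star,\ldots,x_n^\star)$, which achieves the product of maxima.

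Combining these, I would substitute back and use distributivity of the sum over the product (i.e., Fubini for finite sums) to get
\[
\cP(X^n|Z^n) = \sum_{z^n} \prod_{k=1}^n P_{Z_k}(z_k) \max_{x_k} P_{X_k|Z_k}(x_k|z_k) = \prod_{k=1}^n \Bigl( \sum_{z_k} P_{Z_k}(z_k) \max_{x_k} P_{X_k|Z_k}(x_k|z_k) \Bigr),
\]
and each factor on the right is exactly $\cP(X_k|Z_k)$ by definition \eqref{Def_PC}. There is essentially no obstacle here: the only subtlety worth spelling out carefully is the interchange of $\max$ and $\prod$, which relies on non-negativity of the conditional probabilities and coordinate-wise separability; everything else is algebraic manipulation. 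The entire argument should take only a few lines.
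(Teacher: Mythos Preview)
Your proposal is correct and is exactly the direct computation the paper has in mind: the paper does not write out a proof but simply states that the lemma ``follows directly from the definition of $\cP$,'' and your factorization of $P_{Z^n}$ and $P_{X^n|Z^n}$ followed by the separation of the maximum and the sum is precisely that direct verification.
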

Thus, according to this lemma, if $p\in[\frac{1}{2},1)$ and $\alpha\in[0,\bar{p})$ then $\cP(X^n)=p^n$ and $\cP(X^n|Y^n)=\bar{\alpha}^n$.  The following theorem, whose proof is given in Appendix~\ref{Appendix:ProofPropIIDDataUnderbar}, is a straightforward consequence of Theorem~\ref{Thm:GeneralizedLocalLinearity}. 
A channel $\mathsf{W}$ is said to be a $2^n$-ary Z-channel with crossover probability $\gamma$, denoted by $\mathsf{Z}_n(\gamma)$, if its input and output alphabets are $\{0, 1\}^n$ and it is $\mathsf{Z}^{\bf 1, \bf 0}(\gamma)$, where ${\bf 0}=(0, 0, \dots, 0)$ and ${\bf 1}=(1, 1, \dots, 1)$.
\begin{figure}[t]
	\centering
	\begin{tikzpicture}[scale=0.9]
	\node (a) [circle] at (0,0) {$11~~$};
	\node (b) [circle] at (0,1.8) {$01~~$};
	\node (c) [circle] at (2,0) {$~~~~~~~11~~~~~~$};
	\node (d) [circle] at (2,1.8) {$~~~~~~~01~~~~~~$};
	\node (a1) [circle] at (0,0.9) {$10~~$};
	\node (b1) [circle] at (2,0.9) {$~~~~~~~10~~~~~~$};
	\node (c1) [circle] at (2,2.7) {$~~~~~~~00~~~~~~$};
	\node (d1) [circle] at (0,2.7) {$00~~$};
	\draw[->] (0.15,0) -- (1.83,0) node[pos=.5,sloped,below] {};
	\draw[->] (0.15,0) -- (1.83,0.8) node[pos=.5,sloped,below] {};
	\draw[->] (0.15,0) -- (1.83,1.7) node[pos=.5,sloped,below] {};
	\draw[->] (0.15,0) -- (1.83,2.6) node[pos=.5,sloped,below] {};
	\draw[->] (0.15,1.7) -- (1.83,0) node[pos=.5,sloped,below] {};
	\draw[->] (0.15,0.8) -- (1.83,0.8) node[pos=.5,sloped,below] {};
	\draw[->] (0.15,1.7) -- (1.83,1.7) node[pos=.5,sloped,below] {};
	\draw[->] (0.15,2.6) -- (1.83,2.6) node[pos=.5,sloped,below] {};
	\draw[->] (0.15,2.6) -- (1.83,0) node[pos=.5,sloped,below] {};
	\draw[->] (0.15,1.7) -- (1.83,0.8) node[pos=.5,sloped,below] {};
	\draw[->] (0.15,0.8) -- (1.83,1.7) node[pos=.5,sloped,below] {};
	\draw[->] (0.15,0.8) -- (1.83,2.6) node[pos=.5,sloped,below] {};
	\draw[->] (0.15,0.8) -- (1.83,0) node[pos=.5,sloped,below] {};
	\draw[->] (0.15,2.6) -- (1.83,0.8) node[pos=.5,sloped,below] {};
	\draw[->] (0.15,2.6) -- (1.83,1.7) node[pos=.5,sloped,below] {};
	\draw[->] (0.15,1.7) -- (1.83,2.6) node[pos=.5,sloped,below] {};
	\node (g) [circle] at (4.47,0) {$~~~11~~~$};
	\node (h) [circle] at (4.47,0.9) {$~~~10~~$};
	\node (g1) [circle] at (4.47,1.8) {$~~~01~~$};
	\node (h1) [circle] at (4.47,2.6) {$~~~00~~$};
	\draw[->] (2.35,0) -- (4.2,2.6) node[pos=.65,sloped,below] {};
	\draw[->] (2.35,0) -- (4.2, 0) node[pos=.55,sloped,below] {\footnotesize{1-$\zeta_2(\eps)$}};
	\draw[->] (2.35,0.8) -- (4.2,0.8) node[pos=.5,sloped,above] {};
	\draw[->] (2.35,1.7) -- (4.2,1.7) node[pos=.5,sloped,above] {};
	\draw[->] (2.35,2.6) -- (4.2,2.6) node[pos=.5,sloped,above] {};
	\end{tikzpicture}
	\caption{The optimal mechanism for $\underline{\mathcalboondox{h}}_2(\eps)$ for $\eps\in[\eps_\mathsf{L}, \bar{\alpha}]$.}
	\label{fig:Optimal_Vector}
\end{figure}
\begin{theorem}
\label{Prop:IIDDataUnderbar}
Assume that $(X^n, Y^n)$ satisfy ($\textnormal{a}_1$) and ($\textnormal{b}$)  with $p\in[\frac{1}{2},1)$ and $\alpha\in[0,\frac{1}{2})$ such that $\bar{\alpha}>p$. Then there exists $\eps_\mathsf{L}<\bar{\alpha}$ such that, for all $\eps\in[\eps_\mathsf{L},\bar{\alpha}]$,
\eq{\underline{\mathcalboondox{h}}_n^n(\eps) = 1-\zeta_n(\eps) q^n}
where $q\coloneqq\alpha\bar{p}+\bar{\alpha}p$ and
\eq{\zeta_n(\eps)\coloneqq\frac{\bar{\alpha}^n-\eps^n}{(\bar{\alpha}p)^n-(\alpha\bar{p})^n}.}
Moreover, the $2^n$-ary Z-channel $\mathsf{Z}_n(\zeta_n(\eps))$ achieves $\underline{\mathcalboondox{h}}_n(\eps)$ in this interval.
\end{theorem}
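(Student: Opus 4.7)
The plan is to deduce the result by applying Theorem~\ref{Thm:GeneralizedLocalLinearity} to the pair $(X^n,Y^n)$ viewed as a pair of discrete random variables on the product alphabet $\{0,1\}^n$. Since the privacy constraint $\cP^{1/n}(X^n|Z^n)\le\eps$ is the same as $\cP(X^n|Z^n)\le\eps^n$, we have $\underline{\mathcalboondox{h}}_n^n(\eps)=\underline{\mathcalboondox{h}}(\eps^n)$, where the right-hand side is the scalar $\underline{\mathcalboondox{h}}$ of Definition~\ref{eq:DefPsi} applied to $(X^n,Y^n)$ (with $\Z^n=\Y^n=\{0,1\}^n$). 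The whole task therefore reduces to computing this quantity and identifying an optimal filter near $\eps^n=\bar{\alpha}^n$.

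First I would verify the hypotheses of Theorem~\ref{Thm:GeneralizedLocalLinearity}. Lemma~\ref{Lemma_P_C_IID} gives $\cP(X^n)=p^n$ and $\cP(X^n|Y^n)=\bar{\alpha}^n$, so $\cP(X^n)<\cP(X^n|Y^n)$ follows directly from the hypothesis $\bar{\alpha}>p$. The marginal $q_{Y^n}(y^n)=q^{|y^n|_H}\bar{q}^{n-|y^n|_H}$ is strictly positive because both $q$ and $\bar{q}$ are positive under the stated parameter range. For the unique-maximizer condition, the product form of $P_{X^n|Y^n}$ reduces the check to a single coordinate: when $y_k=0$ the unique maximizer is $x_k=0$ since $\bar{p}\bar{\alpha}>p\alpha$ (equivalent to $\bar{\alpha}>p$), and when $y_k=1$ the unique maximizer is $x_k=1$ since $p\bar{\alpha}>\bar{p}\alpha$ (equivalent to $p>\alpha$, which holds because $p\ge 1/2>\alpha$). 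Consequently $x_{y^n}=y^n$ for every $y^n\in\{0,1\}^n$.

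The main computational step is the evaluation of \eqref{eq:DerivativePsi}, namely
$$\underline{\mathcalboondox{h}}'(\bar{\alpha}^n)=\min_{(y^n,z^n)}\frac{q_{Y^n}(y^n)}{P_{X^nY^n}(y^n,y^n)-P_{X^nY^n}(z^n,y^n)}.$$
For fixed $y^n$ the denominator is largest, and hence the ratio smallest, when $\prod_k P_{XY}(z_k,y_k)$ is minimized; using the same two binary inequalities as above, each factor is smallest for $z_k=1-y_k$, so the inner minimizer is $z^n=\bar{y}^n$. The outer minimization over $y^n$ is the main obstacle. I would argue by induction on the number of zeros in $y^n$: flipping a single zero to a one and writing $A=P_{X^{n-1}Y^{n-1}}(y',y')$ and $B=P_{X^{n-1}Y^{n-1}}(\bar{y}',y')$ for the $(n-1)$-dimensional contributions, the desired monotonicity of $R^{\ast}(y^n):=q_{Y^n}(y^n)/[P(y^n,y^n)-P(\bar{y}^n,y^n)]$ reduces, after cross-multiplication, to $A\bar{\alpha}(\bar{q}p-q\bar{p})-B\alpha(\bar{q}\bar{p}-qp)\ge 0$; the identities $\bar{q}p-q\bar{p}=\alpha(2p-1)$ and $\bar{q}\bar{p}-qp=-\bar{\alpha}(2p-1)$ collapse this to $\alpha\bar{\alpha}(2p-1)(A+B)\ge 0$, which holds since $p\ge 1/2$. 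Hence the global minimum is attained at $(y^n,z^n)=(\mathbf{1},\mathbf{0})$, yielding $\underline{\mathcalboondox{h}}'(\bar{\alpha}^n)=q^n/[(\bar{\alpha}p)^n-(\alpha\bar{p})^n]$. Plugging this into \eqref{eq:PsiExtremePoint} produces $\underline{\mathcalboondox{h}}_n^n(\eps)=1-\zeta_n(\eps)q^n$ on $[\eps_\mathsf{L},\bar{\alpha}]$ for some $\eps_\mathsf{L}<\bar{\alpha}$, and the last assertion of Theorem~\ref{Thm:GeneralizedLocalLinearity} identifies the achieving mechanism as $\mathsf{Z}^{\mathbf{1},\mathbf{0}}(\zeta_n(\eps))=\mathsf{Z}_n(\zeta_n(\eps))$.
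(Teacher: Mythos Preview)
Your proposal is correct and follows the same strategy as the paper: verify the hypotheses of Theorem~\ref{Thm:GeneralizedLocalLinearity}, identify $x_{y^n}=y^n$, reduce the inner minimum over $z^n$ to $\bar{y}^n$, show the outer minimum over $y^n$ is attained at $\mathbf{1}$, and read off the formula and the optimal $\mathsf{Z}^{\mathbf{1},\mathbf{0}}$-channel from \eqref{eq:PsiExtremePoint}. The only cosmetic difference is in the outer minimization, where the paper normalizes by $P(y^n,y^n)$ and shows separately that the resulting numerator is minimized and the denominator maximized at $y^n=\mathbf{1}$, whereas you handle the ratio directly via a coordinate-flipping induction; both arguments are short and valid.
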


The optimal privacy mechanism achieving $\underline{\mathcalboondox{h}}_2(\eps)$ is depicted in Fig.~\ref{fig:Optimal_Vector}. From an implementation point of view, the simplest family of privacy mechanisms consists of those mechanisms for which $Z_k$ is a noisy version of $Y_k$ for each $k=\repn{n}$. Specifically, the family of mechanisms that generate $Z_k$, given $Y_k$, using a \emph{single} BIBO channel $\mathsf{W}$, and thus
\eqn{eq:Defhni}{P_{Z^n|Y^n}(z^n|y^n) = \prod_{k=1}^n \mathsf{W}(z_k|y_k),}
for all $y^n,z^n\in\{0,1\}^n$. Now, let $\mathcalboondox{h}_n^\mathsf{i}(\eps)=\sup \cP^{1/n}(Y^n|Z^n)$, where the supremum is taken over all $P_{Z^n|Y^n}$ satisfying \eqref{eq:Defhni} and $\cP^{1/n}(X^n|Z^n)\leq\eps$. It is clear that $\mathcalboondox{h}_n^\mathsf{i}(\eps) \leq \underline{\mathcalboondox{h}}_n(\eps)$ for all $\eps\in[\cP^{1/n}(X^n),\cP^{1/n}(X^n|Y^n)]$. 
The following proposition, whose proof is given in Appendix~\ref{Appendix:PropositionIID}, shows that if we restrict the privacy filter $P_{Z^n|Y^n}$ to be memoryless, then the optimal filter coincides with the optimal filter in the scalar case, which in this case is $ \mathsf{Z}(\zeta(\eps))$ as defined in Theorem~\ref{Theorem_Linearity_BIBO}.
\begin{proposition}
	\label{Propo_h_iid}
	Assume that $(X^n, Y^n)$ satisfy ($\textnormal{a}_1$) and ($\textnormal{b}$) with $p\in[\frac{1}{2},1)$ and $\alpha\in[0,\frac{1}{2})$ such that $\bar{\alpha}>p$. Then, for all $\eps\in[p,\bar{\alpha}]$,
	\eq{\mathcalboondox{h}_n^\mathsf{i}(\eps) = 1-\zeta(\eps)q,}
	where $q\coloneqq\alpha\bar{p}+\bar{\alpha}p$ and \dsty{\zeta(\eps)\coloneqq\frac{\bar{\alpha}\bar{p}+\bar{\alpha}p-\eps}{\bar{\alpha} p-\alpha\bar{p}}}.
\end{proposition}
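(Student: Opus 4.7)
The plan is to exploit the product structure to collapse the $n$-letter problem into the scalar problem of Theorem~\ref{Theorem_Linearity_BIBO}. Under the memoryless factorization \eqref{eq:Defhni}, applying a single BIBO channel $\mathsf{W}$ independently to each coordinate $Y_k$ makes the triples $\{(X_k,Y_k,Z_k)\}_{k=1}^n$ i.i.d., since $(X_k,Y_k)$ are i.i.d.\ by hypotheses ($\textnormal{a}_1$) and ($\textnormal{b}$) and the channel noise is drawn freshly for each coordinate. In particular the pairs $(X_k,Z_k)$ are i.i.d.\ across $k$, and so are the pairs $(Y_k,Z_k)$.

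First I would invoke Lemma~\ref{Lemma_P_C_IID} twice to write
\begin{equation*}
\cP(X^n|Z^n)=\cP(X_1|Z_1)^n \quad\text{and}\quad \cP(Y^n|Z^n)=\cP(Y_1|Z_1)^n.
\end{equation*}
Taking $n$-th roots turns the vector privacy constraint $\cP^{1/n}(X^n|Z^n)\leq\eps$ into the scalar constraint $\cP(X_1|Z_1)\leq\eps$ and the objective $\cP^{1/n}(Y^n|Z^n)$ into $\cP(Y_1|Z_1)$. Thus
\begin{equation*}
\mathcalboondox{h}_n^{\mathsf{i}}(\eps)=\sup_{\mathsf{W}\ \text{BIBO}:\ \cP(X_1|Z_1)\leq\eps}\cP(Y_1|Z_1),
\end{equation*}
which is exactly the scalar privacy-constrained guessing problem with the output alphabet of the filter forced to be $\{0,1\}$.

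Next I would apply Theorem~\ref{Theorem_Linearity_BIBO} to $(X_1,Y_1)$. Since $Y_1=X_1\oplus V_1$ with $V_1\sim\sBer(\alpha)$, the channel $P_{Y|X}$ is the symmetric $\mathsf{BIBO}(\alpha,\alpha)$ case, so I set $\beta=\alpha$. The assumption $\bar{\alpha}>p$ coincides with the non-trivial condition $\bar{\alpha}\bar{p}>\beta p$ of Theorem~\ref{Theorem_Linearity_BIBO} (since $\bar{\alpha}\bar{p}-\alpha p=\bar{\alpha}-p$), and the dichotomy $\alpha\bar{\alpha}\bar{p}^2<\beta\bar{\beta}p^2$ versus $\alpha\bar{\alpha}\bar{p}^2\geq\beta\bar{\beta}p^2$ collapses, with $\beta=\alpha$, to $p>\tfrac{1}{2}$ versus $p=\tfrac{1}{2}$. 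For $p>\tfrac{1}{2}$, Theorem~\ref{Theorem_Linearity_BIBO} gives $\mathcalboondox{h}(\eps)=1-\zeta(\eps)q$ and exhibits the binary-output Z-channel $\mathsf{Z}(\zeta(\eps))$ as an optimizer, so restricting the scalar output alphabet to $\{0,1\}$ is free. For $p=\tfrac{1}{2}$ a direct substitution yields $q=\bar{q}=\tfrac{1}{2}$ and $\zeta(\eps)=\tilde{\zeta}(\eps)$, so the two branches of Theorem~\ref{Theorem_Linearity_BIBO} coincide and still produce the stated formula.

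There is essentially no conceptual obstacle here: the one point to verify carefully is that the optimizer supplied by Theorem~\ref{Theorem_Linearity_BIBO} is a bona fide BIBO channel rather than a filter on an enlarged ternary alphabet, which is immediate since $\mathsf{Z}(\zeta(\eps))$ is binary-output by construction. Consequently the memoryless-BIBO optimum for the vector problem inherits the scalar Z-channel solution verbatim, matching the claimed expression for $\mathcalboondox{h}_n^{\mathsf{i}}(\eps)$.
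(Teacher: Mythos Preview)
Your proof is correct and follows essentially the same approach as the paper: reduce to the scalar problem via Lemma~\ref{Lemma_P_C_IID} and then invoke Theorem~\ref{Theorem_Linearity_BIBO}. You are in fact more explicit than the paper about specializing to $\beta=\alpha$, handling the $p=\tfrac{1}{2}$ boundary case, and checking that the scalar optimizer is already a binary-output channel so that restricting to BIBO filters costs nothing.
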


It must be noted that, despite the fact that $(X^n, Y^n)$ is i.i.d., the memoryless privacy filter associated to $\mathcalboondox{h}_n^\mathsf{i}(\eps)$ is not optimal, as $\underline{\mathcalboondox{h}}_n(\eps)$ is a function of $n$ while $\mathcalboondox{h}_n^\mathsf{i}(\eps)$ is not. The following corollary, whose proof is given in Appendix~\ref{Appendix:ProofCorollaryDifferenceGap}, bounds the loss resulting from using a memoryless filter instead of an optimal one for $\eps\in [\eps_\mathsf{L},\bar{\alpha}]$. Clearly, for $n=1$, there is no gap as $\underline{\mathcalboondox{h}}_1(\eps)=\mathcalboondox{h}(\eps)=\mathcalboondox{h}^\mathsf{i}_1(\eps)$.
\begin{corollary}
\label{Corollary:DifferenceGap}
Let $(X^n,Y^n)$ satisfy ($\textnormal{a}_1$) and ($\textnormal{b}$) with $p\in[\frac{1}{2},1)$ 
and $\alpha\in [0,\frac{1}{2})$ such that $\bar{\alpha}>p$. Let $\eps_\mathsf{L}$ be as in Theorem~\ref{Prop:IIDDataUnderbar}. If $p>\frac{1}{2}$ and $\alpha>0$, then for $\eps\in [\eps_\mathsf{L}, \bar{\alpha}]$ and sufficiently large $n$
\begin{equation}\label{LB_Gap}
\underline{\mathcalboondox{h}}_n(\eps)-\mathcalboondox{h}^\mathsf{i}_n(\eps)\geq (\bar{\alpha}-\eps)[\Phi(1)-\Phi(n)],
\end{equation}
where $q=\alpha\bar{p}+\bar{\alpha}p$ and $$\Phi(n)\coloneqq \frac{q^n\bar{\alpha}^{n-1}}{(\bar{\alpha}p)^n-(\alpha\bar{p})^n}.$$
If $p=\frac{1}{2}$, then
\begin{equation}\label{UB_Gap}
 \mathcalboondox{h}^\mathsf{i}_n(\eps) \leq \underline{\mathcalboondox{h}}_n(\eps)\leq \mathcalboondox{h}^\mathsf{i}_n(\eps) + \frac{\alpha}{2\bar{\alpha}}, 
\end{equation}
for every $n\geq 1$ and $\eps\in [\eps_\mathsf{L}, \bar{\alpha}]$.
\end{corollary}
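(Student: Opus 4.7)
Both $\underline{\mathcalboondox{h}}_n(\eps)$ and $\mathcalboondox{h}_n^\mathsf{i}(\eps)$ are in closed form (Theorem~\ref{Prop:IIDDataUnderbar} and Proposition~\ref{Propo_h_iid}), so my plan is purely algebraic: substitute, reduce, and track where the hypothesis on $p$ and the constraint $\eps\leq\bar{\alpha}$ enter. I write $u \coloneqq \bar{\alpha}-\eps$ throughout.

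\textbf{Lower bound, case $p>\tfrac{1}{2}$.} Factoring $\bar{\alpha}^n-\eps^n = u\bar{\alpha}^{n-1} R_n$ with $R_n \coloneqq \sum_{k=0}^{n-1}(\eps/\bar{\alpha})^k$, Theorem~\ref{Prop:IIDDataUnderbar} rewrites as $\underline{\mathcalboondox{h}}_n^n = 1 - u\Phi(n)R_n$, while Proposition~\ref{Propo_h_iid} rewrites as $\mathcalboondox{h}_n^\mathsf{i} = 1 - u\Phi(1)$. The claim $\underline{\mathcalboondox{h}}_n - \mathcalboondox{h}_n^\mathsf{i} \geq u[\Phi(1)-\Phi(n)]$ therefore reduces to $(1 - u\Phi(n)R_n)^{1/n} \geq 1 - u\Phi(n)$. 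Raising both sides to the $n$-th power and expanding $1-(1-y)^n = y\sum_{k=0}^{n-1}(1-y)^k$ with $y = u\Phi(n)$ turns this into $R_n \leq \sum_{k=0}^{n-1}(1-u\Phi(n))^k$; since both sides are geometric sums with $n$ non-negative terms, a term-by-term sufficient condition is $\eps/\bar{\alpha} \leq 1-u\Phi(n)$, which simplifies to $\Phi(n)\leq 1/\bar{\alpha}$. Finally, because $p>\tfrac{1}{2}$ gives $q/p<1$ and $\alpha<\tfrac{1}{2}\leq p$ gives $\alpha\bar{p}/(\bar{\alpha}p)<1$, one has $\Phi(n) = q^n\bar{\alpha}^{n-1}/[(\bar{\alpha}p)^n-(\alpha\bar{p})^n] \sim \bar{\alpha}^{-1}(q/p)^n \to 0$, so $\Phi(n)\leq 1/\bar{\alpha}$ for all $n$ sufficiently large, which is the ``sufficiently large $n$'' quantifier in the corollary.

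\textbf{Upper bound, case $p=\tfrac{1}{2}$.} The left inequality is immediate: every memoryless filter admissible for $\mathcalboondox{h}_n^\mathsf{i}$ lies in the feasible set of $\underline{\mathcalboondox{h}}_n$. For the right inequality, substituting $p=\tfrac{1}{2}$ (so $q=\tfrac{1}{2}$) gives $\underline{\mathcalboondox{h}}_n^n = (\eps^n-\alpha^n)/(\bar{\alpha}^n-\alpha^n)$ and $\mathcalboondox{h}_n^\mathsf{i} = (\eps-\alpha)/(\bar{\alpha}-\alpha)$. A one-line cross-multiplication shows $\underline{\mathcalboondox{h}}_n^n \leq (\eps/\bar{\alpha})^n$ (equivalent to $\eps\leq\bar{\alpha}$), hence $\underline{\mathcalboondox{h}}_n \leq \eps/\bar{\alpha}$. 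Subtracting and simplifying,
\[
\underline{\mathcalboondox{h}}_n - \mathcalboondox{h}_n^\mathsf{i} \;\leq\; \frac{\eps}{\bar{\alpha}} - \frac{\eps-\alpha}{\bar{\alpha}-\alpha} \;=\; \frac{\alpha(\bar{\alpha}-\eps)}{\bar{\alpha}(\bar{\alpha}-\alpha)}.
\]
Since $p=\tfrac{1}{2}$ yields $\cP^{1/n}(X^n)=\tfrac{1}{2}$, the requirement $\eps\geq\eps_\mathsf{L}>\tfrac{1}{2}$ together with $\alpha+\bar{\alpha}=1$ gives $\bar{\alpha}-\eps\leq\bar{\alpha}-\tfrac{1}{2}=\tfrac{1}{2}(\bar{\alpha}-\alpha)$, which plugs into the display to produce the desired bound $\alpha/(2\bar{\alpha})$.

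\textbf{The hard part.} The delicate step is the lower bound: because $x\mapsto x^{1/n}$ is concave, Bernoulli's inequality controls $(1-A)^{1/n}$ from \emph{above} rather than below, so a first-order Taylor argument fails. What makes the argument collapse cleanly is pairing the geometric-series identity for $1-(1-y)^n$ with the factorisation $\bar{\alpha}^n-\eps^n = u\bar{\alpha}^{n-1}R_n$; together they reduce the whole inequality to the scalar condition $\Phi(n)\leq 1/\bar{\alpha}$, whose exponential decay is precisely where the hypothesis $p>\tfrac{1}{2}$ is used.
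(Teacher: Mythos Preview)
Your argument is correct in both cases, and in the $p=\tfrac{1}{2}$ case it is genuinely simpler than the paper's route. A brief comparison:

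For $p>\tfrac{1}{2}$, the paper writes $\underline{\mathcalboondox{h}}_n(\eps)=[A_n\eps^n+B_n]^{1/n}$, computes the second derivative, observes that $B_n\to 1$ so $B_n\geq 0$ for large $n$, deduces convexity of $\underline{\mathcalboondox{h}}_n$ on $[\eps_\mathsf{L},\bar{\alpha}]$, and then uses the tangent-line inequality at $\eps=\bar{\alpha}$ (where $\underline{\mathcalboondox{h}}_n'(\bar{\alpha})=\Phi(n)$). Your reduction via the two geometric-series identities lands on the scalar condition $\Phi(n)\leq 1/\bar{\alpha}$, which is \emph{exactly} the paper's condition $B_n\geq 0$ rewritten. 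So the two proofs are the same inequality seen from two angles: yours is calculus-free algebra, the paper's is a convexity/tangent-line argument; both need the same ``sufficiently large $n$'' hypothesis and neither gives more.

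For $p=\tfrac{1}{2}$, the paper defines $\Xi_n(\eps)=\underline{\mathcalboondox{h}}_n(\eps)-\mathcalboondox{h}_n^\mathsf{i}(\eps)$, proves $\Xi_n'\leq 0$ on $[\tfrac{1}{2},\bar{\alpha}]$ by an explicit (and somewhat delicate) derivative computation, and then bounds $\Xi_n(\tfrac{1}{2})\leq \alpha/(2\bar{\alpha})$ via a separate inequality. Your route---the one-line bound $\underline{\mathcalboondox{h}}_n\leq \eps/\bar{\alpha}$ followed by $\eps/\bar{\alpha}-(\eps-\alpha)/(\bar{\alpha}-\alpha)=\alpha(\bar{\alpha}-\eps)/[\bar{\alpha}(\bar{\alpha}-\alpha)]\leq \alpha/(2\bar{\alpha})$ using $\eps\geq\tfrac{1}{2}$---is shorter and avoids derivatives entirely. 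The paper's monotonicity of $\Xi_n$ is a slightly stronger intermediate statement, but it is not needed for the corollary as stated.
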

\begin{figure}
\centering
\begin{tikzpicture}[thick, scale=1]
\begin{axis}[axis on top, xticklabels={$\eps_{\mathsf{L}}$, $\bar{\alpha}$},ymin=0.7,ymax=1,
  samples=50, y=16.5cm, x label style={at={(axis description cs:0.5,0)},rotate=0,anchor=south},
    x=29.5cm, xlabel= $\eps$,
  no markers]
\addplot+[domain=0.6:0.8][name path=A, no marks,blue,dotted, line width=1pt,samples=600] {1.4*x-0.12};
\addplot+[domain=0.6:0.8][name path=B, no marks,red, dashed, line width=1pt,samples=600] {sqrt(1.4*x^2+0.104)};
\addplot[domain=0.6:0.8][name path=B2, no marks,green,line width=1pt,samples=600] {4.67162*x^(10)+0.498388)^(0.1)};
\end{axis}
\node[below,blue] at (0.65, -0.1) {\footnotesize{$\eps_{\mathsf{L}}$}};
\node[below,blue] at (6.5, -0.1) {\footnotesize{$\bar{\alpha}$}};
\end{tikzpicture}
\caption{The graphs of $\underline{\mathcalboondox{h}}_{10}(\eps)$ (green solid curve), $\underline{\mathcalboondox{h}}_{2}(\eps)$ (red dashed curve), and $\mathcalboondox{h}_2^{\mathsf{i}}(\eps)=\mathcalboondox{h}_{10}^{\mathsf{i}}(\eps)$ (blue dotted line) given in Proposition~\ref{Propo_h_iid} and Theorem~\ref{Prop:IIDDataUnderbar} for i.i.d.\ $(X^n, Y^n)$ with $X\sim\sBer(0.6)$ and $P_{Y|X}=\mathsf{BSC}(0.2)$.}\label{fig:1}
\end{figure}  

Note that $\Phi(n)\downarrow0$ as $n\to\infty$. Thus \eqref{LB_Gap} implies that, as expected, the gap between the performance of the optimal privacy filter and that of  the optimal memoryless privacy filter increases as $n$ increases. This observation is numerically illustrated in Fig.~\ref{fig:1}, where $\underline{\mathcalboondox{h}}_n(\eps)$ is plotted as a function of $\eps$ for $n=2$ and $n=10$.
Moreover, \eqref{UB_Gap} implies that when $p=\frac{1}{2}$ and $\alpha$ is small, $\underline{\mathcalboondox{h}}_n(\eps)$ can be approximated by $\mathcalboondox{h}^\mathsf{i}_n(\eps)$. Thus, we can approximate the optimal filter $\mathsf{Z}_n(\zeta_n(\eps))$ with a simple memoryless filter given by $Z_k=Y_k\oplus W_k$, where $W_1, \dots, W_n$ are i.i.d.\ $\sBer(0.5\zeta(\eps))$ random variables that are independent of $(X^n, Y^n)$.

\subsection{Markov Private Data}
\label{Section:Memory}

In this section, we assume that $X^n$ comprises the first $n$ samples of a homogeneous first-order Markov process  having a symmetric transition matrix; i.e.,  $(X^n, Y^n)$ satisfy ($\textnormal{a}_2$) and ($\textnormal{b}$). In practice, this may account for data that follows a pattern, such as a password.

It is easy to see that under assumptions  ($\textnormal{a}_2$) and ($\textnormal{b}$),
\eq{\Pr(X^n=x^n) = \bar{p}\bar{r}^{n-1} \left(\frac{p}{\bar{p}}\right)^{x_1} \prod_{k=2}^n \left(\frac{r}{\bar{r}}\right)^{x_k\oplus x_{k-1}}.}
In particular, if $r<\frac{1}{2}\leq p$, then a direct computation shows that $\cP(X^n)=p\bar{r}^{n-1}.$
The values of $\cP(X^n|Y^n)$ for odd and even $n$ are slightly different. For simplicity, in what follows we assume that $n$ is odd. In this case, as shown in equation \eqref{End_Point_Epsilon_Memory} in Appendix~\ref{Appendix_Proof_Memory},
\begin{equation}\label{PcXnYn_Markov}
\cP(X^n|Y^n)= \bar{\alpha}^n \bar{r}^{n-1} \sum_{k=0}^{(n-1)/2} \binom{n}{k} \left(\frac{\alpha}{\bar{\alpha}}\right)^k.
\end{equation}
Theorem~\ref{Thm:GeneralizedLocalLinearity} established the optimality of a Z-channel $\mathsf{Z}^{y_0,z_0}$ for some $y_0, z_0\in \{0,1\}^n$. In order to find a closed form expression for $\underline{\mathcalboondox{h}}_n$, it is necessary to find $(y_0,z_0)$ which in principle depends on the parameters $(p,\alpha,r)$. The following theorem, whose proof is given in Appendix~\ref{Appendix_Proof_Memory}, bounds $\underline{\mathcalboondox{h}}_n$ for different values of $(p,\alpha,r)$.

\begin{theorem}
\label{Theorem_MarkovMemory}
Assume that $n\in\mathbb{N}$ is odd and $(X^n,Y^n)$ satisfy ($\textnormal{a}_2$) and ($\textnormal{b}$) with $p\in[\frac{1}{2},1)$, $\alpha\in(0,\frac{1}{2})$, and $\bar{\alpha}\bar{p}>\alpha p$. If \dsty{\frac{r}{\bar{r}} < \left(\frac{\alpha}{\bar{\alpha}}\right)^{n-1}}, then there exists $\eps_\mathsf{L}<\cP(X^n|Y^n)$ such that
\eq{1 - \zeta_n(\eps) \Pr(Y^n={\bf 1}) \leq \underline{\mathcalboondox{h}}_n^n(\eps) \leq 1 - \zeta_n(\eps)\alpha^n,}
for every $\eps\in[\eps_\mathsf{L},\cP(X^n|Y^n)]$, where
\eq{\zeta_n(\eps) \coloneqq \bar{r}\frac{\cP(X^n|Y^n)-\eps^n}{p(\bar{\alpha}\bar{r})^n-\bar{p}(\alpha\bar{r})^n}.}
Furthermore, the $2^n$-ary Z-channel $\mathsf{Z}_n(\zeta_n(\eps))$ achieves the lower bound in this interval.
\end{theorem}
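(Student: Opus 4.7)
The plan is to invoke Theorem~\ref{Thm:GeneralizedLocalLinearity} for the pair $(X^n,Y^n)$ regarded as a single pair of discrete random variables on $\{0,1\}^n\times\{0,1\}^n$ with privacy threshold $\eps^n$. Since $\cP^{1/n}(X^n|Z^n)\leq\eps$ is equivalent to $\cP(X^n|Z^n)\leq\eps^n$, one has $\underline{\mathcalboondox{h}}_n^n(\eps)=\underline{\mathcalboondox{h}}(P_{X^nY^n},\eps^n)$. The support hypothesis $q_{Y^n}(y^n)>0$ is immediate from $\alpha\in(0,\tfrac{1}{2})$; once the MAP estimator $x_{y^n}$ is identified for each $y^n$, the theorem yields $\underline{\mathcalboondox{h}}_n^n(\eps)=1-(\cP(X^n|Y^n)-\eps^n)m$ with $m$ the min in \eqref{eq:DerivativePsi}, and the two bounds will follow by bounding $m$ from above and below.

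The crucial step is to show that the hypothesis $r/\bar{r}<(\alpha/\bar{\alpha})^{n-1}$ forces $x_{y^n}\in\{\mathbf{0},\mathbf{1}\}$ for every $y^n$. Using the factorization
\[
P_{X^nY^n}(x^n,y^n)=\bar{p}\bar{r}^{n-1}\bigl(\tfrac{p}{\bar{p}}\bigr)^{x_1}\bigl(\tfrac{r}{\bar{r}}\bigr)^{k(x^n)}\alpha^{d_H(x^n,y^n)}\bar{\alpha}^{n-d_H(x^n,y^n)},
\]
where $k(x^n)$ counts the transitions in $x^n$, I compute, for any $x^n$ with $k(x^n)\geq 1$, the ratios $P_{X^nY^n}(x^n,y^n)/P_{X^nY^n}(\mathbf{0},y^n)=(p/\bar{p})^{x_1}(r/\bar{r})^k(\alpha/\bar{\alpha})^{d-w}$ and $P_{X^nY^n}(x^n,y^n)/P_{X^nY^n}(\mathbf{1},y^n)=(\bar{p}/p)(p/\bar{p})^{x_1}(r/\bar{r})^k(\alpha/\bar{\alpha})^{d-(n-w)}$. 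Decomposing the $n$ positions of $(x^n,y^n)$ into the four types $(0,0),(0,1),(1,0),(1,1)$ shows that, for any $x^n\notin\{\mathbf{0},\mathbf{1}\}$, the exponents $d-w$ and $d-(n-w)$ both lie in $[-(n-1),n-1]$. Depending on $x_1$, one selects the appropriate ratio, and the hypothesis yields $(r/\bar{r})(\bar{\alpha}/\alpha)^{n-1}<1$, forcing the chosen ratio strictly below $1$. Uniqueness of $x_{y^n}$ then follows generically from the strict inequality $\bar{\alpha}\bar{p}>\alpha p$.

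For the lower bound on $\underline{\mathcalboondox{h}}_n^n$, I evaluate the ratio in \eqref{eq:DerivativePsi} at $(y^n,z^n)=(\mathbf{1},\mathbf{0})$: a direct calculation gives $P_{X^nY^n}(\mathbf{1},\mathbf{1})-P_{X^nY^n}(\mathbf{0},\mathbf{1})=\bar{r}^{-1}[p(\bar{\alpha}\bar{r})^n-\bar{p}(\alpha\bar{r})^n]$, so $m\leq \bar{r}\Pr(Y^n=\mathbf{1})/[p(\bar{\alpha}\bar{r})^n-\bar{p}(\alpha\bar{r})^n]$ and therefore $\underline{\mathcalboondox{h}}_n^n(\eps)\geq 1-\zeta_n(\eps)\Pr(Y^n=\mathbf{1})$. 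Checking that $\zeta^{\mathbf{1},\mathbf{0}}(\eps^n)=\zeta_n(\eps)$, the last assertion of Theorem~\ref{Thm:GeneralizedLocalLinearity} identifies the achieving filter as $\mathsf{Z}^{\mathbf{1},\mathbf{0}}(\zeta_n(\eps))=\mathsf{Z}_n(\zeta_n(\eps))$. For the upper bound, I use the elementary estimate
\[
m\geq\frac{\min_{y^n}q_{Y^n}(y^n)}{\max_{y^n,z^n}[P_{X^nY^n}(x_{y^n},y^n)-P_{X^nY^n}(x_{z^n},y^n)]}.
\]
Since $\alpha^d\bar{\alpha}^{n-d}\geq\alpha^n$ for every $0\leq d\leq n$, one has $q_{Y^n}(y^n)\geq\alpha^n$. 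Moreover, $P_{X^nY^n}(\mathbf{1},y^n)-P_{X^nY^n}(\mathbf{0},y^n)=\bar{r}^{n-1}[p\alpha^{n-w}\bar{\alpha}^w-\bar{p}\alpha^w\bar{\alpha}^{n-w}]$ (with $w=|y^n|$) is monotone increasing in $w$, maximized at $w=n$ with value $\bar{r}^{-1}[p(\bar{\alpha}\bar{r})^n-\bar{p}(\alpha\bar{r})^n]$; the symmetric sub-case $x_{y^n}=\mathbf{0},x_{z^n}=\mathbf{1}$ yields a strictly smaller value under $p\geq\tfrac{1}{2}$. Combining these two ingredients delivers $\underline{\mathcalboondox{h}}_n^n(\eps)\leq 1-\zeta_n(\eps)\alpha^n$.

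The main obstacle is the MAP characterization in the second paragraph: the bookkeeping of exponents requires precise tracking of the four position types to establish $|d_H(x^n,y^n)-w|\leq n-1$ and its companion, and the hypothesis $r/\bar{r}<(\alpha/\bar{\alpha})^{n-1}$ must be seen to be tight precisely in the worst case $k=1$ with $|m|=n-1$. The expression \eqref{PcXnYn_Markov} for $\cP(X^n|Y^n)$ emerges as a byproduct from summing $\max_{x^n}P_{X^nY^n}(x^n,y^n)$ over $y^n$, partitioned according to the Hamming weight of the winning candidate in $\{\mathbf{0},\mathbf{1}\}$.
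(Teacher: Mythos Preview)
Your proposal is correct and follows the same overall strategy as the paper: invoke Theorem~\ref{Thm:GeneralizedLocalLinearity}, verify its hypotheses (full support of $Y^n$, unique MAP $x_{y^n}$, $\cP(X^n)<\cP(X^n|Y^n)$), then bound the derivative $m$ in \eqref{eq:DerivativePsi} from above via the single pair $(\mathbf{1},\mathbf{0})$ and from below via $\min_{y^n}q(y^n)\geq\alpha^n$ together with the maximization of the denominator at $y^n=\mathbf{1}$.

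The one genuine methodological difference is in the MAP characterization $x_{y^n}\in\{\mathbf{0},\mathbf{1}\}$. The paper (Lemma~\ref{Lemma:JointProbability_MaximizationMemory}) proceeds by a backward induction on the coordinates, peeling off one position at a time and using $r/\bar{r}<(\alpha/\bar{\alpha})^{n-1}$ to force $x_l=x_{l-1}$. Your argument instead compares $P(x^n,y^n)$ directly to $P(\mathbf{0},y^n)$ (when $x_1=0$) or $P(\mathbf{1},y^n)$ (when $x_1=1$) via the ratio $(r/\bar{r})^{k(x^n)}(\alpha/\bar{\alpha})^{d-w}$, and then uses the Hamming-distance triangle inequality $|d_H(x^n,y^n)-d_H(\mathbf{0},y^n)|\leq d_H(x^n,\mathbf{0})\leq n-1$ to control the exponent. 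This is shorter and more transparent; the induction buys nothing extra here. Two small points you should make explicit when writing this out: (i) the strict inequality $\cP(X^n)<\cP(X^n|Y^n)$ needs to be checked (the paper does this in \eqref{Eq:PcStrict}); and (ii) uniqueness of $x_{y^n}$ among $\{\mathbf{0},\mathbf{1}\}$ requires a short case split on $N_0(y^n)\lessgtr N_1(y^n)$ using both $p\geq\bar{p}$ and $\bar{\alpha}\bar{p}>\alpha p$, not just the latter ``generically''.
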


The special case of $r=0$ is of particular interest. Note that when $r=0$, then ($\textnormal{a}_2$) corresponds to $X_1=\dots=X_n=\theta\in \{0,1\}$. Here, $Y^n\in \{0, 1\}^n$ are i.i.d. copies drawn from $P_{Y|\theta}=\sBer(\bar{\alpha}^{\theta}\alpha^{\bar{\theta}})$. The prior distribution of the parameter $\theta$ is $\sBer(p)$. The parameter $\theta$ is considered to be private and $Y^n$ must be guessed as accurately as possible. This problem can be viewed as a reverse version of \textit{privacy-aware learning} studied in \cite{privacyaware}. The following proposition, whose proof is given in Appendix~\ref{Appendix:Parametric}, provides a closed form expression for $\underline{\mathcalboondox{h}}_n$ in the low privacy regime.  Note that in this case, $\cP(\theta)=p$ and the value of $\cP(\theta|Y^n)$ is obtained from \eqref{PcXnYn_Markov} by setting $r=0$. 
\begin{proposition}
\label{Proposition_Parametric_Dis_Privacy}
Assume that $n$ is odd. Let $\theta\sim \sBer(p)$ with $p\in[\frac{1}{2},1)$ and $Y^n$ be $n$ i.i.d.  $\sBer(\bar{\alpha}^\theta\alpha^{\bar{\theta}})$ samples with $\alpha\in(0,\frac{1}{2})$, $\bar{\alpha}\bar{p}>\alpha p$ and $p<\cP(\theta|Y^n)$. Then, there exists $\eps_\mathsf{L}<\cP(\theta|Y^n)$ such that
\eq{ \max_{P_{Z^n|Y^n}: \Z^n=\{0,1\}^n,\atop \cP(\theta|Z^n)\leq \eps^n} \cP(Y^n|Z^n)= 1 - \zeta_n(\eps)(p\bar{\alpha}^n+\bar{p}\alpha^n),}
for every $\eps\in[\eps_\mathsf{L},\cP(\theta|Y^n)]$ where
\eq{\zeta_n(\eps) = \frac{\cP(\theta|Y^n)-\eps^n}{p\bar{\alpha}^n-\bar{p}\alpha^n}.}
Moreover, the $2^n$-ary Z-channel $\mathsf{Z}_n(\zeta_n(\eps))$ achieves $\underline{\mathcalboondox{h}}_n(\eps)$ in this interval.
\end{proposition}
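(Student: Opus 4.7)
The plan is to apply Theorem~\ref{Thm:GeneralizedLocalLinearity} directly to the pair $(\theta,Y^n)$, with $Y^n$ the ``observed'' variable and $Z^n\in\{0,1\}^n$ the displayed one. Because $X^n=(\theta,\ldots,\theta)$ forces $\cP(X^n|Z^n)=\cP(\theta|Z^n)$, the constraint $\cP^{1/n}(X^n|Z^n)\le\eps$ in the definition of $\underline{\mathcalboondox{h}}_n(\eps)$ is equivalent to the scalar constraint $\cP(\theta|Z^n)\le\eps^n$ appearing in Theorem~\ref{Thm:GeneralizedLocalLinearity}; it therefore suffices to evaluate the scalar $\underline{\mathcalboondox{h}}$ associated with $(\theta,Y^n)$ at threshold $\eps^n$.

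First I would verify the hypotheses of Theorem~\ref{Thm:GeneralizedLocalLinearity}. Writing $w=w(y^n)$ for the Hamming weight, the marginal $q_{Y^n}(y^n)=p\bar\alpha^{w}\alpha^{n-w}+\bar p\alpha^{w}\bar\alpha^{n-w}$ is strictly positive since $\alpha\in(0,\tfrac12)$. The likelihood ratio $(p/\bar p)(\bar\alpha/\alpha)^{2w-n}$ equals $1$ exactly when $2w-n=c:=\log_{\bar\alpha/\alpha}(\bar p/p)$; the hypothesis $\bar\alpha\bar p>\alpha p$ together with $p\ge\tfrac12$ places $c$ in $(-1,0]$, while $n$ odd forces $2w-n$ to be an odd integer, so no equality is possible. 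This uniqueness also yields $x_{y^n}=1$ when $w(y^n)\ge(n+1)/2$ and $x_{y^n}=0$ otherwise. The remaining hypothesis $\cP(\theta)<\cP(\theta|Y^n)$ is just the assumed $p<\cP(\theta|Y^n)$.

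The heart of the proof is identifying a minimizer in \eqref{eq:DerivativePsi}. Since $x\in\{0,1\}$, contributing pairs $(y,z)$ satisfy $(x_y,x_z)=(1,0)$ or $(0,1)$. In the first case the ratio depends on $y$ only through $w(y)$; dividing numerator and denominator by $p\bar\alpha^{w}\alpha^{n-w}$ rewrites it as $(1+t_w)/(1-t_w)$ with $t_w:=(\bar p/p)(\alpha/\bar\alpha)^{2w-n}$, and since $\alpha/\bar\alpha<1$, the sequence $t_w$ is strictly decreasing in $w$. A direct check using $p\ge\tfrac12$ and $\alpha<\bar\alpha$ shows $t_w<1$ throughout the range $w\ge(n+1)/2$, so the case-$1$ minimum is attained at $w=n$, i.e.\ $y=\mathbf{1}$. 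Symmetrically, the case-$2$ minimum is attained at $y=\mathbf{0}$. To compare the two, write $f(a,b):=(a+b)/(a-b)$: the case-$1$ value is $f(p\bar\alpha^n,\bar p\alpha^n)$ and the case-$2$ value is $f(\bar p\bar\alpha^n,p\alpha^n)$. Since $p\ge\tfrac12$ gives $p\bar\alpha^n\ge\bar p\bar\alpha^n$ and $\bar p\alpha^n\le p\alpha^n$, and $f$ is decreasing in $a$ and increasing in $b$, case~1 wins. Hence the minimum is attained at $(y_0,z_0)=(\mathbf{1},\mathbf{0})$ with value $(p\bar\alpha^n+\bar p\alpha^n)/(p\bar\alpha^n-\bar p\alpha^n)$.

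Substituting this derivative into \eqref{eq:PsiExtremePoint} at threshold $\eps^n$ yields exactly $1-\zeta_n(\eps)(p\bar\alpha^n+\bar p\alpha^n)$, and the optimal filter furnished by Theorem~\ref{Thm:GeneralizedLocalLinearity} is the $\mathsf{Z}$-channel $\mathsf{Z}^{\mathbf{1},\mathbf{0}}$ with crossover $\zeta^{\mathbf{1},\mathbf{0}}(\eps^n)=(\cP(\theta|Y^n)-\eps^n)/(p\bar\alpha^n-\bar p\alpha^n)=\zeta_n(\eps)$, which by definition is $\mathsf{Z}_n(\zeta_n(\eps))$. I expect the combinatorial step in the third paragraph to be the main obstacle: reducing the minimum over $\{0,1\}^n\times\{0,1\}^n$ to the single pair $(\mathbf{1},\mathbf{0})$ relies simultaneously on the within-case monotonicity (driven by $\alpha<\bar\alpha$ together with $\bar\alpha\bar p>\alpha p$) and on the across-case comparison (driven by $p\ge\tfrac12$), and it is precisely this step that ties the abstract optimizer in Theorem~\ref{Thm:GeneralizedLocalLinearity} to the concrete $2^n$-ary $\mathsf{Z}$-channel $\mathsf{Z}_n$ asserted in the proposition.
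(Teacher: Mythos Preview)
Your proposal is correct and follows essentially the same approach as the paper: both apply Theorem~\ref{Thm:GeneralizedLocalLinearity} to the pair $(\theta,Y^n)$, verify its hypotheses, split the minimization in \eqref{eq:DerivativePsi} according to whether $(x_y,x_z)=(1,0)$ or $(0,1)$, and identify $(\mathbf 1,\mathbf 0)$ as the minimizer with value $(p\bar\alpha^n+\bar p\alpha^n)/(p\bar\alpha^n-\bar p\alpha^n)$. Your version is slightly more self-contained in checking the uniqueness hypothesis directly via the likelihood ratio (rather than invoking the Markov-case lemma with $r=0$), and your $t_w$ and $f(a,b)$ reparametrizations make the within-case and across-case monotonicity arguments cleaner, but the logical structure is identical to the paper's Appendix~\ref{Appendix:Parametric}.
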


\section{Continuous Case}
\label{Sec:ContinuousCase}

In this section, we assume that $X$ and $Y$ are real-valued random variables having a joint density $P_{XY}$ and the filter $P_{Z|Y}$ is realized by an independent additive Gaussian noise random variable. In particular, the privacy filter's output is 
$$Z_{\gamma}=\sqrt{\gamma}Y+N_{\mathsf{G}},$$
for some $\gamma\geq 0$, where  $N_{\mathsf{G}}\sim \mathcal{N}(0,1)$ is independent of $(X,Y)$. The choice of additive Guassian mechanisms is due to their implementation simplicity and mathematical tractability. Nonetheless, additive non-Gaussian and more general non-linear mechanisms might be natural in specific applications; their investigation is left as a future work. The goal of this section is to study $\sM$, defined in Definition~\ref{Def_StrongENSR}. To make the notation simpler, we define the following.
\begin{definition}
	\label{Def:strong_estimation_privacy}
	Given a pair of absolutely continuous random variables $(X, Y)$ with distribution $P_{XY}$ and $\eps\geq0$, we say that $Z_\gamma$ satisfies \emph{$\eps$-strong estimation privacy}, denoted as $Z_\gamma\in \Gamma(P_{XY}, \eps)$, if 
	\begin{equation}
	\label{Def:Strong_Privacy}
	1-\eps\leq\frac{\mmse(f(X)|Z_\gamma)}{\var(f(X))}\leq 1,
	\end{equation}	
	holds for every measurable function $f:\R\to\R$ with $0<\var(f(X))<\infty$. Similarly, $Z_\gamma$ is said to satisfy \emph{$\eps$-weak estimation privacy}, denoted by $Z_\gamma\in \partial\Gamma(P_{XY}, \eps)$, if \eqref{Def:Strong_Privacy} holds for identity function, i.e., $f(x)=x$.
\end{definition}
Similar to privacy, the utility between $Y$ and $Z_\gamma$ will be measured in terms of $\mmse(Y|Z_\gamma)$, 
and hence 
$\sM$ (Definition~\ref{Def_StrongENSR}) quantifies the tradeoff between utility and privacy.  In fact, $\sM$ can be equivalently written as 
$$\sM(P_{XY}, \eps)=\inf_{\gamma\geq 0:Z_\gamma\in \Gamma(P_{XY}, \eps)}\frac{\mmse(Y|Z_\gamma)}{\var(Y)}.$$
We can analogously define the \emph{weak estimation noise-to-signal ratio} as
$$\sW(P_{XY}, \eps)\coloneqq\inf_{\gamma\geq 0: Z_\gamma\in \partial\Gamma(P_{XY}, \eps)}\frac{\mmse(Y|Z_\gamma)}{\var(Y)}.$$
Note that $\sM$ and $\sW$ are non-increasing since $\Gamma(P_{XY},\eps)\subseteq \Gamma(P_{XY},\eps')$ and $\partial\Gamma(P_{XY},\eps)\subseteq \partial\Gamma(P_{XY},\eps')$ if $\eps\leq \eps'$. For the sake of brevity, we omit $P_{XY}$ in $\Gamma(P_{XY}, \eps)$, $\partial\Gamma(P_{XY}, \eps)$, $\sM(P_{XY}, \eps)$, and $\sW(P_{}, \eps)$ when there is no risk of confusion.

In what follows we derive equivalent conditions for $Z_\gamma\in\Gamma(\eps)$ and $Z_\gamma\in\partial\Gamma(\eps)$, respectively. Recall that the (Pearson) correlation coefficient of the random variables $U$ and $V$ is defined as
\eq{\rho(U,V) = \frac{\cov(U,V)}{\sqrt{\var(U)\var(V)}}}
provided that $0<\var(U),\var(V)<\infty$. For a random variable $U$, let $\mathcal{S}_U$ be the set of all measurable functions $f:\R\to\R$ such that $0<\var(f(U))<\infty$. Consider the following.

\begin{definition}[\hspace{-0.007cm}\cite{sarmanov, Renyi-dependence-measure}]
\label{Definition-Maximal_corr}
Let $U$ and $V$ be a pair of random variables.
\begin{itemize}
	\item[i)] The \emph{maximal  correlation} of $U$ and $V$, denoted by $\rho_m(U,V)$, is defined as
\begin{eqnarray*}
\rho_m(U,V)&\coloneqq& \sup_{(f,g)\in \mathcal{S}_U\times\mathcal{S}_V} \rho(f(U),g(V)),
\end{eqnarray*}
	provided that $0<\var(U), \var(V)<\infty$. If either $\mathcal{S}_U\times \mathcal{S}_V$ is empty (which happens precisely when either $U$ or $V$ is constant almost surely), then we set $\rho_m(U, V)=0$.

	\item[ii)] The \emph{one-sided maximal correlation}\footnote{This name is taken from  \cite[Def. 7.4]{Calmon_PhD_thesis}. Originally, R\'enyi named this quantity as the "correlation ratio" of $U$ on $V$ \cite[eq.\ (6)]{Renyi-dependence-measure}. } between $U$ and $V$, denoted by $\eta_V(U)$, is defined as 
$$\eta_V(U) \coloneqq \sup_{g\in\mathcal{S}_V} \rho(U, g(V)),$$
provided that $0<\var(U)<\infty$. If $\mathcal{S}_V$ is empty, then we set $\eta_V(U)=0$.
\end{itemize}
\end{definition}
R\'enyi \cite{Renyi-dependence-measure} showed that $\eta^2_V(U)=\frac{\var(\E[U|V])}{\var(U)}$. Therefore, the law of total variance implies
\begin{equation}
\label{Eq:Eta_MMSE}
\frac{\mmse(U|V)}{\var(U)} = \frac{\mathbb{E}[\var(U|V)]}{\var(U)} = 1 - \frac{\var(\mathbb{E}[U|V])}{\var(U)}= 1-\eta^2_V(U).
\end{equation}
It can also be shown that $0\leq\rho_m(U, V)\leq 1$, where the lower bound is achieved if and only if $U$ and $V$ are independent, and the upper bound is achieved if and only if there exists a pair of functions $(f, g)\in\S_U\times\S_V$ such that $f(U)=g(V)$ almost surely \cite{Renyi-dependence-measure}. It is well known that if $(X_\sG, Y_\sG)$ is a pair of jointly Gaussian random variables with correlation coefficient $\rho$, then $\rho_m^2(X_\sG, Y_\sG)=\rho^2(X_\sG, Y_\sG)$, see \cite{gebelien} or \cite{Gaussian_Maximal_Correlation} for a more recent proof. R\'{e}nyi \cite{Renyi-dependence-measure} derived an equivalent characterization of maximal correlation as
\begin{equation}
\label{Maximal_correlation_Equivalent}
\rho^2_m(U;V) = \sup_{f\in\S_U} \eta_V^2(f(U)).
\end{equation}
The following theorem, whose proof is given in Appendix~\ref{Appendidx:MC}, provides an equivalent characterization of $\eps$-strong estimation privacy $Z_\gamma\in\Gamma(\eps)$.
\begin{theorem}\label{Theorem_equivakebt_strong_MC}
Let $U$ and $V$ be non-degenerate random variables and $\eps\in[0,1]$. Then
$$\mmse(f(U)|V)\geq (1-\eps)\var(f(U)),$$
for all $f\in\S_U$ if and only if $\rho_m^2(U,V)\leq \eps$. In particular, $Z_\gamma\in \Gamma(\eps)$ if and only if $\rho_m^2(X,Z_\gamma)\leq \eps$.
\end{theorem}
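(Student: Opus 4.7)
The plan is to string together the two identities already recorded in the excerpt, namely the Rényi identity $\eta_V^2(U)=\var(\E[U|V])/\var(U)$ which, via the law of total variance, gives \eqref{Eq:Eta_MMSE}
$$\frac{\mmse(U|V)}{\var(U)} = 1-\eta_V^2(U),$$
and the variational characterization \eqref{Maximal_correlation_Equivalent} of maximal correlation, $\rho_m^2(U,V)=\sup_{f\in\S_U}\eta_V^2(f(U))$. Because $f \in \S_U$ forces $0<\var(f(U))<\infty$, both identities apply with $U$ replaced by $f(U)$ on each side.

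First I would fix an arbitrary $f \in \S_U$ and apply \eqref{Eq:Eta_MMSE} to the pair $(f(U),V)$, obtaining
$$\frac{\mmse(f(U)|V)}{\var(f(U))} = 1-\eta_V^2(f(U)).$$
Dividing the hypothesis $\mmse(f(U)|V) \geq (1-\eps)\var(f(U))$ by $\var(f(U))>0$ therefore shows that the condition in the theorem is equivalent to $\eta_V^2(f(U)) \leq \eps$ holding for every $f \in \S_U$, which in turn is equivalent to $\sup_{f\in\S_U}\eta_V^2(f(U)) \leq \eps$. By \eqref{Maximal_correlation_Equivalent}, this supremum is exactly $\rho_m^2(U,V)$, so we get $\rho_m^2(U,V)\leq\eps$, and conversely every step is reversible.

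For the "In particular" clause, I would simply instantiate the equivalence with $U=X$ and $V=Z_\gamma$: since $X$ is absolutely continuous (hence non-degenerate) and $Z_\gamma=\sqrt{\gamma}Y+N_\sG$ is likewise non-degenerate (its distribution has a density because of the additive Gaussian noise), the hypotheses of the first statement are met. By Definition~\ref{Def:strong_estimation_privacy}, $Z_\gamma \in \Gamma(\eps)$ is by definition the assertion $\mmse(f(X)|Z_\gamma)\geq (1-\eps)\var(f(X))$ for every $f$ with $0<\var(f(X))<\infty$; together with the trivial upper bound $\mmse(f(X)|Z_\gamma)\leq \var(f(X))$ this yields the claimed equivalence with $\rho_m^2(X,Z_\gamma)\leq\eps$.

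There is essentially no obstacle: the statement is a clean reformulation that pieces together two identities stated just before it. The only points worth spelling out are that (i) $f \in \S_U$ guarantees one may divide by $\var(f(U))$, so the averaged-MMSE inequality and the $\eta_V^2$-inequality are genuinely equivalent, not merely one-directional, and (ii) the condition $\mmse(f(X)|Z_\gamma)\leq\var(f(X))$ built into Definition~\ref{Def:strong_estimation_privacy} is automatic (it is just $\eta_{Z_\gamma}^2(f(X)) \geq 0$), which justifies translating $\eps$-strong estimation privacy directly into the maximal-correlation bound without any additional assumption.
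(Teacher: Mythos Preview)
Your proposal is correct and follows essentially the same approach as the paper's proof: both combine \eqref{Eq:Eta_MMSE} (applied to $f(U)$) with the variational characterization \eqref{Maximal_correlation_Equivalent} to obtain $\inf_{f\in\S_U}\mmse(f(U)|V)/\var(f(U))=1-\rho_m^2(U,V)$, from which the equivalence is immediate. The paper states this infimum identity directly, while you argue pointwise in $f$ and then pass to the supremum, but the content is identical.
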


From this theorem and  \eqref{Eq:Eta_MMSE}, we can equivalently express $\sM(\eps)$ and $\sW(\eps)$ as
\al{
\sM(\eps) &= 1-\sup_{\gamma\geq 0:~\rho_m^2(X,Z_\gamma)\leq \eps} \eta^2_{Z_{\gamma}}(Y),\\
\sW(\eps) &= 1-\sup_{\gamma\geq 0:~\eta^2_{Z_{\gamma}}(X)\leq\eps} \eta^2_{Z_{\gamma}}(Y).
}

It is known that both $\eta$ and $\rho_m$ satisfy the data processing inequality (see e.g., \cite{Calmon_bounds_Inference} and \cite{Ulukus_Data_processing}) and hence $\eta_{Z_\gamma}(X)\leq \eta_Y(X)$ and  $\rho_m(X, Z_\gamma)\leq \rho_m(X,Y)$. Therefore, we can restrict $\eps$ in the definition of $\sW(\eps)$ and $\sM(\eps)$ to the intervals $[0, \eta^2_Y(X)]$ and $[0, \rho_m^2(X,Y)]$, respectively. Unlike the discrete case, it is clear that perfect privacy $\eps=0$ implies $\gamma=0$.  Thus perfect privacy yields trivial utility;  i.e., $\sM(0)=1$ and $\sW(0)=1$. 

Note that  $\gamma\mapsto \mmse(Y|Z_\gamma)$ is continuous and decreasing on $(0, \infty)$ \cite{MMSE_Guo} and $\gamma\mapsto \rho_m^2(X, Z_\gamma)$ is left-continuous and increasing on $(0, \infty)$ \cite[Theorem 2]{dembo1}. Thus we can define  $\gamma^\ast_\eps\coloneqq \max\{\gamma\geq0: \rho_m^2(X, Z_\gamma)\leq\eps\}$ for which we have \dsty{\sM(\eps)=\frac{\mmse(Y|Z_{\gamma^\ast_\eps})}{\var(Y)}}. The left-continuity of $\gamma\mapsto \rho_m^2(X, Z_\gamma)$ implies that  $\eps\mapsto \gamma^\ast_\eps$ is right-continuous, and thus $\eps\mapsto \sM(\eps)$ is right-continuous on $(0, \rho_m^2(X,Y))$.

\begin{example}
\label{example_Gaussian}
Let $(X_{\sG},Y_{\sG})$ be jointly Gaussian random variables with mean zero and correlation coefficient $\rho$ and let $Z_\gamma=\sqrt{\gamma}Y_\sG+N_{\sG}$. Since $\rho^2_m(X_{\sG},Z_{\gamma})=\rho^2(X_{\sG}, Z_{\gamma})$, we have that
\eq{\rho_m^2(X_{\sG}, Z_{\gamma})=\rho^2\frac{\gamma\var(Y_\sG)}{1+\gamma\var(Y_\sG)},}
and hence the mapping $\gamma\mapsto \rho_m^2(X_{\sG}, Z_{\gamma})$ is strictly increasing. As a consequence, for $0\leq\eps\leq\rho^2$, the equation $\rho_m^2(X_{\sG}, Z_{\gamma})=\eps$ has a unique solution
$$\gamma_{\eps}\coloneqq\frac{\eps}{\var(Y_\sG)(\rho^2-\eps)},$$
and $\rho_m^2(X_\sG, Z_\gamma)\leq\eps$ if and only if $\gamma\leq\gamma_{\eps}$. On the other hand,
$$\mmse(Y_{\sG}|Z_{\gamma}) = \frac{\var(Y_\sG)}{1+\gamma\var(Y_\sG)},$$
which shows that the map $\gamma\mapsto \mmse(Y_{\sG}|Z_{\gamma})$ is strictly decreasing. Therefore,
\begin{equation}\label{M_eps_gaussian}
  \sM({\eps})=\frac{\mmse(Y_{\sG}|Z_{\gamma_{\eps}})}{\var(Y_{\sG})}=1-\frac{\eps}{\rho^2}.
\end{equation}
Clearly, for jointly Gaussian $X_\sG$ and  $Y_\sG$, we have $\eta^2_{Z_{\gamma}}(X_{\sG})=\rho_m^2(X_{\sG},Z_{\gamma})$ for any $\gamma\geq 0$. Consequently, $\Gamma(\eps)=\partial\Gamma(\eps)$ and, for $0\leq \eps\leq \rho^2$,
\begin{equation}\label{Eq:SM_SW_Gaussian}
  \sM(\eps)=\sW(\eps)=1-\frac{\eps}{\rho^2}.
\end{equation}
\end{example}

Next, we obtain bounds on $\sM(\eps)$ for the special case of Gaussian non-private data $Y_\sG$. The proof of the following result is given in Appendix~\ref{Appendix_Bound}.

\begin{theorem}\label{Lemma_Gaussian_Y_Arbit_X}
Let $X$ be jointly distributed with Gaussian $Y_{\sG}$. Then,
$$1-\frac{\eps}{\rho^2(X, Y_{\sG})}\leq\sM(P_{XY_\sG},\eps)\leq 1-\frac{\eps}{\rho_m^2(X, Y_{\sG})},$$
\end{theorem}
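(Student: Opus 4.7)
The proof splits into a direct calculation for the utility side and separate arguments for the two bounds, both of which exploit Gaussianity of $Y_\sG$ (and hence the joint Gaussianity of $(Y_\sG,Z_\gamma)$). My starting point is the explicit formula $\mmse(Y_\sG\mid Z_\gamma)=\var(Y_\sG)/(1+\gamma\var(Y_\sG))$, which is just the scalar Gaussian MMSE through an AWGN channel. Via \eqref{Eq:Eta_MMSE} this is the same as
$$\eta_{Z_\gamma}^2(Y_\sG)=\frac{\gamma\var(Y_\sG)}{1+\gamma\var(Y_\sG)}.$$
Note that this is continuous and strictly increasing from $0$ to $1$ as $\gamma$ runs over $[0,\infty)$. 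The whole problem is then to understand the image, under the constraint $\rho_m^2(X,Z_\gamma)\leq\eps$ (which by Theorem~\ref{Theorem_equivakebt_strong_MC} is the same as $Z_\gamma\in\Gamma(\eps)$), of $\eta_{Z_\gamma}^2(Y_\sG)$.

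For the \emph{lower bound}, I would proceed with the inequality $\rho_m^2(X,Z_\gamma)\geq \rho^2(X,Z_\gamma)$ and compute the Pearson coefficient directly. Since $N_\sG$ is independent of $X$, one has $\cov(X,Z_\gamma)=\sqrt{\gamma}\cov(X,Y_\sG)$ and $\var(Z_\gamma)=1+\gamma\var(Y_\sG)$, which combined with the formula for $\eta_{Z_\gamma}^2(Y_\sG)$ yields the identity
$$\rho^2(X,Z_\gamma)=\rho^2(X,Y_\sG)\,\eta_{Z_\gamma}^2(Y_\sG).$$
Hence any $\gamma$ feasible for $\sM(\eps)$ must satisfy $\eta_{Z_\gamma}^2(Y_\sG)\leq \eps/\rho^2(X,Y_\sG)$, and taking the supremum over feasible $\gamma$ gives the claimed lower bound $\sM(\eps)\geq 1-\eps/\rho^2(X,Y_\sG)$.

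For the \emph{upper bound}, I would invoke the tensorization property of maximal correlation along Markov chains, namely
$$\rho_m(X,Z_\gamma)\leq \rho_m(X,Y_\sG)\,\rho_m(Y_\sG,Z_\gamma),$$
for $X\markov Y_\sG\markov Z_\gamma$ (this is Witsenhausen's classical inequality, which should be cited). Because $(Y_\sG,Z_\gamma)$ is jointly Gaussian, $\rho_m^2(Y_\sG,Z_\gamma)=\rho^2(Y_\sG,Z_\gamma)=\eta_{Z_\gamma}^2(Y_\sG)$, so the displayed bound becomes $\rho_m^2(X,Z_\gamma)\leq \rho_m^2(X,Y_\sG)\,\eta_{Z_\gamma}^2(Y_\sG)$. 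Assuming the nontrivial regime $\eps\leq \rho_m^2(X,Y_\sG)$, the intermediate value theorem applied to the continuous monotone map $\gamma\mapsto\eta_{Z_\gamma}^2(Y_\sG)$ produces $\gamma^\star$ with $\eta_{Z_{\gamma^\star}}^2(Y_\sG)=\eps/\rho_m^2(X,Y_\sG)$, and this $Z_{\gamma^\star}$ satisfies $\rho_m^2(X,Z_{\gamma^\star})\leq\eps$, hence lies in $\Gamma(\eps)$. Evaluating yields $\sM(\eps)\leq 1-\eta_{Z_{\gamma^\star}}^2(Y_\sG)=1-\eps/\rho_m^2(X,Y_\sG)$. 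For $\eps\geq\rho_m^2(X,Y_\sG)$ the right-hand side is nonpositive and the bound holds trivially since $\sM\leq 1$.

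The main obstacle is the upper bound, which rests on the Markov tensorization inequality for maximal correlation; invoking it correctly (and noting that the Gaussianity of $Y_\sG$ is exactly what collapses $\rho_m^2(Y_\sG,Z_\gamma)$ to $\eta_{Z_\gamma}^2(Y_\sG)$) is the crux of the argument. The lower bound, by contrast, only uses the trivial inequality $\rho_m\geq|\rho|$, which is why the two bounds differ by the Gaussianity gap $\rho^2(X,Y_\sG)$ versus $\rho_m^2(X,Y_\sG)$ and coincide when $X$ itself is Gaussian, recovering \eqref{M_eps_gaussian} of Example~\ref{example_Gaussian}.
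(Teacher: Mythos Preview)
Your proof is correct and follows essentially the same approach as the paper's: both arguments hinge on the Gaussian MMSE formula to express $\sM$ via $\rho_m^2(Y_\sG,Z_\gamma)=\eta_{Z_\gamma}^2(Y_\sG)$, use $\rho_m^2\geq\rho^2$ together with the factorization $\rho^2(X,Z_\gamma)=\rho^2(X,Y_\sG)\,\rho_m^2(Y_\sG,Z_\gamma)$ for the lower bound, and the strong data processing (tensorization) inequality for maximal correlation along the Markov chain for the upper bound. The only cosmetic differences are that the paper cites \cite[Lemma~6]{Asoode_submitted} for the tensorization step and appeals directly to continuity of $\gamma\mapsto\rho_m^2(Y_\sG,Z_\gamma)$ rather than phrasing it as an intermediate-value argument.
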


Combined with \eqref{Eq:SM_SW_Gaussian}, this theorem shows that for a Gaussian $Y$, a Gaussian $X_\sG$ minimizes $\sM(\eps)$ among all continuous random variables $X$ having identical $\rho(X,Y_\sG)$ and maximizes $\sM(\eps)$ among all continuous random variables $X$ having identical $\rho_m(X,Y_\sG)$. These observations establish another extremal property of Gaussian distribution over AWGN channels, see e.g., \cite[Theorem 12]{MMSE_WU_Properties} for another example.
This theorem also implies that
\begin{eqnarray*}
\sM(P_{X_\sG Y_\sG},\eps)-\sM(P_{XY_\sG},\eps)&\leq& \frac{\eps}{\rho^2(X, Y_{\sG})}\\
&&-\frac{\eps}{\rho^2_m(X, Y_{\sG})},
\end{eqnarray*}
for Gaussian $X_{\sG}$ which satisfies $\rho_m^2(X_{\sG}, Y_{\sG})=\rho_m^2(X, Y_{\sG})$. This demonstrates that if the difference $\rho_m^2(X, Y_{\sG})-\rho^2(X, Y_{\sG})$ is small, then $\sM(P_{XY_\sG},\eps)$ is very close to $\sM(P_{X_\sG Y_\sG},\eps)$.

As stated before, for any given joint density $P_{XY}$, perfect privacy results in trivial utility, i.e., $\sM(0)=1$. Therefore, it is interesting to study the approximation of $\sM(\eps)$ for sufficiently small $\eps$, i.e., in the almost perfect privacy regime. The next result, whose proof is given in Appendix~\ref{Appendix:limsup}, provides such an approximation and also shows that the lower bound in Theorem~\ref{Lemma_Gaussian_Y_Arbit_X} holds for general $Y$ for $\eps$ in the almost perfect privacy regime. 
\begin{lemma}\label{Lemma_Approx_S}
	We have that
	$$\limsup_{\eps\to 0}\frac{1-\sM(\eps)}{\eps}\leq \frac{1}{\rho^2(X,Y)}.$$

\end{lemma}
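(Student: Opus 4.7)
The plan is to turn the feasibility constraint $\rho_m^2(X,Z_\gamma)\leq\eps$ (equivalent to $Z_\gamma\in\Gamma(\eps)$ by Theorem~\ref{Theorem_equivakebt_strong_MC}) into an explicit upper bound on $\gamma$, and then combine monotonicity of $\mmse(Y|Z_\gamma)$ in $\gamma$ with a first-order expansion of $\eta^2_{Z_\gamma}(Y)$ as $\gamma\to 0^+$.

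Since $\rho^2\leq\rho_m^2$, any feasible $Z_\gamma$ satisfies $\rho^2(X,Z_\gamma)\leq\eps$. Assuming without loss of generality $\E X=\E Y=0$, a direct computation gives $\rho^2(X,Z_\gamma)=\gamma\var(Y)\rho^2(X,Y)/(1+\gamma\var(Y))$, so for $\eps<\rho^2(X,Y)$ feasibility forces
\[\gamma\leq\gamma_\eps^{\max}:=\frac{\eps}{\var(Y)(\rho^2(X,Y)-\eps)}.\]
Because $\gamma\mapsto\mmse(Y|Z_\gamma)$ is non-increasing \cite{MMSE_Guo}, and using \eqref{Eq:Eta_MMSE}, this yields
\[1-\sM(\eps)=\sup_{Z_\gamma\in\Gamma(\eps)}\eta^2_{Z_\gamma}(Y)\leq\eta^2_{Z_{\gamma_\eps^{\max}}}(Y).\]

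The crux is then to establish the first-order expansion
\[\eta^2_{Z_\gamma}(Y)=\gamma\var(Y)+o(\gamma)\quad\text{as }\gamma\to 0^+.\]
I would obtain this by Taylor-expanding the posterior. Writing $f_{Z_\gamma|Y}(z|y)=\phi(z-\sqrt{\gamma}y)=\phi(z)\exp(\sqrt{\gamma}yz-\gamma y^2/2)$, with $\phi$ the standard Gaussian density, and expanding the exponential in $\sqrt{\gamma}$ gives $\E[Y|Z_\gamma=z]=\sqrt{\gamma}\var(Y)\,z+O(\gamma)$ and $f_{Z_\gamma}(z)=\phi(z)(1+O(\gamma))$. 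Integrating the square of the estimator against $f_{Z_\gamma}$ then yields $\var(\E[Y|Z_\gamma])=\gamma\var(Y)^2+o(\gamma)$, which gives the claim after dividing by $\var(Y)$. Substituting $\gamma=\gamma_\eps^{\max}$ and using $\gamma_\eps^{\max}\var(Y)=\eps/(\rho^2(X,Y)-\eps)$,
\[1-\sM(\eps)\leq\frac{\eps}{\rho^2(X,Y)-\eps}+o(\eps),\]
so dividing by $\eps$ and letting $\eps\to 0^+$ gives the desired bound.

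The main obstacle is making the Taylor expansion fully rigorous: the interchange of expectation and series for both $\E[Y|Z_\gamma=z]$ and $f_{Z_\gamma}(z)$ requires moment control on $Y$ (a finite fourth moment suffices to bound the remainders uniformly via dominated convergence). A conceptually cleaner alternative is to invoke the I-MMSE differential identity of Guo--Shamai--Verd\'u, which yields $\frac{d}{d\gamma}\mmse(Y|Z_\gamma)\big|_{\gamma=0^+}=-\var(Y)^2$; combined with continuity of this derivative near $\gamma=0$, this gives $\mmse(Y|Z_\gamma)=\var(Y)-\gamma\var(Y)^2+o(\gamma)$, which is equivalent to the required expansion for $\eta^2_{Z_\gamma}(Y)$.
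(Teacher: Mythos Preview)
Your proposal is correct and follows essentially the same route as the paper: bound the feasible $\gamma$ via $\rho^2(X,Z_\gamma)\leq\rho_m^2(X,Z_\gamma)\leq\eps$ to get $\gamma\leq\gamma_\eps^{\max}$, then use monotonicity of $\mmse(Y|Z_\gamma)$ together with its first-order expansion at $\gamma=0$. Your ``cleaner alternative'' is in fact exactly what the paper does---it invokes \cite[Prop.~9]{MMSE_Guo}, i.e., $\frac{d}{d\gamma}\mmse(Y|Z_\gamma)=-\E[\var^2(Y|Z_\gamma)]$, which at $\gamma=0$ gives $-\var(Y)^2$, yielding $\sM(\eps)=1-\var(Y)\gamma^*_\eps+o(\gamma^*_\eps)\geq 1-\eps/\rho^2(X,Y)+o(\eps)$.
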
 

\section{Conclusion}
\label{Section:Conclusion}

We studied the problem of displaying $Y$ under a privacy constraint with respect to another correlated random variable $X$, where utility and privacy are measured in terms of the probability of correctly guessing and minimum mean-squared error in the discrete and continuous cases, respectively. 


In the discrete case, we introduced the privacy-constrained guessing function $\mathcalboondox{h}$ to quantify the fundamental tradeoff between privacy and utility. We proved that $\mathcalboondox{h}$ is piecewise linear for every $X$ and $Y$. When $X$ and $Y$ are binary, this result allowed us to obtain $\mathcalboondox{h}$ in closed form and to establish the optimility of the $Z$-channel. We then defined $\underline{\mathcalboondox{h}}$ analogously to $\mathcalboondox{h}$ with the additional assumption that $Z$ is supported over the alphabet of $Y$, thereby providing a lower bound for $\mathcalboondox{h}$. For arbitrary $X$ and $Y$, we derived $\underline{\mathcalboondox{h}}$ in closed form in the high utility regime and established the optimality of a generalized $Z$-channel in this regime. Finally, we specialized our results about $\underline{\mathcalboondox{h}}$ to the vector case, where $X^n$, $Y^n$, and $Z^n$ are assumed to be binary random vectors. Overall, these results provide tangible answers for the estimation theoretic privacy-utility tradeoff problem and the performance of $Z$-channels in the high utility regime.

In the continuous case, we proposed the estimation-noise-to-signal ratio function $\sM$ to capture the fundamental privacy-utility tradeoff with an intrinsic operational meaning. In the special case of additive Gaussian privacy filters, we showed that if $Y$ is Gaussian, then a Gaussian $X$ minimizes $\sM$ among all $(X, Y)$ with identical correlation coefficients and maximizes $\sM$ among all $(X, Y)$ with identical maximal correlations. We also obtained a tight lower bound for $\sM$ for general absolutely continuous random variables when $\eps$ is sufficiently small.


\appendices

\section{Proof of Theorem \ref{Thm:PiecewiseLinearity}}
\label{Appendix:ProofPiecewiseLinearity}

Before proving Theorem \ref{Thm:PiecewiseLinearity}, we need to establish some technical facts.

Consider the map $\H:\F\to[0,1]\times[0,1]$ given by
\eq{\H(F) = (\P(F),\U(F)),}
with $\P(F)$ and $\U(F)$ defined in \eqref{eq:DefPU}. For ease of notation, let
$\D = \left\{D\in\M_{N \times( N+1)}: \|D\|=1\right\}$ where $||\cdot||$ denotes the Euclidean norm in $\M_{N\times (N+1)} \equiv \R^{N(N+1)}$. For $G\in\F$, let
\eq{\D(G) = \left\{D\in \D: G+tD\in \F \textnormal{ for some } t>0\right\}.}
In graphical terms, $\D$ is the set of all possible directions in $\M_{N \times (N+1)}$ and $\D(G)$ is the set of directions that make $t\mapsto G+tD$ ($t\geq0$) stay locally in $\F$. 
\begin{lemma}
	\label{Lemma:CompactnessDirections}
	For every $G\in\F$, the set $\D(G)$ is compact.
\end{lemma}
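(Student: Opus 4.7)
The plan is to give an explicit characterization of $\D(G)$ as the intersection of $\D$ with finitely many closed half-spaces and an affine subspace, and then conclude compactness from the compactness of the unit sphere $\D$.

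First, I would unpack the membership condition. A matrix $F \in \M_{N\times(N+1)}$ lies in $\F$ exactly when (i) all entries of $F$ are nonnegative and (ii) each row of $F$ sums to $1$. Fix $G \in \F$ and $D \in \D$. If $G+tD \in \F$ for some $t>0$, then writing out the row-sum condition for $G+tD$ and subtracting the one for $G$ forces $\sum_{z=1}^{N+1} D(y,z)=0$ for every $y$; and from the entrywise nonnegativity of $G+tD$ (with $t>0$) one must have $D(y,z)\ge 0$ for every pair $(y,z)$ with $G(y,z)=0$. Conversely, suppose $D$ satisfies these two conditions. For indices $(y,z)$ with $G(y,z)>0$ and $D(y,z)<0$, let
\[
t^\ast \coloneqq \min\left\{\frac{G(y,z)}{-D(y,z)} : G(y,z)>0,\ D(y,z)<0\right\}>0
\]
(with $t^\ast=+\infty$ if no such pair exists). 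Then for any $0<t\le t^\ast$, all entries of $G+tD$ are nonnegative and all row sums equal $1$, so $G+tD\in\F$. Hence $\D(G)$ coincides with the set
\[
\widetilde{\D}(G) \coloneqq \Bigl\{D\in\D : \sum_{z=1}^{N+1}D(y,z)=0\ \forall y,\ \ D(y,z)\ge 0\ \forall (y,z)\text{ with }G(y,z)=0\Bigr\}.
\]

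The second step is immediate: each defining condition of $\widetilde{\D}(G)$ (beyond $\|D\|=1$) is the preimage under a continuous linear functional of either $\{0\}$ or $[0,\infty)$, hence closed in $\M_{N\times(N+1)}$. Consequently $\widetilde{\D}(G)$ is a closed subset of the unit sphere $\D$. Since $\D$ is compact (as a closed bounded subset of the finite-dimensional Euclidean space $\M_{N\times(N+1)}\cong\R^{N(N+1)}$), $\D(G)=\widetilde{\D}(G)$ is compact.

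I do not expect any real obstacle here; the only subtle point is correctly translating the existential condition ``$G+tD\in\F$ for some $t>0$'' into the pointwise sign conditions, which requires the elementary but important observation that one can always shrink $t$ to accommodate the strictly positive entries of $G$. Once this translation is made, compactness is a one-line consequence of the finite-dimensional Heine--Borel theorem applied to an intersection of closed sets with the unit sphere.
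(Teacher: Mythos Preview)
Your proposal is correct and follows essentially the same approach as the paper: both characterize $\D(G)$ explicitly as the intersection of the unit sphere with finitely many closed half-spaces and the row-sum-zero linear subspace, then invoke Heine--Borel. The only cosmetic difference is that the paper's description also lists the condition $D(y,z)\le 0$ whenever $G(y,z)=1$, but this is redundant (if $G(y,z)=1$ then all other entries in that row vanish, so the row-sum-zero and nonnegativity-at-zero conditions already force $D(y,z)\le 0$); your more parsimonious characterization is equivalent, and you supply the converse-direction detail that the paper leaves implicit.
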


\begin{proof}
	Let $A=\{(y,z): G_{y,z}=0\}$ and $B=\{(y,z): G_{y,z}=1\}$. It is straightforward to verify that
	\eq{\D(G) = \A\cap\B\cap\C\cap\D,}
	where
	\al{
		\A &= \bigcap_{(y,z)\in A} \left\{D\in\M_{N,(N+1)}: D_{y,z}\geq0\right\},\\
		\B &= \bigcap_{(y,z)\in B} \left\{D\in\M_{N,(N+1)}: D_{y,z}\leq0\right\},\\
		\C &= \left\{D\in\M_{N,(N+1)}: \sum_{z=1}^{N+1} D_{y,z} = 0, ~y=\repn{N}\right\}.
	}
	Observe that since  sets $\A$, $\B$, $\C$ and $\D$ are closed, so is $\D(G)$. Since $\D$ is bounded, we have that $\D(G)$ is bounded as well. In particular, $\D(G)$ is closed and bounded and thus compact.
\end{proof}

\begin{lemma}
\label{Lemma:ContinuityTao}
Let $G\in\F$ be given and define $\tau:\D(G)\to\R$ by
\eq{\tau(D) \coloneqq \sup \{t\geq0 \mid G+tD\in\F\}.}
The function $\tau$ is continuous on $\D(G)$.
\end{lemma}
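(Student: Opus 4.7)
The plan is to represent $\tau(D)$ as the minimum of finitely many simple functions of the entries of $D$, and then verify that these functions are jointly continuous on $\D(G)$ when viewed as maps into $[0,+\infty]$. First I would characterize the set $\{t\geq 0 : G+tD\in\F\}$ coordinatewise. Since $D\in\D(G)$ lies in the set $\C$ appearing in the proof of Lemma~\ref{Lemma:CompactnessDirections}, we have $\sum_z D_{y,z}=0$ for every $y$, so the row-sum condition defining $\F$ is preserved for every $t\geq 0$. The remaining condition $0\leq G_{y,z}+tD_{y,z}\leq 1$ at each coordinate translates into the bound $t\leq\tau_{y,z}(D)$, where
$$\tau_{y,z}(D)\coloneqq\begin{cases}(1-G_{y,z})/D_{y,z}, & D_{y,z}>0,\\ -G_{y,z}/D_{y,z}, & D_{y,z}<0,\\ +\infty, & D_{y,z}=0.\end{cases}$$
Consequently $\tau(D)=\min_{(y,z)}\tau_{y,z}(D)$.

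Next I would check that each $\tau_{y,z}\colon\D(G)\to[0,+\infty]$ is continuous, using the usual topology on the extended reals. Away from $\{D:D_{y,z}=0\}$, continuity is immediate from the rational formula. At a point $D$ with $D_{y,z}=0$, I would show $\tau_{y,z}(D_n)\to+\infty$ for every sequence $D_n\to D$ in $\D(G)$. When $0<G_{y,z}<1$, both branches of the formula diverge as $D_n(y,z)\to 0$. When $G_{y,z}=0$, membership of $D_n$ in the set $\A$ from the previous lemma forces $D_n(y,z)\geq 0$, so only the positive branch is ever relevant and $1/D_n(y,z)\to+\infty$; the case $G_{y,z}=1$ is symmetric, using the set $\B$.

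Finally, a pointwise minimum of finitely many continuous $[0,+\infty]$-valued functions is itself continuous into $[0,+\infty]$, so $\tau$ inherits this. The unit-norm constraint $\|D\|=1$ ensures that at least one coordinate has $D_{y,z}\neq 0$, giving a finite value for $\tau_{y,z}(D)$ at that coordinate and thus $\tau(D)<+\infty$ for every $D\in\D(G)$. Hence $\tau$ is in fact a continuous map into $\R^+$. The only delicate step is the behavior at points with $D_{y,z}=0$ and $G_{y,z}\in\{0,1\}$: here one must invoke the explicit description $\D(G)=\A\cap\B\cap\C\cap\D$ from Lemma~\ref{Lemma:CompactnessDirections} to rule out approaches along the "forbidden" sign of $D_n(y,z)$, which would otherwise admit a spurious finite one-sided limit for $\tau_{y,z}$.
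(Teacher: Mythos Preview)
Your proof is correct and takes a genuinely different route from the paper's. The paper argues by contradiction: assuming $\tau(D_n)\not\to\tau(D_0)$ for some $D_n\to D_0$, it extracts a subsequence along which $\tau(D_{n_k})\to r\neq\tau(D_0)$, observes that each endpoint $G+\tau(D_{n_k})D_{n_k}$ lies on one of the finitely many supporting hyperplanes of the polytope $\F$ that do not contain $G$, pigeonholes onto a single such hyperplane $H$, and derives a contradiction by showing that $G$ and $G+\tau(D_0)D_0$ end up on opposite sides of $H$. Your argument instead writes $\tau(D)=\min_{(y,z)}\tau_{y,z}(D)$ explicitly from the coordinate inequalities $0\le G_{y,z}+tD_{y,z}\le 1$, checks continuity of each $\tau_{y,z}$ into $[0,+\infty]$ (handling the boundary cases $G_{y,z}\in\{0,1\}$ via the sign constraints $\A,\B$ that cut out $\D(G)$), and then passes to the finite minimum. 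Your approach is more elementary and constructive, and it makes transparent exactly where the structure of $\D(G)$ enters; the paper's argument is more geometric and would adapt verbatim to any convex polytope in place of $\F$, whereas your coordinatewise formula exploits that $\F$ is specifically a product of simplices. One small remark: since the row-sum constraint is preserved, the upper bound $G_{y,z}+tD_{y,z}\le 1$ is redundant (it follows from nonnegativity of the other entries in the same row), so your positive-branch terms $\tau_{y,z}(D)=(1-G_{y,z})/D_{y,z}$ never actually realize the minimum; this does not affect correctness, only the economy of the formula.
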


\begin{proof}
Let $\text{ri}(\F)$ and $\text{rb}(\F)$ denote the relative interior and relative boundary of $\F$, respectively. In what follows, we assume that $G\in\text{rb}(\F)$. The proof for $G\in\text{ri}(\F)$ follows the same steps and the details are left to the reader. The proof of the lemma is by contradiction.  Assume that there exists a sequence $(D_n)_{n\geq0}\subset\D(G)$ such that $D_n \to D_0$ but $\tau(D_n) \not\to \tau(D_0)$ as $n\to\infty$. Since $\F$ is bounded, the sequence $(\tau(D_n))_{n\geq1}$ is necessarily bounded. Therefore, there must exist a subsequence $(D_{n_k})_{k\geq1}$ such that
\begin{equation}
\label{eq:ProofContinuityTauConvergence}
\lim_{k\to\infty} \tau(D_{n_k}) = r \neq \tau(D_0).
\end{equation}
By the maximality of $\tau(D)$, we have that $G+\tau(D)D\in\text{rb}(\F)$ for all $D\in\D(G)$. Notice that $\F$ is a convex polytope defined by the intersection of finitely many hyperplanes. In particular, $G+\tau(D)D$ belongs to one of the supporting hyperplanes of $\F$. Furthermore, the maximality of $\tau(D)$ can be used once again to show that $G+\tau(D)D$ belongs to a supporting hyperplane of $\F$ that does not contain $G$. Since there are finitely many supporting hyperplanes of $\F$, there exists a further subsequence $(D_{n'_k})_{k\geq1}$ and a hyperplane $H$ such that $G+\tau(D_{n'_k})D_{n'_k}\in H$ for all $k\geq1$ and $G\notin H$. Since $H$ and $\F$ are closed sets, we conclude that
$$\lim_{k\to\infty} G+\tau(D_{n'_k})D_{n'_k} = G+rD_0 \in H \cap \F.$$
By the maximality of $\tau(D_0)$ and \eqref{eq:ProofContinuityTauConvergence}, we have $\tau(D_0) > r$. Since $H$ is a hyperplane and $G\notin H$, it is easy to verify that
\begin{equation}
\label{eq:ProofContinuitySidesH}
\{G+tD_0 : t\in[0,\tau(D_0)]\} \cap H = \{G+rD_0\}.
\end{equation}
In particular, \eqref{eq:ProofContinuitySidesH} implies that $G$ and $G+\tau(D_0)D_0$ are on opposite sides of $H$.  Since $G\in\F$ and $H$ is a supporting hyperplane of $\F$, we conclude that $G+\tau(D_0)D_0\notin\F$. This contradicts the fact that $G+\tau(D)D\in\text{rb}(\F)\subset\F$ for all $D\in\D(G)$.
\end{proof}

 The following lemma shows the local linear nature of the mapping $\H$. Let $[G_1,G_2]=\{\lambda G_1+ (1-\lambda) G_2 : \lambda\in[0,1]\}$.
\begin{lemma}
\label{Lemma:LocalLinearity}
For every $G\in\F$, there exists $\delta>0$ such that $F\mapsto \H(F)$ is linear on $[G,G+\delta D]$ for every $D\in\D(G)$.
\end{lemma}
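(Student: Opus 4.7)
The strategy is to exploit the piecewise-linear structure of $\P$ and $\U$ encoded in \eqref{eq:DefPU}. Writing $\ell_{x,z}(F):=\sum_{y} P(x,y)F(y,z)$ and $\mu_{y,z}(F):=q(y)F(y,z)$, both are linear functionals on $\M_{N\times(N+1)}$, and
$$\P(F)=\sum_{z=1}^{N+1}\max_{x}\ell_{x,z}(F),\qquad \U(F)=\sum_{z=1}^{N+1}\max_{y}\mu_{y,z}(F).$$
Hence $\H$ is a sum of pointwise maxima of linear maps and is linear on any region of $\F$ where the set of maximizers in each $\max_x\ell_{x,z}$ and $\max_y\mu_{y,z}$ is unchanged. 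The plan is to show that, starting from $G$, the active maximizers stay unchanged for steps of uniformly positive length in every direction $D\in\D(G)$, and then to combine this with Lemmas~\ref{Lemma:CompactnessDirections} and~\ref{Lemma:ContinuityTao} to keep the step inside $\F$.

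Fix $G\in\F$, and for each $z$ set $A_z:=\argmax_{x}\ell_{x,z}(G)$ and $B_z:=\argmax_{y}\mu_{y,z}(G)$. For any $D\in\D(G)$ and $t\geq 0$, linearity gives $\ell_{x,z}(G+tD)=\ell_{x,z}(G)+t\,\ell_{x,z}(D)$. Whenever $x\in A_z$ and $x'\notin A_z$, the gap $\Delta_{z}(x,x'):=\ell_{x,z}(G)-\ell_{x',z}(G)$ is strictly positive, and $\ell_{x,z}(G+tD)\geq\ell_{x',z}(G+tD)$ is equivalent to $t(\ell_{x',z}(D)-\ell_{x,z}(D))\leq\Delta_{z}(x,x')$. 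Since $\|D\|=1$ and $P(x,y)\in[0,1]$, Cauchy--Schwarz yields a uniform bound $|\ell_{x',z}(D)-\ell_{x,z}(D)|\leq C$ with $C$ independent of $D$. Define
$$\Delta_{\min}:=\min\{\Delta_{z}(x,x'):z\in\{1,\ldots,N+1\},\,x\in A_z,\,x'\notin A_z\},$$
a strictly positive minimum over finitely many strictly positive numbers (or $+\infty$ if the index set is empty). Taking $\delta':=\Delta_{\min}/C$, for every $t\in[0,\delta']$ and every $D\in\D(G)$ the maximum in $x$ of $\ell_{x,z}(G+tD)$ is attained inside $A_z$. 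Because $\ell_{x,z}(G)=:M_z$ is constant over $x\in A_z$, one obtains
$$\max_{x}\ell_{x,z}(G+tD)=M_z+t\max_{x\in A_z}\ell_{x,z}(D),$$
which is linear in $t$; summing over $z$ shows that $t\mapsto\P(G+tD)$ is linear on $[0,\delta']$. The identical argument applied to $\mu_{y,z}$ and $B_z$ yields a $\delta''>0$ such that $t\mapsto\U(G+tD)$ is linear on $[0,\delta'']$.

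It remains to ensure $G+tD\in\F$. By Lemma~\ref{Lemma:CompactnessDirections} the set $\D(G)$ is compact, and by Lemma~\ref{Lemma:ContinuityTao} the map $\tau$ is continuous on $\D(G)$ and, by definition of $\D(G)$, strictly positive there; hence $\tau_{\min}:=\min_{D\in\D(G)}\tau(D)>0$. Setting $\delta:=\min\{\delta',\delta'',\tau_{\min}\}$, the segment $[G,G+\delta D]$ lies in $\F$ and both $\P$ and $\U$, and therefore $\H$, are linear on it for every $D\in\D(G)$.

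\textbf{Main obstacle.} The delicate point is the uniformity of $\delta$ in the direction $D$. The argmax-stability threshold is naturally direction-dependent, but the Euclidean normalization $\|D\|=1$ combined with the bounded coefficients of $P$ and $q$ provides a uniform Lipschitz control on the directional derivatives $\ell_{x,z}(D)$ and $\mu_{y,z}(D)$, rendering $\delta'$ and $\delta''$ direction-free. Compactness of $\D(G)$ and continuity of $\tau$ then handle the feasibility bound. Without such uniformity, one would only obtain local linearity ray by ray, which would not close the argument toward the piecewise-linearity of $\mathcalboondox{h}$ in Theorem~\ref{Thm:PiecewiseLinearity}.
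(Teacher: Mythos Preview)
Your proof is correct and follows the same overall strategy as the paper: write $\P$ and $\U$ as sums of pointwise maxima of linear functionals, show the active argmax sets at $G$ remain maximizing for small steps in any direction, and then cap the step inside $\F$ via compactness of $\D(G)$ and continuity of $\tau$.

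The one noteworthy difference is how you obtain the direction-uniform linearity threshold. The paper defines for each $D$ the first crossing time $t^{(D)}$ between the active and inactive affine pieces, proves that $D\mapsto t^{(D)}$ is continuous, and then uses compactness of $\D(G)$ to extract $\delta_\P=\min_D t^{(D)}>0$. You bypass this continuity step by exploiting the normalization $\|D\|=1$ together with a Cauchy--Schwarz bound to get a single constant $C$ dominating all directional slopes $|\ell_{x,z}(D)-\ell_{x',z}(D)|$, so that $\delta'=\Delta_{\min}/C$ is manifestly direction-free. Both routes are valid and lead to the same linear expansions $\P(G+tD)=\P(G)+t\sum_z\max_{x\in A_z}\ell_{x,z}(D)$ and the analogous formula for $\U$, which are exactly the quantities $b^{(D)}$ and $\beta^{(D)}$ the paper uses downstream; your derivation is somewhat more elementary, while the paper's is slightly sharper in that $t^{(D)}$ tracks the actual crossing times rather than a worst-case Lipschitz bound.
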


\begin{proof}
Let $P=[P(x,y)]_{x\in\X,y\in\Y}$ be the joint probability matrix of $X$ and $Y$, and $Q$ the diagonal matrix with $q_1,\ldots,q_N$ as diagonal entries where $q_y=\Pr(Y=y)$ for $y\in \Y$. For $G\in\F$ let $\tau:\D(G)\to\R$ be as defined in Lemma~\ref{Lemma:ContinuityTao}. The definition of $\D(G)$ clearly implies that $\tau(D)>0$ for all $D\in\D(G)$. For $x\in\X$, $z\in\Z$, and $D\in\D(G)$, consider the function $f_{x,z}^{(D)}:\R\to\R$ given by
\begin{equation}
\label{Def_AZX_BXZ}
f_{x,z}^{(D)}(t) \coloneqq [PG](x,z) + t[PD](x,z),
\end{equation}
where $PG$ (resp., $PD$) is the product of matrices $P$ and $G$  (resp., $P$ and $D$). 
Note that \dsty{\P(G+tD) = \sum_{z\in\Z} \max_{x\in\X} f_{x,z}^{(D)}(t)} for all $t\in[0,\tau(D)]$ (see \eqref{eq:DefPU}). Let
\begin{equation}
\begin{aligned}
a_z &= \max_{x\in\X} [PG](x,z), \\
\M_z&=\{x\in\X : [PG](x,z) = a_z\}, ~\text{and}\\
b_z^{(D)}&=\max_{x\in\M_z} [PD](x,z).
\end{aligned}
\label{Eq:A_M_B}
\end{equation}
Let \dsty{t_{x,z}^{(D)} \coloneqq -\frac{a_z-[PG](x,z)}{b_z^{(D)}-[PD](x,z)}} whenever $[PD](x,z) \neq b_z^{(D)}$, and $t_{x,z}^{(D)}=\infty$ otherwise. Notice that $f_{x,z}^{(D)}(t_{x,z}^{(D)}) = a_z+t_{x,z}^{(D)}b_z^{(D)}$. Since $t_{x,z}^{(D)}\neq0$ for all $x\notin\M_z$,
\eq{t^{(D)} \coloneqq \min_{z\in\Z} \min_{x\notin\M_z} \min\{|t_{x,z}^{(D)}|,\tau(D)\} > 0.}
It is easy to see that \dsty{a_z + tb_z^{(D)} = \max_{x\in\X} f_{x,z}^{(D)}(t)} for all $t\in[0,t^{(D)}]$. In particular,
\begin{eqnarray}
\P(G+tD) &=&\sum_{z=1}^{N+1} \max_{x\in\X} f_{x,z}^{(D)}(t)=\sum_{z=1}^{N+1} a_z +t\sum_{z=1}^{N+1} b_z^{(D)} \nonumber\\
&=&\P(G)+tb^{(D)}, \label{eq:PrivacyMasterEquation}
\end{eqnarray}
for every $D\in\D(G)$ and $t\in[0,t^{(D)}]$, where $b^{(D)}\coloneqq \sum_{z=1}^{N+1} b_z^{(D)}$. Consequently, $\P$ is linear on $[G,G+t^{(D)} D]$.
By Lemma~\ref{Lemma:ContinuityTao}, $\tau:\D(G)\to\R$ is continuous and bounded. Hence, the map $D \mapsto \min\{|t_{x,z}^{(D)}|,\tau(D)\}$ ($x\notin\M_z$) is also continuous. In particular, the map $D\mapsto t^{(D)}$ is continuous. By compactness of $\D(G)$ established in Lemma~\ref{Lemma:CompactnessDirections}, we conclude that \dsty{\delta_\P \coloneqq \min_{D\in\D(G)} t^{(D)}>0}. Thus, $\P$ is linear on $[G,G+\delta_\P D]$ for every $D\in\D(G)$.

\begin{figure}[t!]
\centering
\begin{tikzpicture}[thick, scale=1]
\begin{axis}[xmin=0,xmax=1,ymin=0,ymax=2,xtick=\empty, ytick=\empty,  y label style={at={(axis description cs:0.08,0.55)},rotate=-90,anchor=south},
  samples=100, x label style={at={(axis description cs:0.95,0)},rotate=0,anchor=south},
  samples=100,
  grid=both,xlabel= $t$, ylabel=$f^{(D)}_{x, z}$,
  no markers]
\addplot[domain=0:0.65][no marks,line width=0.7pt,samples=600] {0.5*x+1};
\addplot[domain=0:0.65][no marks,line width=0.7pt,samples=600] {1.1*x+0.5};
\addplot[domain=0:0.65][no marks,line width=0.7pt,samples=600] {3*x+0.08};
\addplot[domain=0:0.65][no marks,line width=0.7pt,samples=600] {1.1*x+1};
\addplot[mark=none, dotted, line width=0.5pt] coordinates {(0.65,0) (0.65,2)};
\addplot[mark=none, dotted, line width=0.5pt] coordinates {(0.4842,0) (0.4842,2)};
\end{axis}
\node[below] at (4.4, 0) {\footnotesize{$\tau(D)$}};
\node[below] at (3.27, 0.06) {\footnotesize{$t^{(D)}_{1, z}$}};
\node[below] at (1.2, 1.65) {\footnotesize{$f_{1, z}^{(D)}$}};
\node[below] at (1.2, 2.62) {\footnotesize{$f_{2, z}^{(D)}$}};
\node[below] at (1.2, 3.2) {\footnotesize{$f_{3, z}^{(D)}$}};
\node[below] at (1.2, 4.06) {\footnotesize{$f_{4, z}^{(D)}$}};
\node[below] at (0,0) {\footnotesize{$0$}};
\end{tikzpicture}
\caption{Typical functions $f^{(D)}_{x, z}$ ($x=\{1, 2, 3, 4\}$) for a given $z\in\Z$ and $D\in\D(G)$. In this example, we have $\M_{z}=\{3, 4\}$ and $a_z+t b_z^{(D)} = f_{4,z}^{(D)}(t)$. Notice that $t_{2, z}^{(D)}=\infty$ and $t_{3, z}^{(D)}=t_{4, z}^{(D)}=0$.}
\label{Fig:Typical_function}
\end{figure}

For $y\in\Y$, $z\in\Z$, and $D\in\D(G)$, consider the function $g_{y,z}^{(D)}:\R\to\R$ given by
\eq{g_{y,z}^{(D)}(t) = [QG](y,z) + t [QD](y,z).}
Observe that \dsty{\U(G+tD)=\sum_{z\in\Z} \max_{y\in\Y} g_{y,z}^{(D)}(t)} for all $t\in[0,\tau(D)]$ (see \eqref{eq:DefPU}). Similarly to \eqref{Eq:A_M_B}, let
\begin{equation*}
\begin{aligned}
\alpha_z &= \max_{y\in\Y} [QG](y,z), \\
\N_z&=\{y\in\Y : [QG](y,z) = \alpha_z\}, ~ \text{and}\\
\beta_z^{(D)}&=\max_{y\in\N_z} [QD](y,z).
\end{aligned}
\end{equation*}
Using a similar argument that resulted in \eqref{eq:PrivacyMasterEquation}, it can be shown that there exists $\delta_U>0$ such that
\begin{eqnarray}
\U(G+tD) &=& \sum_{z=1}^{N+1} g_{y_z,z}^{(D)}(t) = \sum_{z=1}^{N+1} \alpha_z  +t\sum_{z=1}^{N+1} \beta_{z}^{(D)}\nonumber\\
&=&\U(G)+t\beta^{(D)}, \label{eq:UtilityMasterEquation}
\end{eqnarray}
for every $D\in\D(G)$ and $t\in[0,\delta_\U]$,  where $\beta^{(D)}\coloneqq \sum_{z=1}^{N+1} \beta_{z}^{(D)}$. Consequently, $\U$ is linear on $[G,G+\delta_\U D]$ for every $D\in\D(G)$. Therefore, $F\mapsto\H(F)=(\P(F),\U(F))$ is linear on $[G,G+\delta D]$ for every $D\in\D(G)$, where $\delta = \min(\delta_\P,\delta_\U)$.
\end{proof}

We say that a filter $F\in\F$ is optimal if $\U(F) = \mathcalboondox{h}(\P(F))$. If $F$ is an optimal filter and $\P(F)=\eps$, we say that $F$ is optimal at $\eps$. The following result is a straightforward application of the concavity of $\mathcalboondox{h}$, and thus its proof is omitted.

\begin{lemma}
\label{Lemma:ThreeAligned}
For $G\in\F$, let $\delta>0$ be as in Lemma \ref{Lemma:LocalLinearity}. If there exist $D\in\D(G)$ and $0<t_1<t_2\leq\delta$ such that $G$, $G+t_1D$ and $G+t_2D$ are optimal filters, then $G+tD$ is an optimal filter for each $t\in[0,\delta]$.
\end{lemma}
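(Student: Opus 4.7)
The plan is to combine the local linearity of the map $F\mapsto\H(F)$ from Lemma \ref{Lemma:LocalLinearity} with the already-established concavity of $\mathcalboondox{h}$. By Lemma \ref{Lemma:LocalLinearity}, on the segment $[G,G+\delta D]$ both $\P$ and $\U$ are affine in $t$: write $\P(G+tD)=p_0+bt$ and $\U(G+tD)=u_0+\beta t$ for constants $b,\beta\in\R$. Thus the curve $t\mapsto(\P(G+tD),\U(G+tD))$ traces a line segment $L\subset\R^2$, and the three hypothesised optimal filters correspond to three points of $L$ sitting on the graph of $\mathcalboondox{h}$.

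I would first dispose of the degenerate case $b=0$: then all three optimal filters share the privacy value $p_0$ and hence share the utility $\mathcalboondox{h}(p_0)$, which forces $\beta=0$ as well, so every $G+tD$ with $t\in[0,\delta]$ is automatically optimal.

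Assuming $b\neq0$, the three privacy values $p_0$, $p_0+bt_1$, $p_0+bt_2$ are distinct, and the three corresponding points lie collinearly on the graph of the concave function $\mathcalboondox{h}$. A standard three-chord (non-increasing secant slope) argument for concave functions then pins $\mathcalboondox{h}$ to the line of slope $m\coloneqq\beta/b$ on the entire closed interval $I$ spanned by these three privacy values: indeed, a non-increasing function that agrees at two points is constant between them, applied to the secant slope from $p_0$. This immediately yields optimality of $G+tD$ for every $t$ whose privacy value falls in $I$.

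The main (and essentially the only non-routine) remaining step is to handle the values of $t\in[0,\delta]$ whose image under $\P$ lies outside $I$, namely $t\in(t_2,\delta]$ when $b>0$ and symmetrically when $b<0$. Here I would use concavity of $\mathcalboondox{h}$ once more: affinity on $I$ with slope $m$ together with concavity yields the one-sided chord bound $\mathcalboondox{h}(\eps)\leq u_0+m(\eps-p_0)$ for $\eps$ on the relevant side beyond $I$. Evaluating at $\eps=\P(G+tD)$ gives $\mathcalboondox{h}(\P(G+tD))\leq u_0+\beta t=\U(G+tD)$, while the reverse inequality $\U(G+tD)\leq\mathcalboondox{h}(\P(G+tD))$ is automatic since $G+tD\in\F$. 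Equality closes the argument. The delicate ingredient is this one-sided concavity-extension bound; everything else is bookkeeping.
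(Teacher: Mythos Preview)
Your argument is correct and is precisely the ``straightforward application of the concavity of $\mathcalboondox{h}$'' that the paper alludes to: the paper in fact omits the proof of this lemma entirely, so your write-up fills in exactly the details the authors left implicit. The local linearity from Lemma~\ref{Lemma:LocalLinearity} together with concavity is the intended mechanism, and your handling of both the degenerate case $b=0$ and the extension to $t\in(t_2,\delta]$ via the one-sided secant bound is sound (note that $G+tD\in\F$ on $[0,\delta]$ since $\delta\leq\tau(D)$ by construction).
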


A function $[\cP(X),\cP(X|Y)]\ni\eps \mapsto F_\eps\in\F$ is called a \textit{path} of optimal filters if $\P(F_\eps) = \eps$ and $\U(F_\eps)=\mathcalboondox{h}(\eps)$ for every $\eps\in[\cP(X),\cP(X|Y)]$. As mentioned in Section \ref{Subsection:GeometricalFormulation}, for every $\eps$ there exists $F_\eps$ such that $\P(F_\eps)=\eps$ and $\U(F_\eps)=\mathcalboondox{h}(\eps)$, i.e., a path of optimal filters always exists. In the rest of this section we establish the existence of a \textit{piecewise} linear path of optimal filters.
\begin{lemma}
\label{lemma:existence_Optimal_Filter_UP}
For every $\eps\in[\cP(X),\cP(X|Y))$, there exists $F_\eps\in\F$ and $D\in\D(F_\eps)$ such that $F_\eps$ is an optimal filter at $\eps$, $\P(F_\eps+\delta D)>\eps$, and $F_\eps+t D$ is an optimal filter for each $t\in[0,\delta]$ with $\delta>0$ as in Lemma \ref{Lemma:LocalLinearity} for $F_\eps$.
\end{lemma}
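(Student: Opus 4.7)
My plan is to construct $F_\eps$ and the direction $D$ by taking limits along a sequence of optimal filters $F_n$ at privacy levels $\eps_n\downarrow\eps$, and then to combine Lemma~\ref{Lemma:LocalLinearity} with the concavity of $\mathcalboondox{h}$ to deduce optimality of the entire segment $F_\eps+[0,\delta]D$. Geometrically, this produces a linear piece of the optimal path starting at $F_\eps$ and moving strictly into the region of higher privacy.

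To make the construction, for each $n$ I pick $\eps_n\in(\eps,\cP(X|Y))$ with $\eps_n\downarrow\eps$ and choose $F_n\in\F$ with $\P(F_n)=\eps_n$ and $\U(F_n)=\mathcalboondox{h}(\eps_n)$, whose existence is noted in Section~\ref{Subsection:GeometricalFormulation}. Compactness of $\F$ lets me pass to a subsequence with $F_n\to F_\eps$, and the continuity of $\P$, $\U$, and $\mathcalboondox{h}$ then gives $\P(F_\eps)=\eps$ and $\U(F_\eps)=\mathcalboondox{h}(\eps)$, so $F_\eps$ is optimal at $\eps$. Setting $t_n\coloneqq\|F_n-F_\eps\|$ (discarding trivial terms with $t_n=0$) and $D_n\coloneqq(F_n-F_\eps)/t_n$, convexity of $\F$ gives $D_n\in\D(F_\eps)$; the compactness of $\D(F_\eps)$ from Lemma~\ref{Lemma:CompactnessDirections} then allows a further subsequence with $D_n\to D\in\D(F_\eps)$.

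Let $\delta>0$ be the radius of local linearity furnished by Lemma~\ref{Lemma:LocalLinearity} at $G=F_\eps$, and write $b^{(D')}$ and $\beta^{(D')}$ for the directional derivatives of $\P$ and $\U$ at $F_\eps$, in the notation of the proof of that lemma. These are continuous functions of $D'$, being sums of maxima of finitely many linear functionals of $D'$. For $n$ so large that $t_n<\delta$, the point $F_n=F_\eps+t_nD_n$ lies on the linear segment starting at $F_\eps$ along $D_n$, forcing
\[
b^{(D_n)}=\frac{\eps_n-\eps}{t_n}\quad\text{and}\quad \beta^{(D_n)}=\frac{\mathcalboondox{h}(\eps_n)-\mathcalboondox{h}(\eps)}{t_n}.
\]
Consequently
\[
\frac{\beta^{(D_n)}}{b^{(D_n)}}=\frac{\mathcalboondox{h}(\eps_n)-\mathcalboondox{h}(\eps)}{\eps_n-\eps}\longrightarrow\mathcalboondox{h}'_+(\eps)>0,
\]
the positivity of the right derivative following from the strict monotonicity and concavity of $\mathcalboondox{h}$ on $[\cP(X),\cP(X|Y)]$ combined with $\eps<\cP(X|Y)$.

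The main obstacle is to ensure $b^{(D)}>0$ in the limit: one immediately gets $b^{(D)}\geq 0$ from $b^{(D_n)}>0$, but the ratio could collapse to $0$ if $t_n/(\eps_n-\eps)\to\infty$. I would prevent this by a careful choice of $F_n$: fix any optimal $\widetilde F_\eps$ at $\eps$ and choose $F_n$ to minimize $\|F-\widetilde F_\eps\|$ over the nonempty compact set of optimal filters at $\eps_n$. Because $\F$ is a convex polytope on which $\P$ and $\U$ are piecewise linear, the optimal-filter correspondence $\eps'\mapsto\F^{\ast}_{\eps'}$ is Lipschitz in Hausdorff distance near $\eps$; this yields $t_n=O(\eps_n-\eps)$ and hence a uniform lower bound on $b^{(D_n)}$, giving $b^{(D)}>0$. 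Once this is secured, continuity of $b^{(\cdot)}$ and $\beta^{(\cdot)}$ together with the ratio computed above yield $\beta^{(D)}=\mathcalboondox{h}'_+(\eps)\,b^{(D)}$. Hence $\P(F_\eps+\delta D)=\eps+\delta b^{(D)}>\eps$, and for every $t\in[0,\delta]$,
\[
\U(F_\eps+tD)=\mathcalboondox{h}(\eps)+\mathcalboondox{h}'_+(\eps)\bigl(\P(F_\eps+tD)-\eps\bigr),
\]
which is the value at $\P(F_\eps+tD)$ of the right tangent of $\mathcalboondox{h}$ at $\eps$. Since $\mathcalboondox{h}$ is concave, this tangent lies above $\mathcalboondox{h}$, giving $\U(F_\eps+tD)\geq\mathcalboondox{h}(\P(F_\eps+tD))$; combined with the defining inequality $\U\leq\mathcalboondox{h}\circ\P$, equality holds and $F_\eps+tD$ is optimal for every $t\in[0,\delta]$.
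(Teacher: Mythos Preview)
Your overall plan---construct $F_\eps$ as a subsequential limit of optimal filters at levels $\eps_n\downarrow\eps$, extract a direction $D$, and then use local linearity plus concavity of $\mathcalboondox{h}$---is sound and is in fact how the paper begins. The tangent-line argument you give at the end is also correct once $b^{(D)}>0$ is secured.

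The genuine gap is the sentence ``Because $\F$ is a convex polytope on which $\P$ and $\U$ are piecewise linear, the optimal-filter correspondence $\eps'\mapsto\F^{\ast}_{\eps'}$ is Lipschitz in Hausdorff distance near $\eps$.'' You assert this to force $t_n=O(\eps_n-\eps)$ and hence $b^{(D)}>0$, but it is not a standard fact: here $\P$ and $\U$ are piecewise linear \emph{convex} (maxima of linear forms), not linear, so the usual parametric LP stability results do not apply directly. Moreover, any argument establishing this Lipschitz property would essentially already yield the local linear structure of the optimal path that the lemma is meant to provide, so the step is close to circular. Without it, you only get $b^{(D)}\geq 0$, and the degenerate case $b^{(D)}=0$ (equivalently $t_n/(\eps_n-\eps)\to\infty$, the limiting direction tangent to the level set $\{\P=\eps\}$) is not ruled out.

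The paper sidesteps this difficulty entirely. Instead of taking $D$ as a limit direction and then arguing $b^{(D)}>0$, it uses a \emph{double} sequence of optimal filters $G_{n,m}$ at levels $\eps+\frac{1}{n}+\frac{1}{m}$, obtains $F_\eps$ as an iterated limit, and then works with the concrete direction $D_1$ pointing from $F_\eps$ to a nearby optimal filter $G_{n_1}=F_\eps+t_1D_1$; since $\P(G_{n_1})>\eps$, one has $b^{(D_1)}>0$ for free. The second layer of the double sequence (the $G_{n_1,m_k}\to G_{n_1}$) is then used, via a swap argument, to produce a \emph{second} optimal filter $F_\eps+s_1D_1$ on the same ray with $s_1\neq t_1$. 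At that point three collinear optimal filters ($F_\eps$, $F_\eps+t_1D_1$, $F_\eps+s_1D_1$) are in hand, and Lemma~\ref{Lemma:ThreeAligned} (which you did not invoke) yields optimality on the full segment $[0,\delta]$. This replaces your unproved Lipschitz step by an explicit combinatorial device.
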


\begin{proof}
Let $K=2(\cP(X|Y)-\eps)^{-1}$. For every $n,m>K$, let $G_{n,m}$ be an optimal filter at $\eps+\frac{1}{n}+\frac{1}{m}$. For every $n>K$, the set $\{G_{n,m} : m>K\}$ is an infinite set. Since $\F$ is compact, $\{G_{n,m} : m>K\}$ has at least one accumulation point, say $G_n$. Let $(G_{n,m_k})_{k\geq1} \subset \{G_{n,m} : m>K\}$ be a subsequence with $\lim_k G_{n,m_k} = G_n$. By continuity of $\P$, $\U$, and $\mathcalboondox{h}$, we have that
\al{
\P(G_n) &= \lim_{k\to\infty} \P(G_{n,m_k}) = \eps+\frac{1}{n},\\
\U(G_n) &= \lim_{k\to\infty} \U(G_{n,m_k}) = \lim_{k\to\infty} \mathcalboondox{h}(\P(G_{n,m_k})) = \mathcalboondox{h}(\P(G_n)),
}
i.e., $G_n$ is an optimal filter at $\eps+\frac{1}{n}$. By the same arguments as before, the set $\left\{G_n : n>K\right\}$ has at least one accumulation point, say $F_\eps$, and this accumulation point is an optimal filter at $\eps$. Let $\delta>0$ be as in Lemma \ref{Lemma:LocalLinearity} for $F_\eps$. By construction of $F_\eps$, there exists $n_1>K$ such that $\|G_{n_1}-F_\eps\| < \frac{\delta}{2}$. The filter $G_{n_1}$ can be written as $G_{n_1}=F_\eps+t_1D_1$ with $t_1\in(0,\frac{\delta}{2})$ and $D_1\in\D(F_\eps)$. Recall that, by \eqref{eq:PrivacyMasterEquation} and \eqref{eq:UtilityMasterEquation}, for every $D\in\D(F_\eps)$ and $t\in[0,\delta]$,
\eq{\P(F_\eps+tD) = \eps + t b^{(D)} \quad \textnormal{ and } \quad \U(F_\eps+tD) = \mathcalboondox{h}(\eps) + t \beta^{(D)}.}
Notice that the maps $D \mapsto b^{(D)}$ and $D \mapsto \beta^{(D)}$ are continuous. Since $\P(G_{n_1}) = \eps +\frac{1}{n_1} > \eps$, we conclude that $b^{(D_1)}>0$ and, in particular, $\P(F_\eps+\delta D_1) > \eps$.

Let $(G_{n_1,m_k})_{k\geq1}\subset\{G_{n_1,m} : m>K\}$ be such that $\lim_k G_{n_1,m_k} = G_{n_1}$. For $k$ large enough, we can write $G_{n_1,m_k} = F_\eps + \theta_k E_k$ with $\theta_k\in[0,\delta]$ and $E_k\in\D(F_\eps)$. Since $\theta_k\to t_1$ and $E_k\to D_1$ as $k\to\infty$, there exists $n_2>K$ such that $\theta_{n_2}<\frac{\delta}{2}$ and $|b^{(E_{n_2})}-b^{(D_1)}|<\frac{b^{(D_1)}}{2}$. Let $t_2\coloneqq\theta_{n_2}$ and $D_2\coloneqq E_{n_2}$. Clearly, $t_2<\frac{\delta}{2}$ and $\frac{1}{2} b^{(D_1)} < b^{(D_2)} < 2b^{(D_1)}$. These inequalities yield $\P(F_\eps+\delta D_1) > \P(F_\eps+t_2D_2)$ and $\P(F_\eps+\delta D_2) > \P(F_\eps+t_1D_1)$. Thus, there exist $s_1,s_2\in[0,\delta]$ such that $\P(F_\eps+t_2D_2)=\P(F_\eps+s_1D_1)$ and $\P(F_\eps+t_1D_1)=\P(F_\eps+s_2D_2)$. In particular,
\eqn{eq:Swap}{\eps+t_2b^{(D_2)} = \eps+s_1b^{(D_1)} \quad \textnormal{ and } \quad \eps+t_1b^{(D_1)} = \eps+s_2b^{(D_2)}.}
By the optimality of $G_{n_1}=F_\eps+t_1D_1$ and $G_{n_1,m_{n_2}} = F_\eps+t_2D_2$,
\begin{eqnarray*}
\U(F_\eps+t_2D_2) &=& \mathcalboondox{h}(\eps) + t_2 \beta^{(D_2)} \\
&\geq& \mathcalboondox{h}(\eps) + s_1 \beta^{(D_1)} =\U(F_\eps+s_1D_1),
\end{eqnarray*}	
and 
\begin{eqnarray*}
\U(F_\eps+t_1D_1) &=& \mathcalboondox{h}(\eps) + t_1 \beta^{(D_1)} \\
&\geq& \mathcalboondox{h}(\eps) + s_2 \beta^{(D_2)} = \U(F_\eps+s_2D_2).
\end{eqnarray*}	
By the equations in \eqref{eq:Swap}, the above inequalities are in fact equalities. In particular, $F_\eps$, $F_\eps+t_1D_1$ and $F_\eps+s_1D_1$ are optimal filters. Invoking Lemma \ref{Lemma:ThreeAligned}, we conclude that $F_\eps+t D_1$ is an optimal filter for all $t\in[0,\delta]$.
\end{proof}

Using an analogous proof, we can also prove the following lemma.

\begin{lemma}
\label{lemma:existence_Optimal_Filter_Down}
For every $\eps\in(\cP(X),\cP(X|Y)]$, there exists $F_\eps\in\F$ and $D\in\D(F_\eps)$ such that $F_\eps$ is an optimal filter at $\eps$, $\P(F_\eps+\delta D)<\eps$, and $F_\eps+t D$ is an optimal filter for each $t\in[0,\delta]$ with $\delta>0$ as in Lemma \ref{Lemma:LocalLinearity} for $F_\eps$.
\end{lemma}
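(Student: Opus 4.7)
The plan is to mirror the proof of Lemma \ref{lemma:existence_Optimal_Filter_UP} essentially verbatim, only changing the sign of the approach: the sequences of optimal filters converging to the target $F_\eps$ are now taken at parameters approaching $\eps$ from \emph{above} rather than from below. The key consequence of this sign flip is that the direction $D$ we ultimately extract will satisfy $b^{(D)}<0$, so that $\P(F_\eps + \delta D)<\eps$, which is precisely the conclusion required.

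Concretely, for $\eps\in(\cP(X),\cP(X|Y)]$ I would set $K=2(\eps-\cP(X))^{-1}$ so that $\eps-\tfrac{1}{n}-\tfrac{1}{m}\in[\cP(X),\cP(X|Y)]$ for all $n,m>K$, pick an optimal filter $G_{n,m}$ at this parameter, and, using compactness of $\F$ together with continuity of $\P$, $\U$, and $\mathcalboondox{h}$, extract a diagonal accumulation point $F_\eps$ that is optimal at $\eps$ (following exactly the two-stage extraction in the UP proof). Let $\delta>0$ be given by Lemma \ref{Lemma:LocalLinearity} at $F_\eps$, choose $n_1>K$ with $\|G_{n_1}-F_\eps\|<\delta/2$, and write $G_{n_1}=F_\eps+t_1 D_1$ with $t_1\in(0,\delta/2)$ and $D_1\in\D(F_\eps)$. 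By \eqref{eq:PrivacyMasterEquation}, $\P(G_{n_1})=\eps+t_1 b^{(D_1)}=\eps-\tfrac{1}{n_1}$, forcing $b^{(D_1)}<0$; hence $\P(F_\eps+\delta D_1)<\eps$ as needed.

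It remains to promote $F_\eps+tD_1$ to an optimal filter for every $t\in[0,\delta]$. Exactly as in the UP argument, pick $G_{n_1,m_k}\to G_{n_1}$ and write $G_{n_1,m_{n_2}}=F_\eps+t_2 D_2$ with $t_2\in(0,\delta/2)$ and $D_2\in\D(F_\eps)$ chosen close enough to $D_1$ that $\tfrac{1}{2}|b^{(D_1)}|<|b^{(D_2)}|<2|b^{(D_1)}|$, which is possible by continuity of $D\mapsto b^{(D)}$ on the compact set $\D(F_\eps)$ from Lemma \ref{Lemma:CompactnessDirections}. Since both slopes are now \emph{negative} and of comparable magnitude, there exist $s_1,s_2\in[0,\delta]$ with $t_2 b^{(D_2)}=s_1 b^{(D_1)}$ and $t_1 b^{(D_1)}=s_2 b^{(D_2)}$. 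The optimality of $G_{n_1}$ and $G_{n_1,m_{n_2}}$ combined with the affine expression \eqref{eq:UtilityMasterEquation} then forces both of the associated utility inequalities to be equalities, so $F_\eps$, $F_\eps+t_1 D_1$, and $F_\eps+s_1 D_1$ are all optimal. Lemma \ref{Lemma:ThreeAligned} finally extends optimality to the entire segment $\{F_\eps+tD_1:t\in[0,\delta]\}$.

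The main obstacle is the same as in the UP version: producing a second nearby filter whose direction $D_2$ is genuinely distinct from $D_1$ (otherwise the pairs $(t_1,s_2)$ and $(t_2,s_1)$ collapse and no new information is obtained) yet close enough to guarantee $b^{(D_2)}$ shares the sign and order of magnitude of $b^{(D_1)}$. Because $G_{n_1}$ is itself an accumulation point of the doubly-indexed family $\{G_{n_1,m}\}$, we can take $D_2$ arbitrarily close to $D_1$ while keeping $\P(F_\eps+t_2 D_2)=\eps-\tfrac{1}{n_1}-\tfrac{1}{m_{n_2}}\neq \eps-\tfrac{1}{n_1}=\P(F_\eps+t_1 D_1)$, which is what makes the swap nontrivial and lets Lemma \ref{Lemma:ThreeAligned} apply. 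No other change relative to the UP argument is needed, since the convexity-based inequalities used to equate $\U$-values do not depend on the sign of the slopes, only on the affine structure along each ray.
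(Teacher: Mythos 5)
Your proposal is correct and matches the paper's own treatment: the paper proves Lemma~\ref{lemma:existence_Optimal_Filter_Down} simply by declaring it "analogous" to Lemma~\ref{lemma:existence_Optimal_Filter_UP}, and your sign-flipped mirroring (approaching $\eps$ from above so that $b^{(D_1)}<0$, then the same swap of $s_1,s_2$ and appeal to Lemmas~\ref{Lemma:CompactnessDirections}, \ref{Lemma:LocalLinearity}, and \ref{Lemma:ThreeAligned}) is exactly that analogous argument, including the key observation that the distinct privacy levels force $s_1\neq t_1$ so Lemma~\ref{Lemma:ThreeAligned} applies.
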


We are in position to prove Theorem \ref{Thm:PiecewiseLinearity}.

\begin{proof}[Proof of Theorem \ref{Thm:PiecewiseLinearity}]
For notational simplicity, we define $S\coloneqq\cP(X)$ and $T\coloneqq\cP(X|Y)$. In light of Lemmas~\ref{lemma:existence_Optimal_Filter_UP} and \ref{lemma:existence_Optimal_Filter_Down}, for every $\eps\in(S,T)$ there exist optimal filters $F_\eps$ and $G_\eps$ at $\eps$, $\delta_\eps>0$, $D_\eps\in\D(F_\eps)$, and $E_\eps\in\D(G_\eps)$ such that $F_\eps+t D_\eps$ and $G_\eps+t E_\eps$ are optimal filters for each $t\in[0,\delta_\eps]$, and $\P(G_\eps + \delta_\eps E_\eps)<\eps<\P(F_\eps+\delta_\eps D_\eps)$. Note that $\delta_\eps=\min\{\delta_{F_\eps},\delta_{G_\eps}\}$, where $\delta_{F_\eps}$ and $\delta_{G_\eps}$ are the constants obtained in Lemma~\ref{Lemma:LocalLinearity} for filters $F_\eps$ and $G_\eps$, respectively. For every $\eps\in(S,T)$, let $V_\eps = (\P(F_\eps+\delta_\eps E_\eps),\P(G_\eps+\delta_\eps D_\eps))$. Similarly, there exist
\begin{itemize}
	\item[a)] an optimal filter $F_S$ at $S$, $\delta_S>0$, and $D_S\in\D(F_S)$ such that $F_S+tD_S$ is an optimal filter for each $t\in[0,\delta_S]$ and $\P(F_S+\delta_S D_S)>S$;
	\item[b)] an optimal filter $G_T$ at $T$, $\delta_T>0$, and $E_T\in\D(G_T)$ such that $G_T+t E_T$ is an optimal filter for each $t\in[0,\delta_T]$ and $\P(G_T+\delta_T E_T) < T$.
\end{itemize}
Let $V_S = [S,\P(F_S+\delta_S D_S))$ and  $V_T = (\P(G_T+\delta_T E_T),T]$. The family $\left\{V_\eps: \eps\in[S,T] \right\}$ forms an open cover of $[S,T]$ (in the subspace topology). By compactness, there exist $S = \eps_0 < \cdots < \eps_l = T$ such that $\{V_{\eps_0},\ldots,V_{\eps_l}\}$ forms an open cover for $[S,T]$. For each $i\in\{0,\ldots,l-1\}$, the mapping
\eqn{eq:Paths}{[\eps_i,\P(F_{\eps_i}+\delta_{\eps_i} D_{\eps_i}))\ni \eps \mapsto F_{\eps_i} + \frac{\eps-\eps_i}{b^{(D_{\eps_i})}} D_{\eps_i} \in \F,}
is clearly linear. Similarly, for each $i\in\{\repn{l}\}$, the mapping
\eqn{eq:Paths2}{(\P(G_{\eps_i}+\delta_{\eps_i} E_{\eps_i}),\eps_i] \ni \eps \mapsto G_{\eps_i} + \frac{\eps-\eps_i}{b^{(E_{\eps_i})}} E_{\eps_i} \in \F,}
is also linear. Notice that \dsty{\P\left(F_{\eps_i} + \frac{\eps-\eps_i}{b^{(D_{\eps_i})}}D_{\eps_i}\right) = \eps = \P\left(G_{\eps_i} + \frac{\eps-\eps_i}{b^{(E_{\eps_i})}} E_{\eps_i}\right)}. Since $\{V_{\eps_0},\ldots,V_{\eps_l}\}$ forms an open cover for $[S,T]$, the mappings in  (\ref{eq:Paths}) and (\ref{eq:Paths2}) implement a piecewise linear path of optimal filters.
\end{proof}

The proof provided in this appendix establishes the existence of $\delta_*>0$, an optimal filter $F_*$ at $T\coloneqq \cP(X|Y)$, and $D_*\in\D(F_*)$ such that $\P(F_*+\delta_* D_*)<T$ (or equivalently $b^{(D_*)}<0$) and
\eq{\mathcalboondox{h}(\eps)=1+(\eps-T)\frac{\beta^{(D_*)}}{b^{(D_*)}},}
for every $\eps\in[T+\delta_* b^{(D_*)},T]$. This then implies that 
\eqn{eq:DirivativehMinimum}{\mathcalboondox{h}'(T) = \min_{F\in\F \atop \P(F)=T} \min_{D\in\D(F) \atop b^{(D)}<0} \frac{\beta^{(D)}}{b^{(D)}}.}

\section{Proof of Proposition~\ref{Proposition_Independent}}\label{Appenddix:Proposition}

Since $X$ is uniformly distributed in $\{1,\ldots,M\}$,
\begin{equation*}
-\log \cP(X) = \log M = H(X).
\end{equation*}
By the definition of $I_\infty(X;Z)$, we have that
\begin{eqnarray*}
	I_\infty(X;Z)&=&\log\left(\frac{\cP(X|Z)}{\cP(X)}\right)\\
	&=&H(X) + \log \left(\sum_{z\in \Z}P_Z(z)\max_{x\in \X}P_{X|Z}(x|z)\right)\\
	&\geq &H(X) + \sum_{z\in \Z}P_Z(z)\max_{x\in \X} \log P_{X|Z}(x|z),\\
\end{eqnarray*}
where the inequality follows from Jensen's inequality. Clearly, for each $z\in\Z$,
\begin{eqnarray*}
	\max_{x\in \X} \log P_{X|Z}(x|z) &\geq& \sum_{x\in \X}P_{X|Z}(x|z)\log P_{X|Z}(x|z)\\
	&=& - H(X|Z=z).
\end{eqnarray*}
Therefore,
\begin{equation*}
	I_\infty(X;Z) \geq H(X) - \sum_{z\in \Z} P_Z(z) H(X|Z=z) = I(X;Z).
\end{equation*}
Since $I_\infty(X;Z)=0$, we conclude that $I(X;Z)=0$ and thus $X\indep Z$.

\section{Proof of Theorem~\ref{Theorem_Linearity_BIBO}}
\label{Appendix_Linearity}

We first note that since $\mathcalboondox{h}$ is concave on $[\cP(X), \cP(X|Y)]$, its right derivative exists at $\eps=\cP(X|Y)$. Therefore, we have by concavity 
\eqn{eq:UpperBoundh}{\mathcalboondox{h}(\eps) \leq 1 - (\cP(X|Y)-\eps)\mathcalboondox{h}'(\cP(X|Y)),}
for all $\eps\in[p,\cP(X|Y)]$.
In Lemma~\ref{Lemma:Derivativeh} below, we show that
\begin{eqnarray*}
\mathcalboondox{h}'(\cP(X|Y)) &=& \frac{q}{\bar{\beta}p-\alpha\bar{p}} 1_{\{\alpha\bar{\alpha}\bar{p}^2<\beta\bar{\beta}p^2\}} \\
&&+\frac{\bar{q}}{\bar{\alpha}\bar{p}-\beta p}1_{\{\alpha\bar{\alpha}\bar{p}^2\geq\beta\bar{\beta}p^2\}}.
\end{eqnarray*}
Thus, \eqref{eq:UpperBoundh} becomes
\eqn{Cases:H}{\mathcalboondox{h}(\eps) \leq \begin{cases}1 - \zeta(\eps) q, & \alpha\bar{\alpha}\bar{p}^2<\beta\bar{\beta}p^2,\\1-\tilde{\zeta}(\eps)\bar{q}, & \alpha\bar{\alpha}\bar{p}^2\geq\beta\bar{\beta}p^2.\end{cases}}
To finish the proof of Theorem \ref{Theorem_Linearity_BIBO} we show that the Z-channel $\mathsf{Z}(\zeta(\eps))$ and the reverse Z-channel $\tilde{\mathsf{Z}}(\tilde{\zeta}(\eps))$ achieve \eqref{eq:UpperBoundh} and \eqref{Cases:H}, when $\alpha\bar{\alpha}\bar{p}^2<\beta\bar{\beta}p^2$ and  $\alpha\bar{\alpha}\bar{p}^2\geq\beta\bar{\beta}p^2$, respectively.

For $\alpha\bar{\alpha}\bar{p}^2<\beta\bar{\beta}p^2$, consider the filter $P_{Z|Y}=\left[\begin{matrix}1 & 0\\\zeta(\eps) & 1-\zeta(\eps)\end{matrix}\right]$. Notice that
\begin{equation}
\begin{aligned}
P_{XZ}&=\begin{bmatrix}\bar{p}(\bar{\alpha}+\alpha\zeta(\eps)) & \bar{p}\alpha(1-\zeta(\eps))\\p(\beta+\bar{\beta}\zeta(\eps)) & p\bar{\beta}(1-\zeta(\eps))\end{bmatrix}, ~\text{and}\\
P_{YZ}&=\begin{bmatrix}\bar{q} & 0\\q\zeta(\eps) & q(1-\zeta(\eps))\end{bmatrix}. \label{Z_Channel1}
\end{aligned}
\end{equation}

It is straightforward to verify that $\bar{p}(\bar{\alpha}+\alpha\zeta(\eps)) \geq p(\beta+\bar{\beta}\zeta(\eps))$. As a consequence, $\cP(X|Z)=\eps$. Since $\alpha\bar{\alpha}\bar{p}^2<\beta\bar{\beta}p^2$, we have that \dsty{\frac{\bar{q}}{q}>\zeta(\eps)}. Thus, $\cP(Y|Z)=1-\zeta(\eps)q$.

For $\alpha\bar{\alpha}\bar{p}^2\geq\beta\bar{\beta}p^2$, consider the filter $P_{Z|Y}=\left[\begin{matrix}1-\tilde{\zeta}(\eps) & \tilde{\zeta}(\eps)\\0 & 1\end{matrix}\right]$. Notice that
\begin{equation}
\begin{aligned}
P_{XZ}&=\begin{bmatrix}\bar{p}\bar{\alpha}(1-\tilde{\zeta}(\eps)) & \bar{p}(\alpha+\bar{\alpha}\tilde{\zeta}(\eps))\\p\beta(1-\tilde{\zeta}(\eps)) & p(\bar{\beta}+\beta\tilde{\zeta}(\eps))\end{bmatrix}, ~\text{and} \\
P_{YZ}&=\begin{bmatrix}\bar{q}(1-\tilde{\zeta}(\eps)) & \bar{q}\tilde{\zeta}(\eps)\\0 & q\end{bmatrix}.
\end{aligned}
\label{Z_channel2}
\end{equation}
Recall that $\bar{\alpha}\bar{p}>\beta p$ and also observe that  $p(\bar{\beta}+\beta\tilde{\zeta}(\eps)) \geq \bar{p}(\alpha+\bar{\alpha}\tilde{\zeta}(\eps))$. As a consequence, $\cP(X|Z)=\eps$. The fact that $\alpha\bar{\alpha}\bar{p}^2\geq\beta\bar{\beta}p^2$ implies  $q\geq\bar{q}\tilde{\zeta}(\eps)$. Therefore, $\cP(Y|Z)=1-\tilde{\zeta}(\eps)\bar{q}$.

\begin{lemma}
	\label{Lemma:Derivativeh}
	Let $X\sim\sBer(p)$ with $p\in[\frac{1}{2},1)$ and $P_{Y|X}\sim\mathsf{BIBO}(\alpha,\beta)$ with $\alpha,\beta\in[0,\frac{1}{2})$ such that $\bar{\alpha}\bar{p}>\beta p$. Then \dsty{\mathcalboondox{h}'(\cP(X|Y)) = \frac{q}{\bar{\beta}p-\alpha\bar{p}} 1_{\{\alpha\bar{\alpha}\bar{p}^2<\beta\bar{\beta}p^2\}} + \frac{\bar{q}}{\bar{\alpha}\bar{p}-\beta p}1_{\{\alpha\bar{\alpha}\bar{p}^2\geq\beta\bar{\beta}p^2\}}}.
\end{lemma}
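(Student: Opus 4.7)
The plan is to apply the variational characterization \eqref{eq:DirivativehMinimum} derived at the end of Appendix~\ref{Appendix:ProofPiecewiseLinearity}. Setting $T\coloneqq\cP(X|Y)$, an optimal filter at $T$ is any $F\in\F$ with $\cP(Y|Z)=1$, which forces each column of $F$ to contain at most one row with $q_y F_{y,z}>0$. With $|\Y|=2$ and $|\Z|=3$, the canonical such filter is
\[ F_\star=\begin{pmatrix} 1 & 0 & 0 \\ 0 & 1 & 0 \end{pmatrix}, \]
and the remaining optimal filters differ from $F_\star$ only by permutations of the columns of $\Z$ or by redistributing within a single row the mass between two columns sharing the same $y$-label. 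I will verify at the end that these variants produce the same infimum in \eqref{eq:DirivativehMinimum}.

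For the canonical $F_\star$, I parametrize $\D(F_\star)$ by four nonnegative scalars via
\[ D=\begin{pmatrix} -(a+b) & a & b \\ c & -(c+d) & d \end{pmatrix}. \]
The assumptions $p\geq\tfrac12$, $\alpha,\beta<\tfrac12$, and $\bar\alpha\bar p>\beta p$ imply $\bar p\bar\alpha>p\beta$ and $p\bar\beta>\bar p\alpha$, so that in the notation of \eqref{Eq:A_M_B} we have $\M_0=\N_0=\{0\}$, $\M_1=\N_1=\{1\}$, and $\M_2=\N_2=\{0,1\}$. Substituting $D$ into the expressions for $b^{(D)}$ and $\beta^{(D)}$ obtained in the proof of Lemma~\ref{Lemma:LocalLinearity} and splitting into the four subcases according to which row attains the max in column~$2$ of $PD$ and of $QD$, the cancellations forced by the fact that $F_\star$ sends no mass to column~$2$ reduce the ratio to the fractional-linear form
\[ \frac{\beta^{(D)}}{b^{(D)}}=\frac{\bar q\,u + q\,v}{(\bar p\bar\alpha-p\beta)u + (p\bar\beta-\bar p\alpha)v}, \]
where $u,v\geq0$ are certain aggregates of $a,b,c,d$ with $u+v>0$ (so that $b^{(D)}<0$). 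As a quotient of two positive linear forms on the nonnegative orthant, this expression attains its infimum on one of the two coordinate rays, giving the two candidate slopes $s_1\coloneqq q/(\bar\beta p-\alpha\bar p)$ at $u=0$ (corresponding to the Z-channel perturbation of $F_\star$) and $s_2\coloneqq\bar q/(\bar\alpha\bar p-\beta p)$ at $v=0$ (corresponding to the reverse Z-channel perturbation).

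The final step is to compare $s_1$ with $s_2$ and to confirm that the non-canonical optimal filters do not lower the infimum. Cross-multiplying and substituting $q=\alpha\bar p+\bar\beta p$ and $\bar q=\bar\alpha\bar p+\beta p$, all cross terms cancel and $s_1\leq s_2$ reduces to $\alpha\bar\alpha\bar p^2\leq\beta\bar\beta p^2$, which is exactly the indicator condition in the lemma. For a generic non-canonical optimal filter such as $F=\begin{pmatrix} 1 & 0 & 0 \\ 0 & t & 1-t \end{pmatrix}$ with $t\in(0,1)$, an analogous direct computation shows that $b^{(D)}$ and $\beta^{(D)}$ depend only on the aggregated row-flux and not on how that flux is split between columns~$1$ and~$2$; the same fractional form reappears with $u=a+b$ and $v=c$, recovering the same infimum $\min(s_1,s_2)$. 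Column permutations are handled identically. The main technical obstacle is the bookkeeping in the four-case analysis of the second paragraph, and in particular verifying that in each subcase the cancellations really do leave an expression of the stated fractional-linear form in only two effective variables, and that no off-axis choice in the transitional subcases (where the maxes in column~$2$ coincide) drops the ratio below $\min(s_1,s_2)$.
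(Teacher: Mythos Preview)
Your overall approach is the same as the paper's: invoke the variational formula \eqref{eq:DirivativehMinimum}, enumerate the optimal filters at $T=\cP(X|Y)$, and minimize $\beta^{(D)}/b^{(D)}$ over admissible directions. The identification of the two candidate slopes $s_1,s_2$ via the coordinate-ray directions, the comparison $s_1\lessgtr s_2\Leftrightarrow\alpha\bar\alpha\bar p^2\lessgtr\beta\bar\beta p^2$, and the observation that the non-canonical optimal filters (one row split across two columns) produce the same first-column expressions for $b^{(D)},\beta^{(D)}$ are all correct.

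There is, however, a genuine gap in your four-case analysis at $F_\star$. Your claim that every subcase collapses to the single fractional-linear form $\frac{\bar q\,u+q\,v}{(\bar\alpha\bar p-\beta p)u+(\bar\beta p-\alpha\bar p)v}$ fails precisely in the two \emph{mixed} cases where the argmax in column~$2$ of $PD$ differs from that of $QD$. For instance, when row~$1$ wins in $PD$ but row~$0$ wins in $QD$ (which is equivalent to $\zeta b\le d<\eta b$ with $\eta=\bar q/q$ and $\zeta=(\bar\alpha\bar p-\beta p)/(\bar\beta p-\alpha\bar p)$, and therefore forces $\zeta<\eta$), one obtains
\[
\frac{\beta^{(D)}}{b^{(D)}}=\frac{\bar q\,a+q(c+d)}{(\bar\alpha\bar p-\beta p)(a+b)+(\bar\beta p-\alpha\bar p)c},
\]
where the $b$-contribution sits only in the denominator and the $d$-contribution only in the numerator; no common $(u,v)$ exists. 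These mixed cases are full-dimensional regions in $\D(F_\star)$, not measure-zero ``transitional'' boundaries as your last sentence suggests. The paper closes the gap by exploiting the case constraint itself: since this case already forces $\zeta<\eta$ and $d\ge\zeta b$, one has
\[
\bar q\,a+q(c+d)-s_1\bigl[(\bar\alpha\bar p-\beta p)(a+b)+(\bar\beta p-\alpha\bar p)c\bigr]=q(\eta-\zeta)a+q(d-\zeta b)\ge0,
\]
so the ratio is still $\ge s_1=\min(s_1,s_2)$; the other mixed case is symmetric with $\zeta>\eta$ and bounds by $s_2$. Without this additional step your lower bound on the infimum is incomplete.
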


\begin{proof}
	  As before, let $T\coloneqq\cP(X|Y)$.  We begin the proof by noticing that the Z-channels defined in \eqref{Z_Channel1} and \eqref{Z_channel2} provide a lower bound on $\mathcalboondox{h}(\eps)$ as follows:
	   \begin{equation}\label{eq:hLowerBound}
	   \mathcalboondox{h}(\eps)\geq 1-\zeta(\eps)q1_{\{\alpha\bar{\alpha}\bar{p}^2<\beta\bar{\beta}p^2\}}-\tilde{\zeta}(\eps)\bar{q}1_{\{\alpha\bar{\alpha}\bar{p}^2\geq\beta\bar{\beta}p^2\}}.
	   \end{equation}
		By concavity of $\mathcalboondox{h}$, this inequality implies
	\eq{\mathcalboondox{h}'(T) \leq \frac{q}{\bar{\beta}p-\alpha\bar{p}} 1_{\alpha\bar{\alpha}\bar{p}^2<\beta\bar{\beta}p^2} + \frac{\bar{q}}{\bar{\alpha}\bar{p}-\beta p}1_{\alpha\bar{\alpha}\bar{p}^2\geq\beta\bar{\beta}p^2}.}
	The rest of the proof is devoted to establishing the reverse inequality. To this end, we use the variational formula for $\mathcalboondox{h}'(T)$ given in \eqref{eq:DirivativehMinimum}.
		Let $P=[P(x,y)]_{x,y\in\{0,1\}}$ be the joint probability matrix of $X$ and $Y$.  Without loss of generality we can assume $\Z=\{z_1,z_2,z_3\}$. It follows from \eqref{eq:PrivacyMasterEquation} and \eqref{eq:UtilityMasterEquation} that for every $F\in\F\subset\M_{2\times 3}$ there exists $\delta>0$ such that
	\eqn{eq:DefbBeta}{\P(F+tD) = \P(F) +tb^{(D)},~\textnormal{ and }~ \U(F+tD) = \U(F) + t\beta^{(D)},}
	for every $t\in[0,\delta]$ and $D\in\D(F)$, where \dsty{b^{(D)}=\sum_{i=1}^3 \max_{x\in\M_{z_i}} [PD](x,z_i)} and \dsty{\beta^{(D)} = \sum_{i=1}^3 \max_{y\in\N_{z_i}} q(y)D(y,z_i)} with
	\al{
		\M_{z_i} &= \Big\{x\in\{0,1\}: (PF)(x,z_i) = \max_{x'\in\{0,1\}} (PF)(x',z_i)\Big\},\\
		\N_{z_i} &= \Big\{y\in\{0,1\}: q(y)F(y,z_i) = \max_{y'\in\{0,1\}} q(y')F(y',z_i)\Big\}.
	}
	
	
	Up to permutation of columns, which corresponds to permuting the elements of $\Z$, the set of filters $F\in\F$ such that $\P(F)=T$ equals
	\begin{equation}
	\begin{aligned}
	\left\{\begin{bmatrix} 1 & 0 & 0\\0 & u & v \end{bmatrix} : {0<v\leq u \atop u+v=1}\right\} &\bigcup \left\{\begin{bmatrix}0 & u & v\\1 & 0 & 0 \end{bmatrix} : {0<v\leq u \atop u+v=1}\right\}\\
	&\bigcup \left\{\begin{bmatrix}1 & 0 & 0\\0 & 1 & 0 \end{bmatrix}\right\}.
	\end{aligned}
	\label{FamilyFilters}
	\end{equation}
	To compute $\mathcalboondox{h}'(T)$ using formula  \eqref{eq:DirivativehMinimum}   
we need to compute $\beta^{(D)}$ and $b^{(D)}$ for each $D\in \D(F)$ with $F$ of the form described in \eqref{FamilyFilters}.

	Let \dsty{F=\begin{bmatrix}1 & 0 & 0\\0 & u & v\end{bmatrix}} for some $0<v\leq u$ and $u+v=1$. A direct computation shows that	
	\eqn{eq:PFuv}{PF = \begin{bmatrix} \bar{\alpha}\bar{p} & u\alpha\bar{p} & v\alpha\bar{p}\\ \beta p & u\bar{\beta}p & v\bar{\beta}p\end{bmatrix}.}
	In particular, $\M_{z_1}=\{0\}$, $\M_{z_2}=\{1\}$, and $\M_{z_3}=\{1\}$. For every $D\in\D(F)$, the matrix $PD$ is equal to
	\eq{\begin{bmatrix} \bar{\alpha}\bar{p}D_{11}+\alpha\bar{p}D_{21} & \bar{\alpha}\bar{p}D_{12}+\alpha\bar{p}D_{22} & \bar{\alpha}\bar{p}D_{13}+\alpha\bar{p}D_{23}\\ \beta p D_{11}+\bar{\beta} p D_{21} & \beta p D_{12}+\bar{\beta} p D_{22} & \beta p D_{13}+\bar{\beta} p D_{23}\end{bmatrix},}
	and hence $b^{(D)}=\bar{\alpha}\bar{p}D_{11}+\alpha\bar{p}D_{21}+\beta p D_{12}+\bar{\beta}p D_{22}+\beta p D_{13}+\bar{\beta} p D_{23}$. Notice that, for $1\leq i\leq 3$, we have that $D_{i1}+D_{i2}+D_{i3}=0$. In particular, $b^{(D)}=(\bar{\alpha}\bar{p}-\beta p) D_{11}+(\alpha\bar{p}-\bar{\beta} p) D_{21}$.
	Consider the matrices,
	\begin{equation*}
	\begin{bmatrix} \bar{q} & 0 \\ 0 & q \end{bmatrix} F = \begin{bmatrix} \bar{q} & 0 & 0 \\ 0 & qu & qv \end{bmatrix} ,
	\end{equation*}
and 
\begin{equation*}
\begin{bmatrix} \bar{q} & 0 \\ 0 & q \end{bmatrix} D = \begin{bmatrix} \bar{q}D_{11} & \bar{q}D_{12} & \bar{q}D_{13} \\ qD_{21} & qD_{22} & qD_{33} \end{bmatrix},
\end{equation*}	
	from which we obtain $\N_{z_1}=\{0\}$, $\N_{z_2}=\{1\}$, $\N_{z_3}=\{1\}$, and therefore, $\beta^{(D)}=\bar{q}D_{11}+qD_{22}+qD_{23}=\bar{q}D_{11}-qD_{21}.$
	In what follows we use the simple fact that \dsty{\frac{ax+y}{bx+y} \geq \min\left\{\frac{a}{b},1\right\}} for $a,b>0$  and $x,y\geq0$ with $x+y>0$. For notational simplicity, let $\eta\coloneqq \frac{\bar{q}}{q}$ and $\zeta\coloneqq \zeta(p)$, where $\zeta(\cdot)$ is defined in  \eqref{Def_Zeta_Zeta_tilde}.

	From the form of $F$, it is clear that $-D_{11}\geq0$ and $D_{21}\geq 0$. If $b^{(D)} < 0$, then $D_{11}$ and $D_{21}$ cannot be simultaneously zero, and hence
	\begin{eqnarray*}
	\frac{\beta^{(D)}}{b^{(D)}} &=& \frac{q}{\bar{\beta} p-\alpha\bar{p}} \frac{-\eta D_{11}+D_{21}}{-\zeta D_{11}+D_{21}} \\
	&\geq & \frac{q}{\bar{\beta} p-\alpha\bar{p}}\min\left\{\frac{\eta}{\zeta},1\right\}\\
	& =& \begin{cases}\frac{q}{\bar{\beta}p-\alpha\bar{p}}, & {\alpha\bar{\alpha}\bar{p}^2<\beta\bar{\beta}p^2},\\\frac{\bar{q}}{\bar{\alpha}\bar{p}-\beta p}, & {\alpha\bar{\alpha}\bar{p}^2\geq\beta\bar{\beta}p^2}.\end{cases}
	\end{eqnarray*}
	In particular, we obtain that
	\eqn{Filter1}{\min_{D\in\D(F) \atop b^{(D)}<0} \frac{\beta^{(D)}}{b^{(D)}} \geq \begin{cases}\frac{q}{\bar{\beta}p-\alpha\bar{p}}, & {\alpha\bar{\alpha}\bar{p}^2<\beta\bar{\beta}p^2},\\\frac{\bar{q}}{\bar{\alpha}\bar{p}-\beta p}, & {\alpha\bar{\alpha}\bar{p}^2\geq\beta\bar{\beta}p^2}.\end{cases}}
	The case \dsty{F=\begin{bmatrix}0 & u & v\\1 & 0 & 0\end{bmatrix}} for $0<v\leq u$ and $u+v=1$ is analogous.
	
	Now, let \dsty{F=\begin{bmatrix}1 & 0 & 0\\0 & 1 & 0\end{bmatrix}}. By \eqref{eq:PFuv} with $u=1$ and $v=0$, we obtain that $\M_{z_1}=\{0\}$, $\M_{z_2}=\{1\}$, and $\M_{z_3}=\{0, 1\}$. In a similar way, $\N_{z_1}=\{0\}$, $\N_{z_2}=\{1\}$, and $\N_{z_3}=\{0, 1\}$. Hence
	\begin{eqnarray*}
		b^{(D)} &= & \bar{\alpha}\bar{p}D_{11}+\alpha\bar{p}D_{21}+\beta p D_{12}+\bar{\beta} p D_{22}\\
		&&+\max\{\bar{\alpha}\bar{p}D_{13}+\alpha\bar{p}D_{23},\beta p D_{13}+\bar{\beta} p D_{23}\},\\
		\beta^{(D)} &=& \bar{q}D_{11}+qD_{22}+\max\{\bar{q}D_{13}, qD_{23}\}.
	\end{eqnarray*}
	We therefore need to consider the following cases:
	\begin{description}
		\item[Case I:] $\bar{\alpha}\bar{p} D_{13}+\alpha \bar{p} D_{23}\leq\beta p D_{13}+\bar{\beta}p D_{23}$ and $\bar{q}D_{13}\leq qD_{23}$. The computation in this case reduces to the computation for \dsty{F=\begin{bmatrix}1 & 0 & 0\\0 & u & v\end{bmatrix}}.
		
		\item[Case II:] $\bar{\alpha}\bar{p}D_{13}+\alpha\bar{p}D_{23}\leq\beta p D_{13}+\bar{\beta} p D_{23}$ and $\bar{q}D_{13}> qD_{23}$. Notice that these conditions imply that $\zeta D_{13}\leq D_{23}< \eta D_{13}$, and therefore this case requires $\zeta<\eta$ (or equivalently, $\alpha\bar{\alpha}\bar{p}^2<\beta\bar{\beta}p^2$). This yields
		\begin{equation*}
         b^{(D)}=(\bar{\alpha}\bar{p}-\beta p) D_{11}+(\alpha\bar{p}-\bar{\beta} p) D_{21},
		\end{equation*}
		and 
		\begin{equation*}
		\beta^{(D)}=qD_{22}-\bar{q}D_{12}.
		\end{equation*}
		Hence, we have
		\eq{\frac{\beta^{(D)}}{b^{(D)}} = \frac{q}{\bar{\beta}p-\alpha \bar{p}} \frac{D_{22}-\eta D_{12}}{\zeta D_{11}-D_{21}}.}
		By the form of $F$, we have that $-D_{11},D_{12},D_{21}\geq0$. The inequalities $\zeta<\eta$ and $\zeta D_{13}\leq D_{23}$ imply that \dsty{\frac{D_{22}-\eta D_{12}}{\zeta D_{11}-D_{21}} \geq 1}, and hence 
		\eqn{Filter2}{\frac{\beta^{(D)}}{b^{(D)}} \geq \frac{q}{\bar{\beta}p-\alpha \bar{p}}1_{\{\alpha\bar{\alpha}\bar{p}^2<\beta\bar{\beta}p^2\}}.}
		
		\item[Case III:] $\bar{\alpha}\bar{p}D_{13}+\alpha\bar{p}D_{23}> \beta p D_{13}+\bar{\beta} p D_{23}$ and $\bar{q}D_{13}\leq qD_{23}$. Notice that these conditions imply that $\eta D_{13}\leq D_{23}<\zeta D_{13}$, and hence this case requires $\zeta>\eta$ (or equivalently, $\alpha\bar{\alpha}\bar{p}^2>\beta\bar{\beta}p^2$). In this case, we have
		\begin{equation*}
b^{(D)}=(\beta p-\bar{\alpha}\bar{p})D_{12} + (\bar{\beta}p-\alpha\bar{p})D_{22}, 
		\end{equation*}
and 
		\begin{equation*}
		\beta^{(D)}=\bar{q}D_{11}-qD_{21}.
		\end{equation*}
		Therefore,
		\eq{\frac{\beta^{(D)}}{b^{(D)}} = \frac{\bar{q}}{\bar{\alpha}\bar{p}-\beta p} \frac{D_{11}-\eta^{-1}D_{21}}{-D_{12}+\zeta^{-1}D_{22}}.}
		By the form of $F$, we have that $-D_{22},D_{12},D_{21}\geq0$. The inequalities $\zeta^{-1}<\eta^{-1}$ and $\zeta D_{13}>D_{23}$ imply that \dsty{\frac{D_{11}-\eta^{-1}D_{21}}{-D_{12}+\zeta^{-1}D_{22}} > 1}, and hence \eqn{Filter3}{\frac{\beta^{(D)}}{b^{(D)}} > \frac{\bar{q}}{\bar{\alpha}\bar{p}-\beta p}1_{\{\alpha\bar{\alpha}\bar{p}^2>\beta\bar{\beta}p^2\}}.}
		
		\item[Case IV:] $\bar{\alpha}\bar{p}D_{13}+\alpha\bar{p}D_{23}>\beta p D_{13}+\bar{\beta} p D_{23}$ and $\bar{q}D_{13}> qD_{23}$. Notice that these two inequalities imply that $D_{23}< \min\{\zeta, \eta\} D_{13}$. For this case we have that
		\begin{equation*}
b^{(D)}=(\beta p-\bar{\alpha}\bar{p})D_{12} + (\bar{\beta}p-\alpha\bar{p})D_{22} ,
		\end{equation*}
		and 
		\begin{equation*}
\beta^{(D)}=qD_{22}-\bar{q}D_{12}.
		\end{equation*}
		Hence, we have
		\eq{\frac{\beta^{(D)}}{b^{(D)}}=\frac{q}{\bar{\beta}p-\alpha \bar{p}} \frac{\eta D_{12}-D_{22}}{\zeta D_{12}-D_{22}}.}
		By the form of $F$, we have that $-D_{22},D_{12}\geq0$. As before, we conclude that
		\begin{eqnarray}
		\frac{\beta^{(D)}}{b^{(D)}} &\geq& \frac{q}{\bar{\beta} p-\alpha\bar{p}}\min\left\{\frac{\eta}{\zeta},1\right\} \nonumber\\
		&=&\begin{cases}\frac{q}{\bar{\beta}p-\alpha\bar{p}}, & {\alpha\bar{\alpha}\bar{p}^2<\beta\bar{\beta}p^2},\\\frac{\bar{q}}{\bar{\alpha}\bar{p}-\beta p}, & {\alpha\bar{\alpha}\bar{p}^2\geq\beta\bar{\beta}p^2}.\end{cases} \label{Filter4}
		\end{eqnarray}
	\end{description}
	Combining \eqref{Filter1}, \eqref{Filter2}, \eqref{Filter3}, and \eqref{Filter4}, we obtain
	\eq{\min_{F\in\F \atop \P(F)=T} \min_{D\in\D(F) \atop b^{(D)}<0} \frac{\beta^{(D)}}{b^{(D)}} \geq \begin{cases}\frac{q}{\bar{\beta}p-\alpha\bar{p}}, & {\alpha\bar{\alpha}\bar{p}^2<\beta\bar{\beta}p^2},\\\frac{\bar{q}}{\bar{\alpha}\bar{p}-\beta p}, & {\alpha\bar{\alpha}\bar{p}^2\geq\beta\bar{\beta}p^2},\end{cases}}
	as desired.
\end{proof}

\section{Proof of Theorem~\ref{Thm:GeneralizedLocalLinearity}}
\label{Appendix:ProofGeneralizedLocalLinearity}

Recall that $\X=\{\repn{M}\}$ and $\Y=\Z=\{\repn{N}\}$,   $P=[P(x,y)]_{(x,y)\in\X\times\Y}$ is the joint probability matrix of $X$ and $Y$, and the marginals are $p_X(x)=\Pr(X=x)$ and $q_Y(y)=\Pr(Y=y)$ for every $x\in \X$ and $y\in\Y$. Similar to $\mathcalboondox{h}$, the function $\underline{\mathcalboondox{h}}$ admits the alternative formulation
\eq{\underline{\mathcalboondox{h}}(\eps) = \sup_{F\in\ul{\F}:~\ul{\P}(F)\leq\eps} \ul{\U}(F),}
where $\ul{\F}$ is the set of all stochastic matrices $F\in \M_{N\times N}$,
\begin{equation*}
\ul{\P}(F) = \sum_{z\in\Z} \max_{x\in\X} (PF)(x,z), 
\end{equation*}
and 
\begin{equation*}
\ul{\U}(F) = \sum_{z\in\Z} \max_{y\in\Y} q_Y(y)F(y,z).
\end{equation*}
We let $\ul{\D} = \left\{D\in\M_{N\times N}: \|D\|=1\right\}$ and, for each $F\in\ul{\F}$, we define 
\eq{\ul{\D}(F)\coloneqq\left\{D\in \ul{\D}: F+tD\in \ul{\F} \textnormal{ for some } t>0\right\}.}
Before proving Theorem~\ref{Thm:GeneralizedLocalLinearity}, we need to establish some technical lemmas. Notice that the proofs of Lemmas~\ref{Lemma:CompactnessDirections} and \ref{Lemma:LocalLinearity} do not depend on the alphabets $\X$, $\Y$, and $\Z$. Therefore,  $\ul{\D}(F)$ is compact for any $F\in \ul{\F}$ and also we obtain the following lemma.
\begin{lemma}
\label{Lemma:LocalLinearityPsi}
Let $\ul{\H}:\ul{\F}\to[0,1]\times[0,1]$ be the mapping given by $\ul{\H}(F)=(\ul{\P}(F),\ul{\U}(F))$. For every $F\in\ul{\F}$, there exists $\delta>0$ such that $\ul{\H}$ is linear on $[F,F+\delta D]$ for every $D\in\ul{\D}(F)$.
\end{lemma}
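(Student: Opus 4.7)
The plan is to essentially transcribe the proof of Lemma~\ref{Lemma:LocalLinearity} into the present setting, replacing the rectangular matrix space $\M_{N\times(N+1)}$ by the square one $\M_{N\times N}$, the output alphabet $\Z=\{1,\dots,N+1\}$ by $\Z=\Y=\{1,\dots,N\}$, and the pair $(\P,\U)$ by $(\ul{\P},\ul{\U})$. Since the original argument never invoked any property of the column dimension---it relied only on (i) $\P$ and $\U$ being finite sums over $z\in\Z$ of maxima of affine functions of $t$, and (ii) the set of admissible directions being compact---the same derivation will go through verbatim. Compactness of $\ul{\D}(F)$ is already recorded in the paragraph preceding the lemma, since the proof of Lemma~\ref{Lemma:CompactnessDirections} is dimension-free.

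Concretely, I would first introduce $\tau:\ul{\D}(F)\to\R$ defined by $\tau(D)=\sup\{t\geq 0 : F+tD\in\ul{\F}\}$ and verify, by the same hyperplane/subsequence argument as in Lemma~\ref{Lemma:ContinuityTao}, that $\tau$ is continuous and strictly positive on $\ul{\D}(F)$. Next, for each $x\in\X$, $z\in\Z$, and $D\in\ul{\D}(F)$, I would set $f_{x,z}^{(D)}(t) = [PF](x,z) + t[PD](x,z)$ and define $a_z$, $\M_z$, $b_z^{(D)}$, and the thresholds $t_{x,z}^{(D)}$ exactly as in \eqref{Def_AZX_BXZ}--\eqref{Eq:A_M_B}, together with
\eq{t^{(D)} \coloneqq \min_{z\in\Z}\min_{x\notin\M_z} \min\bigl\{\bigl|t_{x,z}^{(D)}\bigr|,\tau(D)\bigr\}>0.}
The same case analysis then shows that $a_z + tb_z^{(D)} = \max_{x\in\X} f_{x,z}^{(D)}(t)$ on $[0,t^{(D)}]$, and summing over $z\in\Z$ yields $\ul{\P}(F+tD) = \ul{\P}(F) + tb^{(D)}$ with $b^{(D)} = \sum_{z\in\Z} b_z^{(D)}$. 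Continuity of $D\mapsto t^{(D)}$ together with the compactness of $\ul{\D}(F)$ then supply $\delta_{\ul{\P}} \coloneqq \min_{D\in\ul{\D}(F)} t^{(D)} > 0$ on which $\ul{\P}$ depends linearly on $t$ in every admissible direction.

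The parallel argument applied to $g_{y,z}^{(D)}(t) = q_Y(y)F(y,z) + t\,q_Y(y)D(y,z)$ produces $\delta_{\ul{\U}}>0$ on which $\ul{\U}$ is linear, and setting $\delta\coloneqq\min(\delta_{\ul{\P}},\delta_{\ul{\U}})$ delivers the conclusion that $\ul{\H}=(\ul{\P},\ul{\U})$ is linear on $[F,F+\delta D]$ for every $D\in\ul{\D}(F)$. I do not anticipate any real obstacle: the cardinality of $\Z$ enters only through the finiteness of the sum defining $\ul{\P}$ and $\ul{\U}$, which is preserved, and through the polytope structure of the feasible set, which is likewise preserved for $\ul{\F}$. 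The only bookkeeping point worth flagging is to verify that when $F$ lies on the relative boundary of $\ul{\F}$, the set $\ul{\D}(F)$ of inward-pointing unit directions is nonempty, but this is immediate because $\ul{\F}$ is a nondegenerate convex polytope, exactly as in the original proof.
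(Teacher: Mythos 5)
Your proposal is correct and matches the paper's own treatment: the paper simply observes that the proofs of Lemma~\ref{Lemma:CompactnessDirections}, Lemma~\ref{Lemma:ContinuityTao}, and Lemma~\ref{Lemma:LocalLinearity} never use the alphabets (in particular the column dimension), so they carry over verbatim to $\ul{\F}$, $\ul{\P}$, $\ul{\U}$ --- which is exactly the transcription you carry out explicitly.
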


The convex analysis tools used to study $\mathcalboondox{h}$ heavily rely on the fact that $|\Z|=|\Y|+1$. Hence, they are unavailable in this case, and thus we need an alternative approach to establish the desired functional properties of $\underline{\mathcalboondox{h}}$. 

\begin{lemma}\label{Lemma:Contin}
If $\cP(X)<\cP(X|Y)$, then $\underline{\mathcalboondox{h}}$ is continuous at $\cP(X|Y)$.
\end{lemma}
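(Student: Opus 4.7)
The plan is to approach $\cP(X|Y)$ from below using an explicit one-parameter family of feasible filters. First, observe that the identity matrix $I\in\ul{\F}$ (which realizes $Z=Y$) is feasible at $\eps=\cP(X|Y)$, since the Markov chain $X\markov Y\markov Z$ forces $\cP(X|Z)\leq \cP(X|Y)$; it also yields $\cP(Y|Z)=1$, so $\underline{\mathcalboondox{h}}(\cP(X|Y))=1$. Moreover, $\underline{\mathcalboondox{h}}$ is nondecreasing on $[\cP(X),\cP(X|Y)]$ because $\ul{\mathfrak{D}}_\eps$ is monotone in $\eps$. Hence, since $\underline{\mathcalboondox{h}}\leq 1$, continuity at $\cP(X|Y)$ reduces to producing, for every $\delta>0$, some $\eps_0<\cP(X|Y)$ with $\underline{\mathcalboondox{h}}(\eps_0)\geq 1-\delta$.

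To produce such an $\eps_0$, fix any $z_0\in\Y$ and let $G\in\ul{\F}$ be the constant filter $G(y,z_0)=1$ for every $y\in\Y$; a direct computation gives $\ul{\P}(G)=\max_x p_X(x)=\cP(X)$. Consider the interpolation $F_t\coloneqq (1-t)I+tG\in\ul{\F}$ for $t\in[0,1]$. Both $\ul{\P}$ and $\ul{\U}$ are sums of maxima of linear functions on $\ul{\F}$, hence continuous and convex. Continuity in $t$ yields $\ul{\P}(F_t)\to \ul{\P}(I)=\cP(X|Y)$ and $\ul{\U}(F_t)\to \ul{\U}(I)=1$ as $t\to 0^+$, while convexity of $\ul{\P}$ together with the assumption $\cP(X)<\cP(X|Y)$ gives the strict inequality
\eq{\ul{\P}(F_t)\leq (1-t)\cP(X|Y)+t\cP(X)<\cP(X|Y)}
for every $t\in(0,1]$.

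Given $\delta>0$, pick $t\in(0,1]$ small enough that $\ul{\U}(F_t)>1-\delta$, and set $\eps_0\coloneqq \ul{\P}(F_t)$; then $\eps_0<\cP(X|Y)$ and $F_t\in\ul{\mathfrak{D}}_{\eps_0}$, so $\underline{\mathcalboondox{h}}(\eps_0)\geq \ul{\U}(F_t)>1-\delta$. Monotonicity propagates this bound to all $\eps\in[\eps_0,\cP(X|Y)]$, establishing left-continuity at $\cP(X|Y)$. The only delicate point is the combined use of convexity of $\ul{\P}$ and the strict hypothesis $\cP(X)<\cP(X|Y)$ to guarantee $\eps_0<\cP(X|Y)$; without this strict gap, the perturbation could get stuck on the boundary of the domain and the argument would only recover the (trivial) equality $\underline{\mathcalboondox{h}}(\cP(X|Y))=1$.
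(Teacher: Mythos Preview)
Your proof is correct and slightly more elementary than the paper's. Both arguments construct a one-parameter family of filters emanating from the identity $I$, show that privacy drops strictly below $\cP(X|Y)$ along the family (this is where the hypothesis $\cP(X)<\cP(X|Y)$ enters), and use continuity of $\ul{\U}$ to force $\underline{\mathcalboondox{h}}(\eps)\to 1$ from below. The paper chooses a specific direction $D_*\in\ul{\D}(I)$ and invokes the local-linearity lemma (Lemma~\ref{Lemma:LocalLinearityPsi}) to write $\ul{\P}(I+tD_*)=\cP(X|Y)-t\lambda\sigma$ and $\ul{\U}(I+tD_*)=1-t\lambda(1-q_Y(1))$ explicitly; it then argues that $\sigma>0$ by contradiction with $\cP(X)<\cP(X|Y)$. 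You instead interpolate toward the constant filter $G$ (which has $\ul{\P}(G)=\cP(X)$ exactly) and replace the local-linearity machinery by the single observation that $\ul{\P}$ is convex, so $\ul{\P}(F_t)\leq (1-t)\cP(X|Y)+t\,\cP(X)<\cP(X|Y)$ for all $t\in(0,1]$. Your route avoids the directional-derivative framework entirely and is self-contained; the paper's route, while heavier here, is consistent with the tools it needs anyway for the remainder of the proof of Theorem~\ref{Thm:GeneralizedLocalLinearity}, where the explicit linear expansions in $t$ are essential.
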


\begin{proof}
Without loss of generality, we will assume that $q_Y(1)>0$. Let $D_*\in\ul{\D}({\rm I}_N)$ be given by
\eq{\small D_* = \begin{bmatrix} 0 & 0 & 0 & \cdots & 0\\ \lambda & -\lambda & 0 & \cdots & 0\\ \lambda & 0 & -\lambda & \cdots & 0\\ \vdots & \vdots & \vdots & \ddots & \vdots\\ \lambda & 0 & 0 & \cdots & -\lambda\end{bmatrix},}
where $\lambda=(2(N-1))^{-1/2}$. As in the proof of Lemma~\ref{Lemma:LocalLinearity}, one can show that there exist $\delta_1>0$ and $(x_z)_{z\in\Z}\subset\X$ such that for every $t\in[0,\delta_1]$,
\begin{eqnarray}
\ul{\P}({\rm I}_N+tD_*) &=& \sum_{z\in\Z} \max_{x\in\X} (P({\rm I}_N+tD_*))(x,z)\nonumber\\
&=&\sum_{z\in\Z} (P({\rm I}_N+tD_*))(x_z,z). \label{DefXZ}
\end{eqnarray}
In this case, we have that
\begin{eqnarray*}
\ul{\P}({\rm I}_N+tD_*) &=& P(x_1,1) + t\lambda \sum_{z=2}^N P(x_1,z) \\
&&+ (1-t\lambda) \sum_{z=2}^N P(x_z,z)\\
&=&\sum_{z\in\Z} P(x_z,z) \\
&&- t\lambda \left(\sum_{z\in\Z} P(x_z,z) - P(x_1,z)\right).
\end{eqnarray*}
Note that \ndsty{\cP(X|Y) = \ul{\P}({\rm I}_N) = \sum_{z\in\Z} P(x_z,z)}. Hence,
\eqn{eq:PrivacyContinuityT}{\ul{\P}({\rm I}_N+tD_*) = \cP(X|Y) - t\lambda\sigma,}
where \dsty{\sigma=\sum_{z\in\Z} (P(x_z,z) - P(x_1,z))}. Setting $t=0$ in \eqref{DefXZ}, we have that $P(x_z,z)\geq P(x,z)$ for all $(x,z)\in\X\times\Z$. If $P(x_z,z)=P(x_1,z)$ for all $z\geq1$, then
\eq{\cP(X|Y) = \sum_{z\in\Z} P(x_1,z) = p_X(x_1) \leq \cP(X),}
which contradicts the hypothesis of the lemma. Therefore, there exists $z\in\Z$ such that $P(x_z,z)>P(x_1,z)$ and hence $\sigma>0$. Similarly, there exists $\delta_2>0$ such that for every $t\in[0,\delta_2]$,
\eqn{eq:UtilityContinuityT}{\ul{\U}({\rm I}_N+tD_*) = q_Y(1) + (1-t\lambda)\sum_{z=2}^N q_Y(z) = 1-t\lambda(1-q_Y(1)).}
Let $\delta=\min(\delta_1,\delta_2)$. From  \eqref{eq:PrivacyContinuityT} and \eqref{eq:UtilityContinuityT}, we have for every $t\in[0,\delta]$
\eqn{LowerBound}{1-t\lambda(1-q_Y(1)) \leq \underline{\mathcalboondox{h}}(\cP(X|Y)-t\lambda\sigma) \leq 1.}
In particular,
\eq{\lim_{\eps\to\cP(X|Y)} \underline{\mathcalboondox{h}}(\eps) = \lim_{t\to0} \underline{\mathcalboondox{h}}(\cP(X|Y)-t\lambda\sigma) = 1 = \underline{\mathcalboondox{h}}(\cP(X|Y)),}
i.e., $\underline{\mathcalboondox{h}}$ is continuous at $\cP(X|Y)$.
\end{proof}

We say that $F\in\ul{\F}$ is an optimal filter at $\eps$ if $\ul{\U}(F) = \underline{\mathcalboondox{h}}(\eps)$ and $\ul{\P}(F)\leq\eps$. As opposed to $\mathcalboondox{h}$, the concavity of $\underline{\mathcalboondox{h}}$ is unknown and hence the existence of an optimal filter at $\eps$ with $\ul{\P}(F)=\eps$ is not immediate. Nonetheless, since $\ul{\P}$ and $\ul{\U}$ are continuous functions,  there exists an optimal filter $F$ at $\eps$ (with $\ul{\P}(F)\leq\eps$) for every $\eps\in[\cP(X),\cP(X|Y)]$. For any $F\in\ul{\F}$ and $\delta>0$, let $B(F,\delta)=\{G\in\ul{\F}:\|G-F\|<\delta\}$. 

\begin{lemma}\label{Lemma_Ball}
Let $\delta>0$ be as in Lemma~\ref{Lemma:LocalLinearityPsi} for ${\rm I}_N$, i.e., $\ul{\U}$ and $\ul{\P}$ are linear on $[{\rm I}_N, {\rm I}_N+\delta D]$ for every $D\in \ul{\D}({\rm I}_N)$. If $\cP(X)<\cP(X|Y)$ and $q_Y(y)>0$ for all $y\in\Y$, then there exists $\eps_\mathsf{L}<\cP(X|Y)$ such that for every $\eps\in[\eps_\mathsf{L},\cP(X|Y)]$ there exists an optimal filter $F_\eps$ at $\eps$ with $F_\eps\in B({\rm I}_N,\delta)$.
\end{lemma}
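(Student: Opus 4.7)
My plan is to argue by contradiction, using compactness of $\ul{\F}$ together with a column-relabeling symmetry. Suppose the conclusion fails. Then I can extract a sequence $\eps_n\uparrow\cP(X|Y)$ with $\eps_n<\cP(X|Y)$ such that no optimal filter at $\eps_n$ belongs to $B({\rm I}_N,\delta)$. For each $n$, I pick an optimal filter $F_n$ at $\eps_n$, which exists because $\ul{\U}$ is continuous and the feasible set $\{F\in\ul{\F}:\ul{\P}(F)\leq\eps_n\}$ is compact. By hypothesis $\|F_n-{\rm I}_N\|\geq\delta$, and compactness of $\ul{\F}$ yields a subsequence $F_{n_k}\to F^\ast$ with $\|F^\ast-{\rm I}_N\|\geq\delta$.

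The first key step is to show that $F^\ast$ must be a permutation matrix. By optimality, $\ul{\U}(F_{n_k})=\underline{\mathcalboondox{h}}(\eps_{n_k})$, and Lemma~\ref{Lemma:Contin} gives $\underline{\mathcalboondox{h}}(\eps_{n_k})\to\underline{\mathcalboondox{h}}(\cP(X|Y))=1$, so continuity of $\ul{\U}$ forces $\ul{\U}(F^\ast)=1$. The chain
\eq{1=\ul{\U}(F^\ast)=\sum_{z\in\Z}\max_{y\in\Y}q_Y(y)F^\ast(y,z)\leq\sum_{z\in\Z}\sum_{y\in\Y}q_Y(y)F^\ast(y,z)=\sum_{y\in\Y}q_Y(y)=1}
must then hold with equality in the middle, which means that for each column $z$ at most one $y$ can have $q_Y(y)F^\ast(y,z)>0$. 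Since $q_Y(y)>0$ for every $y\in\Y$ (this is where the second hypothesis enters), distinct rows of $F^\ast$ have disjoint column supports; combining this with row-stochasticity and $|\Y|=|\Z|=N$, a simple pigeonhole argument forces each row of $F^\ast$ to consist of a single entry equal to $1$, so $F^\ast=\Pi_\sigma$ for some permutation $\sigma$, necessarily nontrivial because $\|F^\ast-{\rm I}_N\|\geq\delta>0$.

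The closing step uses an invariance: both $\ul{\P}$ and $\ul{\U}$ are unchanged under right-multiplication of a filter by a permutation matrix, because such a multiplication merely reorders the $z$-summands of column-wise maxima. Hence $G_{n_k}\coloneqq F_{n_k}\Pi_\sigma^{-1}$ is itself an optimal filter at $\eps_{n_k}$, and $G_{n_k}\to\Pi_\sigma\Pi_\sigma^{-1}={\rm I}_N$, so $G_{n_k}\in B({\rm I}_N,\delta)$ for all sufficiently large $k$. This contradicts the standing assumption that no optimal filter at $\eps_{n_k}$ lies in this ball, and the lemma follows. The main technical obstacle is the structural claim that $\ul{\U}(F^\ast)=1$ forces $F^\ast$ to be a genuine permutation matrix; this is exactly where the assumption $q_Y(y)>0$ for all $y$ is needed, since otherwise a limiting filter could spread its mass in a single column over several $y$'s with $q_Y(y)=0$, and the column-relabeling reduction to ${\rm I}_N$ would fail.
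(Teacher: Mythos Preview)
Your proof is correct and follows essentially the same approach as the paper: a contradiction argument via compactness and continuity of $\underline{\mathcalboondox{h}}$ at $\cP(X|Y)$ (Lemma~\ref{Lemma:Contin}) to force the limit filter to have $\ul{\U}(F^\ast)=1$, followed by the observation that $q_Y(y)>0$ for all $y$ makes such filters exactly the permutation matrices, and finally the column-relabeling invariance of $\ul{\P}$ and $\ul{\U}$ to bring the sequence back into $B({\rm I}_N,\delta)$. The paper organizes this slightly differently (first showing every optimal filter eventually lies in $\bigcup_{F:\ul{\U}(F)=1}B(F,\delta)$, then applying the bijection $G\mapsto GF^{-1}$), but the ideas are identical; if anything, your write-up is more explicit about why $\ul{\U}(F^\ast)=1$ forces $F^\ast$ to be a permutation matrix.
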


\begin{proof}
Let $\ul{\F}^1=\{F\in\ul{\F}:\ul{\U}(F)=1\}$ and let \dsty{\B = \bigcup_{F\in\ul{\F}^1} B(F,\delta)}. The proof is based on the following claim.

\noindent{\bf Claim.} There exists $\eps_\mathsf{L}<\cP(X|Y)$ such that if $F$ is an optimal filter at $\eps$ with $\eps\geq\eps_\mathsf{L}$, then $F\in\B$.

\begin{description}
	\item[\it Proof of the claim.] The proof is by contradiction. Assume that for every $\eps<\cP(X|Y)$ there exists an optimal filter $G_{\eps'}$ at $\eps'\in[\eps, \cP(X|Y))$ with $G_{\eps'}\notin\B$. Since $\underline{\mathcalboondox{h}}$ is a non-decreasing function, we have that $\ul{\U}(G_{\eps'}) = \underline{\mathcalboondox{h}}(\eps')\geq\underline{\mathcalboondox{h}}(\eps)$. Let $K\coloneqq (\cP(X|Y)-\cP(X))^{-1}$. For each $n>K$, let $F_n= G_{\cP(X|Y)-1/n}\not\in\B$. Since $\F\backslash\B$ is compact, there exist $\{n_1<n_2<\cdots\}$ and $F\in\F\backslash\B$ such that $F_{n_k} \to F$ as $k\to\infty$. By continuity of $\ul{\U}$ and $\underline{\mathcalboondox{h}}$ at $\cP(X|Y)$, established in Lemma~\ref{Lemma:Contin}, we have 
\begin{eqnarray*}
1&\geq &\ul{\U}(F) = \lim_{k\to\infty} \ul{\U}(F_{n_k})\\
 &\geq&\lim_{k\to\infty} \underline{\mathcalboondox{h}}(\cP(X|Y) - n_k^{-1}) = \underline{\mathcalboondox{h}}(\cP(X|Y)) = 1.
\end{eqnarray*}
\noindent In particular, we have that $F\in\ul{\F}^1\subset\B$, which contradicts the fact that $F\in\F\backslash\B$. \qed
\end{description}
\noindent The assumption $q_Y(y)>0$ for every $y\in\Y$ implies that $F\in\ul{\F}^1$ if and only if $F$ is a permutation matrix, i.e., $F$ can be obtained by permuting the columns of ${\rm I}_N$. In particular, the mapping $G\mapsto GF^{-1}$ is a bijection between $B(F,\delta)$ and $B({\rm I}_N,\delta)$ which preserves $\ul{\P}$ and $\ul{\U}$, i.e., $\ul{\P}(G)=\ul{\P}(GF^{-1})$ and $\ul{\U}(G)=\ul{\U}(GF^{-1})$ for every $G\in B(F,\delta)$. As mentioned earlier, there exists an optimal filter $F_\eps$ at $\eps$ for every $\eps\in[\cP(X),\cP(X|Y)]$. By the claim, $F_\eps$, for $\eps\geq\eps_\mathsf{L}$, belongs to $\B$ and, in particular, $F_\eps\in B(F,\delta)$ for some $F\in\ul{\F}^1$. By the aforementioned properties of the bijection $G\mapsto GF^{-1}$, the filter $F_\eps F^{-1}$ is an optimal filter at $\eps$ with $F_\eps F^{-1}\in B({\rm I}_N,\delta)$.
\end{proof}

Now we are in position to prove Theorem~\ref{Thm:GeneralizedLocalLinearity}.

\begin{proof}[Proof of Theorem~\ref{Thm:GeneralizedLocalLinearity}]
If $q_Y(y)=0$ for some $y\in\Y$, the effective cardinality of the alphabet of $Y$ is $|\Y|-1$ and thus $\underline{\mathcalboondox{h}}(\eps)$ equals $\mathcalboondox{h}(\eps)$ for every $\eps\in[\cP(X),\cP(X|Y)]$. In this case,  $\underline{\mathcalboondox{h}}$ is piecewise linear and \eqref{eq:PsiExtremePoint} follows trivially by Theorem~\ref{Thm:PiecewiseLinearity}. In what follows, we assume that $q_Y(y)>0$ for all $y\in\Y$.

Let $\delta>0$ and $\eps'_\mathsf{L}<\cP(X|Y)$ be as in Lemma~\ref{Lemma_Ball}. For each $\eps\in[\eps'_\mathsf{L},\cP(X|Y))$, let $G_\eps$ be an optimal filter at $\eps$ with $G_\eps\in B({\rm I}_N,\delta)$ whose existence was established in Lemma~\ref{Lemma_Ball}. Let $t_\eps\in[0,\delta]$ and $D_\eps\in\ul{\D}({\rm I}_N)$ be such that $G_\eps={\rm I}_N + t_\eps D_\eps$ for every $\eps\in[\eps'_\mathsf{L},\cP(X|Y))$. As in \eqref{eq:PrivacyMasterEquation} and \eqref{eq:UtilityMasterEquation} in the proof of Lemma~\ref{Lemma:LocalLinearity}, for every $t\in[0,\delta]$ and $D\in\ul{\D}({\rm I}_N)$,
\begin{equation}
\begin{aligned}
\ul{\P}({\rm I}_N+tD) &= \cP(X|Y) + tb^{(D)} \\
\ul{\U}({\rm I}_N+tD) &= 1 + t \beta^{(D)},
\end{aligned}
\label{eq:PrivacyUtilityUnderbar}
\end{equation}
where
\begin{equation}
\begin{aligned}
b^{(D)} &= \sum_{z\in\Z} \max_{x\in\M_z} (PD)(x,z) \\
\beta^{(D)} &= \sum_{z\in\Z} q(z) D(z,z),
\end{aligned}
\label{eq:bbeta}
\end{equation}
where $\M_z = \{x\in\X : P(x,z)\geq P(x',z)\textnormal{ for all }x'\in\X\}$. 
Since $\ul{\P}(F)\leq\cP(X|Y)$ for all $F\in\ul{\F}$, it is immediate that $b^{(D)}\leq 0$ for every $D\in\ul{\D}({\rm I}_N)$. Moreover, since $\ul{\P}(G_\eps)\leq \eps$, we have that $b^{(D_\eps)}<0$ for all $\eps\in[\eps'_\mathsf{L},\cP(X|Y))$. By definition of $\ul{\D}({\rm I}_N)$, it is clear that if $D\in\ul{\D}({\rm I}_N)$, then we have $D(y,y)\leq0$ for all $y\in\Y$, which together with the fact that $\|D\|=1$ for all $D\in\ul{\D}({\rm I}_N)$, implies that $\beta^{(D)}<0$ for all $D\in\ul{\D}({\rm I}_N)$.  We first establish the following intuitive claim.

\noindent{\bf Claim.} Let $\eps'_\mathsf{L}<\cP(X|Y)$ be as defined in Lemma~\ref{Lemma_Ball}. Then, there exists an optimal filter $G_\eps$ at $\eps$ for each $\eps\in [\eps'_\mathsf{L}, \cP(X|Y)]$ such that $\ul{\P}(G_\eps)=\eps$ and $\ul{\U}(G_\eps)=\underline{\mathcalboondox{h}}(\eps)$. 
\begin{description}
	\item[\it Proof of Claim.]  
	The filter $G_\eps={\rm I}_N + t_\eps D_\eps$ is optimal at $\eps$ for every $\eps\in[\eps'_\mathsf{L},\cP(X|Y))$.  To reach contradiction, assume that there exists $\eps_0<\eps$ such that $\ul{\P}(G_\eps)=\eps_0$.  According to \eqref{eq:PrivacyUtilityUnderbar}, we obtain $\cP(X|Y)+t_\eps b^{(D_\eps)}=\eps_0<\eps$ and hence $$t_\eps>\frac{\cP(X|Y)-\eps}{-b^{(D_\eps)}}\eqqcolon t'.$$ 
	Now consider the filter ${\rm I}_N + t' D_\eps$. Since $t'\leq \delta$, we have from \eqref{eq:PrivacyUtilityUnderbar} that $\ul{\P}({\rm I}_N + t' D_\eps)=\eps$ and 	
	$$\underline{\mathcalboondox{h}}(\eps)\stackrel{(a)}{=}1 + t_\eps \beta^{(D_\eps)}\stackrel{(b)}{<}\ul{\U}({\rm I}_N + t' D_\eps)=1+t'\beta^{(D_\eps)},$$
where $(a)$ is due to the optimality of $G_\eps$ and $(b)$ follows from the negativity of $\beta^{(D_\eps)}$. The above inequality contradicts the maximality of $\underline{\mathcalboondox{h}}(\eps)$.	This implies that $\ul{\P}(G_\eps)=\eps$ which, according to \eqref{eq:PrivacyUtilityUnderbar}, yields 
\eqn{EqualityEps}{\underline{\mathcalboondox{h}}(\eps) = 1 - (\cP(X|Y)-\eps)\frac{\beta^{(D_\eps)}}{b^{(D_\eps)}},}
for all $\eps\in[\eps'_\mathsf{L},\cP(X|Y))$. 
\qed
\end{description}
\noindent

Now fix $\eps'\in [\eps'_\mathsf{L}, \cP(X|Y)]$ with $\eps\leq\eps'$. On the one hand, according to \eqref{EqualityEps}, we know that 
\eqn{EqualityEPsPrime}{\underline{\mathcalboondox{h}}(\eps') = 1 - (\cP(X|Y)-\eps')\frac{\beta^{(D_{\eps'})}}{b^{(D_{\eps'})}}.} On the other hand, we obtain from \eqref{eq:PrivacyUtilityUnderbar} that  $0\leq\frac{\cP(X|Y)-\eps'}{-b^{(D_\eps)}} \leq t_\eps$ and hence
\aln{
\label{eq:PrivThm3}\ul{\P}\left({\rm I}_N + \frac{\cP(X|Y)-\eps'}{-b^{(D_\eps)}} D_\eps\right) &= \eps',\\
\label{eq:UtiThm3}\ul{\U}\left({\rm I}_N + \frac{\cP(X|Y)-\eps'}{-b^{(D_\eps)}} D_\eps\right) &= 1- (\cP(X|Y)-\eps') \frac{\beta^{(D_\eps)}}{b^{(D_\eps)}}.
}
Comparing \eqref{EqualityEPsPrime} and \eqref{eq:UtiThm3}, we conclude that 
\eq{1 - (\cP(X|Y)-\eps')\frac{\beta^{(D_{\eps'})}}{b^{(D_{\eps'})}} = \underline{\mathcalboondox{h}}(\eps') \geq 1- (\cP(X|Y)-\eps')\frac{\beta^{(D_\eps)}}{b^{(D_\eps)}},}
and hence the function \dsty{\eps \mapsto \frac{\beta^{(D_\eps)}}{b^{(D_\eps)}}} is non-increasing over $[\eps'_\mathsf{L},\cP(X|Y))$. Therefore, since \dsty{\frac{\beta^{(D_\eps)}}{b^{(D_\eps)}}>0}, the limit \dsty{\lim_{\eps\to\cP(X|Y)^-} \frac{\beta^{(D_\eps)}}{b^{(D_\eps)}}\eqqcolon A} exists.

Let $K=(\cP(X|Y)-\eps'_\mathsf{L})^{-1}$. For each $n>K$, let $F_n=G_{\cP(X|Y)-\frac{1}{n}}$. Write $F_n={\rm I}_N+t_nD_n$ with $t_n\in[0,\delta]$ and $D_n\in\ul{\D}({\rm I}_N)$. Since $\ul{\D}({\rm I}_N)$ is compact, there exist $\{n_1<n_2<\cdots\}$ and $D^*\in\ul{\D}({\rm I}_N)$ such that $D_{n_k}\to D^*$ as $k\to\infty$. By continuity of the mappings $D\mapsto b^{(D)}$ and $D\mapsto \beta^{(D)}$, we have that $b^{(D_{n_k})} \to b^{(D^*)}$ and $\beta^{(D_{n_k})}\to\beta^{(D^*)}$ as $k\to\infty$. 

\noindent{\bf Claim.} We have that $b^{(D^*)}<0$ and, in particular, \dsty{A=\frac{\beta^{(D^*)}}{b^{(D^*)}}}.

\begin{description}
	\item[\it Proof of Claim.] Recall that $F\in\ul{F}^1$ if and only if $F$ is a permutation matrix. In particular, $\ul{\F}^1$ is finite with $|\ul{\F}^1|=N!$. Recall that $b^{(D^*)}\leq 0$. Assume that $b^{(D^*)}=0$. Since \dsty{\frac{\beta^{(D_{n_k})}}{b^{(D_{n_k})}} \to A\in[0,\infty)} and $b^{(D_{n_k})}\to b^{(D^*)}=0$ as $k\to \infty$, we have that $\beta^{(D_{n_k})} \to 0$ and hence $\beta^{(D^*)}=0$. This implies that $\ul{\U}({\rm I}_N+tD^*) = 1$ for all $t\in[0,\delta]$, i.e., ${\rm I}_N+tD^*\in\ul{\F}^1$ for all $t\in[0,\delta]$. This contradicts the fact that $\ul{\F}^1$ is finite. \qed
\end{description}

\noindent The claim implies that for $\eps\in[\cP(X|Y)+\delta b^{(D^*)},\cP(X|Y)]$,
\al{
\ul{\P}\left({\rm I}_N + \frac{\cP(X|Y)-\eps}{-b^{(D^*)}} D^*\right) &= \eps,\\
\ul{\U}\left({\rm I}_N + \frac{\cP(X|Y)-\eps}{-b^{(D^*)}} D^*\right) &= 1- (\cP(X|Y)-\eps) A.
}
Recall that \dsty{\frac{\beta^{(D^*)}}{b^{(D^*)}} = A \leq \frac{\beta^{(D_\eps)}}{b^{(D_\eps)}}} for all $\eps\in[\eps'_\mathsf{L},\cP(X|Y))$. Let $\eps_\mathsf{L}\coloneqq\max\{\eps'_\mathsf{L},\cP(X|Y)+\delta b^{(D^*)}\}$. Then for all $\eps\in[\eps_\mathsf{L},\cP(X|Y)]$
\begin{eqnarray}
\underline{\mathcalboondox{h}}(\eps) &\geq& 1 - (\cP(X|Y)-\eps) \frac{\beta^{(D^*)}}{b^{(D^*)}} \nonumber\\
&\geq& 1- (\cP(X|Y)-\eps) \frac{\beta^{(D_\eps)}}{b^{(D_\eps)}} =\underline{\mathcalboondox{h}}(\eps), \label{eq:LinearityPsiProof}
\end{eqnarray}
where the equality follows from \eqref{EqualityEps}. This proves that $\underline{\mathcalboondox{h}}$ is linear on $\eps\in[\eps_\mathsf{L},\cP(X|Y)]$.

Recall that $\beta^{(D)}<0$ for all $D\in\ul{\D}({\rm I}_N)$. Clearly,  \eqref{eq:LinearityPsiProof} implies that 
\eqn{eq:MasterDerivativePsi}{\underline{\mathcalboondox{h}}'(\cP(X|Y)) = \min_{D\in\ul{\D}({\rm I}_N)} \frac{\beta^{(D)}}{b^{(D)}}.}
If $b^{(D)}=0$ for some $D\in\ul{\D}({\rm I}_N)$, the term \dsty{\frac{\beta^{(D)}}{b^{(D)}}} is defined to be $+\infty$. Notice that this convention agrees with the fact that if $b^{(D)}=0$ then $D$ cannot be an \emph{optimal direction}. Furthermore, for every $D'\in\ul{\D}({\rm I}_N)$ such that $\underline{\mathcalboondox{h}}'(\cP(X|Y)) = \frac{\beta^{(D')}}{b^{(D')}}$, there exists $\eps_\mathsf{L}<\cP(X|Y)$ (depending on $D'$) such that
\eqn{eq:MasterOptimalFilterPsi}{{\rm I}_N + \frac{\cP(X|Y)-\eps}{-b^{(D')}} D'}
achieves $\underline{\mathcalboondox{h}}(\eps)$ for every $\eps\in[\eps_\mathsf{L},\cP(X|Y)]$. In addition, assume that for each $y\in\Y$ there exists (a unique) $x_y\in\X$ such that $P_{X|Y}(x_y|y) > P_{X|Y}(x|y)$, for all $x\neq x_y$.
In particular, $\M_z=\{x_z\}$ for every $z\in\Z$ and hence \eqref{eq:bbeta} becomes
\eq{b^{(D)} = \sum_{z\in\Z} (PD)(x_z,z) \quad \textnormal{ and } \quad \beta^{(D)} = \sum_{z\in\Z} q_Y(z) D(z,z),}
for every $D\in\ul{\D}({\rm I}_N)$. Using the fact that \dsty{\sum_{z\in\Z} D(y,z) = 0} for all $y\in\Y$, we obtain
\begin{equation*}
b^{(D)} = - \sum_{y\in\Y} \sum_{z\neq y} (P(x_y,y)-P(x_z,y))D(y,z),
\end{equation*}
and 
\begin{equation*}
\beta^{(D)} = - \sum_{y\in\Y} \sum_{z\neq y} q_Y(y) D(y,z).
\end{equation*}
Therefore, for every $D\in\ul{\D}({\rm I}_N)$,
\eqn{eq:MasterEquationQuotient}{\frac{\beta^{(D)}}{b^{(D)}} = \frac{\sum_{y\in\Y} \sum_{z\neq y} q_Y(y) D(y,z)}{\sum_{y\in\Y} \sum_{z\neq y} (P(x_y,y)-P(x_z,y))D(y,z)}.}
Since \dsty{\frac{\sum_k a_k x_k}{\sum_k b_k x_k} \geq \min_k \frac{a_k}{b_k}} for $a_k>0$  and $b_k,x_k\geq0$ with $\sum_k x_k>0$, we obtain from \eqref{eq:MasterEquationQuotient} that for every $D\in\ul{\D}({\rm I}_N)$
\eq{\frac{\beta^{(D)}}{b^{(D)}} \geq \min_{(y,z)\in\Y\times\Z} \frac{q_Y(y)}{P(x_y,y)-P(x_z,y)}.}
Equation \eqref{eq:MasterDerivativePsi} implies that
\eq{\underline{\mathcalboondox{h}}'(\cP(X|Y)) \geq \min_{(y,z)\in\Y\times\Z} \frac{q_Y(y)}{P(x_y,y)-P(x_z,y)}.}
Assume that $(y_0,z_0)$ attains the above minimum.  We note that
one can easily show from  \eqref{LowerBound} that $0\leq\underline{\mathcalboondox{h}}'(\eps)\leq\frac{1-q_Y(1)}{\sigma}<\infty$, for some $\sigma>0$. Hence,  we  have $y_0\neq z_0$. Now, consider the direction $D_*$ such that
\eq{D_*(y,z) = \begin{cases} \lambda, & y=y_0,z=z_0 \\ -\lambda, & y=z=y_0 \\ 0, & \textnormal{otherwise,} \end{cases}}
where $\lambda=2^{-1/2}$. Equation \eqref{eq:MasterEquationQuotient} implies then that
\eq{\frac{\beta^{(D_*)}}{b^{(D_*)}} = \frac{q_Y(y_0)}{P(x_{y_0},y_0) - P(x_{z_0},y_0)},}
and hence
\begin{eqnarray*}
\underline{\mathcalboondox{h}}'(\cP(X|Y)) &\leq& \frac{q_Y(y_0)}{P(x_{y_0},y_0) - P(x_{z_0},y_0)} \\
&=& \min_{(y,z)\in\Y\times\Z} \frac{q_Y(y)}{P(x_y,y)-P(x_z,y)}.
\end{eqnarray*}
As a consequence,
\eq{\underline{\mathcalboondox{h}}'(\cP(X|Y)) = \min_{(y,z)\in\Y\times\Z} \frac{q_Y(y)}{P(x_y,y)-P(x_z,y)}.}
Moreover, \eqref{eq:MasterOptimalFilterPsi} implies that there exists $\eps_\mathsf{L}^{y_0,z_0}<\cP(X|Y)$ such that \dsty{{\rm I}_N + \frac{\cP(X|Y)-\eps}{-b^{(D_*)}} D_*} achieves $\underline{\mathcalboondox{h}}(\eps)$ for every $\eps\in[\eps_\mathsf{L}^{y_0,z_0},\cP(X|Y)]$. Note that
\eq{{\rm I}_N+\frac{\cP(X|Y)-\eps}{-b^{(D_*)}} D_*= \mathsf{Z}^{y_0,z_0}(\zeta^{y_0,z_0}(\eps)),}
where \dsty{\zeta^{y_0,z_0}(\eps) = \frac{\cP(X|Y) - \eps}{P(x_{y_0},y_0) - P(x_{z_0},y_0)}}.
\end{proof}

\section{Proof of Theorem~\ref{Prop:IIDDataUnderbar}}
\label{Appendix:ProofPropIIDDataUnderbar}

Let $P=[P(x^n,y^n)]_{x^n,y^n\in\{0,1\}^n}$ denotes the joint probability matrix of $X^n$ and $Y^n$ and $q(y^n)=\Pr(Y^n=y^n)$ for $y^n\in\{0,1\}^n$. Let ${\bf 0}=(0,0,\ldots,0)$ and ${\bf 1}=(1,1,\ldots,1)$. We will show that $(X^n,Y^n)$ satisfies the hypotheses of Theorem~\ref{Thm:GeneralizedLocalLinearity} with $y_0={\bf 1}$ and $z_0={\bf 0}$.

Under the assumptions ($\textnormal{a}_1$) and ($\textnormal{b}$), it is straightforward to verify that
\begin{equation}
\label{Joint_Dist_Thm3}
P(x^n,y^n) = (\bar{\alpha}\bar{p})^n \prod_{k=1}^n \left(\frac{p}{\bar{p}}\right)^{x_k} \left(\frac{\alpha}{\bar{\alpha}}\right)^{x_k \oplus y_k},
\end{equation}
for every $x^n,y^n\in\{0,1\}^n$. By assumption, $\cP(X^n)=p^n<\bar{\alpha}^n=\cP(X^n|Y^n)$. It is also straightforward to verify that $q(y^n)>0$ for all $y\in\{0,1\}^n$. Since $\bar{\alpha}\bar{p}>\alpha p$, we have from  \eqref{Joint_Dist_Thm3} that
\eq{\Pr(X^n=z^n,Y^n=z^n) > \Pr(X^n=x^n,Y^n=z^n),}
for all $x^n\neq z^n$. In the notation of Theorem~\ref{Thm:GeneralizedLocalLinearity}, $x^n_{z^n}=z^n$ for all $z^n\in\{0,1\}^n$.
Note that
\begin{eqnarray*}
&&\min_{y^n,z^n\in\{0,1\}^n} \frac{q(y^n)}{P(x^n_{y^n}, y^n)-P(x^n_{z^n}, y^n)} \\
&=&\min_{y^n\in\{0,1\}^n} \frac{q(y^n)}{P(y^n,y^n)- \min\limits_{z^n\neq y^n} P(z^n,y^n)}.
\end{eqnarray*}
It is easy to show that \dsty{\min_{z^n\neq y^n} P(z^n,y^n) = (\alpha p)^n \prod_{k=1}^n \left(\frac{p}{\bar{p}}\right)^{-y_k}} and that the minimum is  attained by $z^n=(\bar{y}_1, \bar{y}_2, \dots, \bar{y}_n)$. As a consequence,
\al{
&\min_{y^n,z^n\in\{0,1\}^n} \frac{q(y^n)}{P(x^n_{y^n}, y^n)-P(x^n_{z^n}, y^n)} \\
&= \min_{y^n\in\{0,1\}^n} \frac{\sum\limits_{x^n\in\{0,1\}^n} \prod\limits_{k=1}^n \left(\frac{p}{\bar{p}}\right)^{x_k-y_k} \left(\frac{\alpha}{\bar{\alpha}}\right)^{x_k \oplus y_k}}{1 - \left(\frac{p\alpha}{\bar{p}\bar{\alpha}}\right)^n \Pi_{y^n}^{-2}}\\
&= \min_{y^n\in\{0,1\}^n} \frac{\prod\limits_{k=1}^n \left[(\frac{p}{\bar{p}})^{-y_k}(\frac{\alpha}{\bar{\alpha}})^{y_k}+(\frac{p}{\bar{p}})^{1-y_k}(\frac{\alpha}{\bar{\alpha}})^{1-y_k}\right]}{1 - \left(\frac{p\alpha}{\bar{p}\bar{\alpha}}\right)^n \Pi_{y^n}^{-2}},}
where \dsty{\Pi_{y^n} = \prod_{k=1}^n \left(\frac{p}{\bar{p}}\right)^{y_k}}. Observe that the denominator is maximized when $y^n={\bf 1}$. Using the fact that $p\geq\frac{1}{2}\geq\bar{p}$, one can show that the numerator is minimized when $y^n={\bf 1}$. In particular,
\eq{\min_{y^n,z^n\in\{0,1\}^n} \frac{q(y^n)}{P(x^n_{y^n}, y^n)-P(x^n_{z^n}, y^n)} = \frac{(\alpha\bar{p}+\bar{\alpha}p)^n}{(\bar{\alpha}p)^n-(\alpha\bar{p})^n},}
and the minimum is attained by $(y_0^n,z_0^n)=({\bf 1},{\bf 0})$.

Therefore $(X^n,Y^n)$ satisfies the hypotheses of Theorem~\ref{Thm:GeneralizedLocalLinearity} with $(y_0^n,z_0^n)=({\bf 1},{\bf 0})$. Thus, there exists $\eps_\mathsf{L}'<\bar{\alpha}^n$ such that for every $\eps\in[\eps_\mathsf{L}',\bar{\alpha}^n]$
\eq{\underline{\mathcalboondox{h}}(\eps) = 1 - \frac{\bar{\alpha}^n-\eps}{(\bar{\alpha}p)^n-(\alpha\bar{p})^n} q^n.}
Moreover, $\mathsf{Z}^{{\bf 1},{\bf 0}}(\zeta^{y_0,z_0}(\eps))$ achieves $\underline{\mathcalboondox{h}}(\eps)$ for every $\eps\in[\eps_\mathsf{L}',\bar{\alpha}^n]$, where
\eq{\zeta^{y_0,z_0}(\eps) = \frac{\bar{\alpha}^n-\eps}{(\bar{\alpha}p)^n-(\alpha\bar{p})^n}.}
Recall that $\underline{\mathcalboondox{h}}(\eps) = \underline{\mathcalboondox{h}}_n^n(\eps^{1/n})$ and let $\eps_\mathsf{L}=(\eps_\mathsf{L}')^{1/n}$. Therefore, $\underline{\mathcalboondox{h}}_n^n(\eps) = 1 - \zeta_n(\eps) q^n$ for all $\eps\in[\eps_\mathsf{L},\bar{\alpha}]$ which is attained by the Z-channel $\mathsf{Z}_n(\zeta_n(\eps))$, where $\zeta_n(\eps)\coloneqq \zeta^{y_0,z_0}(\eps^n)$.

\section{Proof of Proposition~\ref{Propo_h_iid}}\label{Appendix:PropositionIID}
	For any privacy filter satisfying  \eqref{eq:Defhni}, $(X^n,Z^n)$ and $(Y^n,Z^n)$ are i.i.d. By Lemma~\ref{Lemma_P_C_IID}, we have $\cP(X^n|Z^n)=(\cP(X|Z))^n$ and $\cP(Y^n|Z^n)=(\cP(Y|Z))^n$ where $(X,Y,Z)$ has the common distribution of $\{(X_k,Y_k,Z_k)\}_{k=1}^n$. In particular,
	\eq{\mathcalboondox{h}_n^\mathsf{i}(\eps) = \sup_{\cP^{1/n}(X^n|Z^n) \leq \eps} \cP^{1/n}(Y^n|Z^n) = \sup_{\cP(X|Z) \leq \eps} \cP(Y|Z),}
	where the first supremum assumes \eqref{eq:Defhni} and the second supremum is implicitly constrained to $\Z=\{0,1\}$. The result then follows from Theorem~\ref{Theorem_Linearity_BIBO}.

\section{Proof of Corollary~\ref{Corollary:DifferenceGap}}
\label{Appendix:ProofCorollaryDifferenceGap}

Assume that $p>\frac{1}{2}$. By Theorem~\ref{Prop:IIDDataUnderbar}, for every $\eps\in[\eps_\mathsf{L},\bar{\alpha}]$ we have  \dsty{\underline{\mathcalboondox{h}}_n(\eps) = \left[A_n\eps^n+B_n\right]^{1/n}}, where \dsty{A_n = \frac{q^n}{(\bar{\alpha}p )^n-(\alpha\bar{p})^n}} and \dsty{B_n = 1-\frac{\bar{\alpha}^n q^n}{(\bar{\alpha}p )^n-(\alpha\bar{p})^n}}. In particular,
\aln{
\label{eq:DerivativeUnderbarCorollary} \underline{\mathcalboondox{h}}'_n(\eps) &= A_n\left(\frac{\eps}{\underline{\mathcalboondox{h}}_n(\eps)}\right)^{n-1},\\
\nonumber \underline{\mathcalboondox{h}}''_n(\eps) &= (n-1) \frac{A_nB_n}{\underline{\mathcalboondox{h}}_n^{n+1}(\eps)} \left(\frac{\eps}{\underline{\mathcalboondox{h}}_n(\eps)}\right)^{n-2}.
}
Since $p>\frac{1}{2}$ and $\alpha>0$, we have  $B_n\to1$ as $n\to\infty$. Let $N_0\geq1$ be such that $B_n\geq0$ for all $n\geq N_0$. In this case, we have that $\underline{\mathcalboondox{h}}''_n(\eps)\geq0$ for all $\eps\in[\eps_\mathsf{L},\bar{\alpha}]$ and $n\geq N_0$. In particular, $\underline{\mathcalboondox{h}}_n$ is convex on $[\eps_\mathsf{L},\bar{\alpha}]$. As a consequence, for all $\eps\in[\eps_\mathsf{L},\bar{\alpha}]$ and $n\geq N_0$
\eq{\underline{\mathcalboondox{h}}_n(\eps) \geq 1-(\bar{\alpha}-\eps)\underline{\mathcalboondox{h}}_n'(\bar{\alpha}).}
Since $\mathcalboondox{h}^\mathsf{i}_n(\eps) = \underline{\mathcalboondox{h}}_1(\eps) = 1-(\bar{\alpha}-\eps)\underline{\mathcalboondox{h}}_1'(\bar{\alpha})$ for all $\eps\in[p,\bar{\alpha}]$, the above inequality implies that
\eq{\underline{\mathcalboondox{h}}_n(\eps) - \mathcalboondox{h}^\mathsf{i}_n(\eps) \geq (\bar{\alpha}-\eps)(\underline{\mathcalboondox{h}}_1'(\bar{\alpha})-\underline{\mathcalboondox{h}}_n'(\bar{\alpha}))}
for all $\eps\in[\eps_\mathsf{L},\bar{\alpha}]$ and $n\geq N_0$. The result follows from \eqref{eq:DerivativeUnderbarCorollary}.

Now, assume that $p=\frac{1}{2}$. In this case, we have  for all $\eps\in[\eps_\mathsf{L},\bar{\alpha}]$
\eq{\underline{\mathcalboondox{h}}_n(\eps) = \left(\frac{\eps^n-\alpha^n}{\bar{\alpha}^n-\alpha^n}\right)^{1/n} \quad \textnormal{ and } \quad \mathcalboondox{h}^\mathsf{i}_n(\eps) = \frac{\eps-\alpha}{\bar{\alpha}-\alpha}.}
Let $\Xi_n:[\frac{1}{2},\bar{\alpha}]\to\R$ be given by $\Xi_n(\eps)=\underline{\mathcalboondox{h}}_n(\eps)-\mathcalboondox{h}^\mathsf{i}_n(\eps)$.

\noindent{\bf Claim.} The function $\Xi_n$ is decreasing on $[\frac{1}{2},\bar{\alpha}]$.

\begin{description}
	\item[\it Proof of Claim.] We shall show that $\Xi_n'(\eps)\leq0$ for all $\eps\in[\frac{1}{2},\bar{\alpha}]$. A straightforward computation shows that
	\eq{\Xi_n'(\eps) = \frac{1}{\left[1-\left(\frac{\alpha}{\eps}\right)^n\right]^{(n-1)/n}} \frac{1}{[\bar{\alpha}^n-\alpha^n]^{1/n}} - \frac{1}{\bar{\alpha}-\alpha}.}
	This function is clearly decreasing, and so it is enough to show that $\Xi_n'(\frac{1}{2})\leq0$. Note that $\Xi_n'(\frac{1}{2})\leq0$ if and only if
	\eqn{eq:InqDecreasing}{\frac{\left(1-\frac{\alpha}{\bar{\alpha}}\right)^n}{1-\left(\frac{\alpha}{\bar{\alpha}}\right)^n} \leq [1-(2\alpha)^n]^{n-1}.}
	Observe that \dsty{\frac{\left(1-\frac{\alpha}{\bar{\alpha}}\right)^n}{1-\left(\frac{\alpha}{\bar{\alpha}}\right)^n} \leq \left(1-\frac{\alpha}{\bar{\alpha}}\right)^{n-1}}. Using the fact that $4\alpha\bar{\alpha}\leq1$, it is straightforward to verify that \eqref{eq:InqDecreasing} holds. \qed
\end{description}

\noindent Since $\Xi_n$ is decreasing over $[\frac{1}{2},\bar{\alpha}]$, we obtain for all $\eps\in[\eps_\mathsf{L},\bar{\alpha}]$
\eq{0 \leq \underline{\mathcalboondox{h}}_n(\eps) - \mathcalboondox{h}^\mathsf{i}_n(\eps) \leq \Xi_n\left(\frac{1}{2}\right)=\frac{1}{2} \left[\left(\frac{1-(2\alpha)^n}{\bar{\alpha}^n-\alpha^n}\right)^{1/n} - 1\right].}
Since $1-(2\alpha)^n\leq1-\left(\frac{\alpha}{\bar{\alpha}}\right)^n$, it is straightforward to show that $\Xi_n\left(\frac{1}{2}\right)\leq \frac{\alpha}{2\bar{\alpha}}$, which completes the proof.

\section{Proof of Theorem~\ref{Theorem_MarkovMemory}}\label{Appendix_Proof_Memory}

As before, let $P=[P(x^n,y^n)]_{x^n,y^n\in\{0,1\}^n}$ denote the joint probability matrix of $X^n$ and $Y^n$ and let $q(y^n)=\Pr(Y^n=y^n)$ for $y^n\in\{0,1\}^n$. We first show that $(X^n,Y^n)$ satisfies the hypotheses of Theorem~\ref{Thm:GeneralizedLocalLinearity}, and thus we can use \eqref{eq:DerivativePsi} to obtain bounds on $\underline{\mathcalboondox{h}}'(\cP(X^n|Y^n))$. ((Note that $\cP(X^n)<\cP(X^n|Y^n)$  by the assumption.))

Assumptions ($\textnormal{a}_2$) and ($\textnormal{b}$) imply that, for all $x^n, y^n\in \{0, 1\}^n$
\begin{equation}\label{Jont_Dis_Thm4}
P(x^n,y^n) = (\bar{\alpha}\bar{r})^n \frac{\bar{p}}{\bar{r}} \left(\frac{p}{\bar{p}}\right)^{x_1} \left(\frac{\alpha}{\bar{\alpha}}\right)^{x_1 \oplus y_1} \Upsilon(x^n, y^n),
\end{equation}
where   $\displaystyle \Upsilon(x^n, y^n)=\prod_{k=2}^n \left(\frac{r}{\bar{r}}\right)^{x_k\oplus x_{k-1}} \left(\frac{\alpha}{\bar{\alpha}}\right)^{x_k \oplus y_k}$ and  the product equals one if $n=1$. Since $\alpha>0$, it is clear that $q(y^n)>0$ for all $y^n\in\{0,1\}^n$. Let $N_0(z^n) = |\{1\leq k\leq n : z_k=0\}|$ and $N_1(z^n) = |\{1\leq k\leq n : z_k=1\}|$ for  any binary vector $z^n\in\{0,1\}^n$. Recall that $n$ is odd, so either $N_0(z^n) < N_1(z^n)$ or $N_0(z^n) > N_1(z^n)$. The following lemma shows that for every $y^n\in\{0,1\}^n$ there exists (a unique) $x^n_{y^n}\in\{0,1\}^n$ such that $P(x^n_{y^n},y^n) > P(x^n,y^n)$ for all $x^n\neq x^n_{y^n}$.

\begin{lemma}\label{Lemma:JointProbability_MaximizationMemory}
Let $(X^n,Y^n)$ be as in the hypothesis of Theorem~\ref{Theorem_MarkovMemory}. Then, we have  for any $y^n\in\{0, 1\}^n$
\eq{P(x^n, y^n) \leq \begin{cases}(\bar{\alpha}\bar{r})^n \frac{\bar{p}}{\bar{r}} \left(\frac{\alpha}{\bar{\alpha}}\right)^{N_1(y^n)}, & \text{if}~ N_0(y^n)>N_1(y^n),\\
(\bar{\alpha}\bar{r})^n \frac{p}{\bar{r}} \left(\frac{\alpha}{\bar{\alpha}}\right)^{N_0(y^n)}, & \text{if}~ N_0(y^n) < N_1(y^n),\end{cases}}
for all $x^n\in\{0,1\}^n$ with equality if and only if $x^n=\bold{0}$ or $x^n=\bold{1}$, respectively.
\end{lemma}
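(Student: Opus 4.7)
Here is the plan.

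\textbf{Reduction to a clean form.} The starting point is the closed form \eqref{Jont_Dis_Thm4}. Setting $a\coloneqq\alpha/\bar\alpha$, $b\coloneqq r/\bar r$, $c\coloneqq p/\bar p$, and $D\coloneqq(\bar\alpha\bar r)^n\bar p/\bar r$, one rewrites
\[
P(x^n,y^n)=D\,c^{x_1}\,a^{H(x^n,y^n)}\,b^{T(x^n)},
\]
where $H(x^n,y^n)=\sum_{k=1}^n x_k\oplus y_k$ is the Hamming distance and $T(x^n)=\sum_{k=2}^n x_k\oplus x_{k-1}$ is the number of transitions in $x^n$. From this the two candidate right-hand sides are exactly the values $P(\mathbf 0,y^n)=D\,a^{N_1(y^n)}$ and $P(\mathbf 1,y^n)=D\,c\,a^{N_0(y^n)}$. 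Note also that, under the hypotheses, $a,b\in(0,1)$ and $c\geq 1$.

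\textbf{The non-constant case.} The main step is to show that every non-constant $x^n$ (i.e., $T(x^n)\geq 1$) is strictly dominated by the corresponding constant codeword. The crucial input is the Markov hypothesis $r/\bar r<(\alpha/\bar\alpha)^{n-1}$, rewritten as $b<a^{n-1}$. Since $b<1$, one has $b^{T}\leq b<a^{n-1}$ whenever $T\geq 1$. Splitting on $x_1$:
\begin{itemize}
\item If $x_1=0$, then $P(x^n,y^n)<D\,a^{H(x^n,y^n)+n-1}\leq D\,a^{N_1(y^n)}=P(\mathbf 0,y^n)$. The last inequality is $H(x^n,y^n)+n-1\geq N_1(y^n)$, trivial unless $N_1(y^n)=n$, in which case $y^n=\mathbf 1$ and $x_1=0$ force $H\geq 1$.
\item If $x_1=1$, symmetrically $P(x^n,y^n)<P(\mathbf 1,y^n)$.
\end{itemize}

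\textbf{Comparing the two constants.} The ratio is
\[
\frac{P(\mathbf 0,y^n)}{P(\mathbf 1,y^n)}=\frac{1}{c}\,(1/a)^{N_0(y^n)-N_1(y^n)}.
\]
The scalar hypothesis $\bar\alpha\bar p>\alpha p$ is exactly $1/a>c$, so $(1/a)^k>c$ for every integer $k\geq 1$. Since $n$ is odd, $N_0(y^n)\neq N_1(y^n)$, and thus $P(\mathbf 0,y^n)>P(\mathbf 1,y^n)$ whenever $N_0(y^n)>N_1(y^n)$, and strictly the reverse inequality holds when $N_0(y^n)<N_1(y^n)$. Combining this with the previous step yields the claimed bound with equality precisely at $\mathbf 0$ (respectively $\mathbf 1$) in the two cases.

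\textbf{Main obstacle.} The delicate point is that a general $x^n$ may have smaller Hamming distance to $y^n$ than either $\mathbf 0$ or $\mathbf 1$ does, and so without a quantitative penalty for transitions one could not possibly conclude. The hypothesis $b<a^{n-1}$ is calibrated exactly so that a single transition is already more costly than any conceivable Hamming gain of size $n-1$; the role of the odd-$n$ assumption is only the secondary one of forbidding ties between $N_0$ and $N_1$. Verifying that the two hypotheses together give strict inequalities in all the sub-cases is the only place where care is required; all other steps are mechanical manipulations of the factored form of $P(x^n,y^n)$.
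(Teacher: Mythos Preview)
Your proof is correct and somewhat more direct than the paper's. The paper first proves a claim that any maximizer of $x^n\mapsto P(x^n,y^n)$ must be constant, arguing by \emph{backward induction}: assuming $x_n=\cdots=x_l$, it compares the two ways of extending ($x_{l-1}=x_l$ versus $x_{l-1}=\bar x_l$) and uses the chain of inequalities
\[
\frac{r}{\bar r}\prod_{k=l}^n\left(\frac{\alpha}{\bar\alpha}\right)^{1-x_{l-1}\oplus y_k}\leq \frac{r}{\bar r}<\left(\frac{\alpha}{\bar\alpha}\right)^{n-1}\leq\left(\frac{\alpha}{\bar\alpha}\right)^{n-l+1}\leq\prod_{k=l}^n\left(\frac{\alpha}{\bar\alpha}\right)^{x_{l-1}\oplus y_k}
\]
to force $x_{l-1}=x_l$; it then compares $P(\mathbf 0,y^n)$ and $P(\mathbf 1,y^n)$ exactly as you do. Your approach instead packages the joint law as $D\,c^{x_1}a^{H(x^n,y^n)}b^{T(x^n)}$ and dispatches all non-constant $x^n$ in one stroke via $b^{T}\leq b<a^{n-1}$, together with the elementary observation $H(x^n,y^n)+n-1\geq N_{1-x_1}(y^n)$. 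Both arguments rely on the same calibrated hypothesis $b<a^{n-1}$, but yours avoids the induction and makes transparent why a single transition already outweighs any possible Hamming advantage; the paper's version has the minor virtue of not needing to split on $x_1$.
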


To prove this lemma, we will make use of the following fact.

\noindent{\bf Claim.} Let $y^n\in\{0, 1\}^n$ be given. If $x^n\in\{0, 1\}^n$ maximizes $P(x^n, y^n)$, then $x_1=x_2=\dots=x_n$.
\begin{proof}[Proof of Claim]
We prove the result using backward induction. To do so, we assume that the maximizer $x^n$ satisfies $x_n=x_{n-1}=\cdots=x_{l}$ for $2\leq l\leq n$. It is sufficient to show that $x_n=\cdots=x_l=x_{l-1}$. In light of \eqref{Jont_Dis_Thm4}, we have
\eqn{eq:Induction}{P(x^n, y^n) = A_{l-1} \left(\frac{r}{\bar{r}}\right)^{x_l\oplus x_{l-1}} \prod_{k=l}^n \left(\frac{\alpha}{\bar{\alpha}}\right)^{x_l \oplus y_k},}
where\footnote{When $l\leq 3$, we use the convention that $\prod_{k=2}^{l-1} \left(\frac{r}{\bar{r}}\right)^{x_k\oplus x_{k-1}} \left(\frac{\alpha}{\bar{\alpha}}\right)^{x_k \oplus y_k} = 1$.}
\eq{A_{l-1} = (\bar{\alpha}\bar{r})^n \frac{\bar{p}}{\bar{r}} \left(\frac{p}{\bar{p}}\right)^{x_1} \left(\frac{\alpha}{\bar{\alpha}}\right)^{x_1 \oplus y_1} \Upsilon(x^{\ell-1}, y^{\ell-1}).}
Notice that $A_{l-1}$ depends only on $\repdc{x}{1}{l-1}$. By the induction hypothesis, we have $x_l=\cdots=x_n$. In particular, $x^n$ equals either
\eq{\tilde{x}^n\coloneqq\{\repdc{x}{1}{l-1},\underbrace{\bar{x}_{l-1},\ldots,\bar{x}_{l-1}}_{n-l+1}\},}
or
\eq{\hat{x}^n\coloneqq\{\repdc{x}{1}{l-1},\underbrace{x_{l-1},\ldots,x_{l-1}}_{n-l+1}\}.}
By \eqref{eq:Induction}, we have that
\eq{
P(\tilde{x}^n, y^n) = A_{l-1} \frac{r}{\bar{r}} \prod_{k=l}^n \left(\frac{\alpha}{\bar{\alpha}}\right)^{1-x_{l-1} \oplus y_k},}
and
\eq{P(\hat{x}^n, y^n) = A_{l-1} \prod_{k=l}^n \left(\frac{\alpha}{\bar{\alpha}}\right)^{x_{l-1} \oplus y_k}.
}
By the assumptions on $r$ and $\alpha$, we have
\al{\frac{r}{\bar{r}} \prod_{k=l}^n \left(\frac{\alpha}{\bar{\alpha}}\right)^{1-x_{l-1} \oplus y_k} &\leq \frac{r}{\bar{r}} < \left(\frac{\alpha}{\bar{\alpha}}\right)^{n-1} \\
	&\leq \left(\frac{\alpha}{\bar{\alpha}}\right)^{n-l+1} \leq \prod_{k=l}^n \left(\frac{\alpha}{\bar{\alpha}}\right)^{x_{l-1} \oplus y_k},}
which shows that $P(\tilde{x}^n, y^n)<P(\hat{x}^n, y^n)$ and hence $x^n=\hat{x}^n$. In other words, $x_{l-1}=x_l=\cdots=x_n$. This completes the induction step.
\end{proof}

\begin{proof}[Proof of Lemma~\ref{Lemma:JointProbability_MaximizationMemory}]
By the above claim, for any given $y^n\in\{0,1\}^n$, the maximizer $x^n\in\{0,1\}^n$  of $P(x^n,y^n)$ is either $x^n={\bf 0}$ or $x^n={\bf 1}$, for which we have
\aln{
\label{eq:Prob0} P({\bf 0},y^n) &= (\bar{\alpha}\bar{r})^n \frac{\bar{p}}{\bar{r}} \left(\frac{\alpha}{\bar{\alpha}}\right)^{N_1(y^n)},\\
\label{eq:Prob1} P({\bf 1},y^n) &= (\bar{\alpha}\bar{r})^n \frac{p}{\bar{r}} \left(\frac{\alpha}{\bar{\alpha}}\right)^{N_0(y^n)}.
}
Assume $N_0(y^n)>N_1(y^n)$ and recall that $\alpha p<\bar{\alpha} \bar{p}$. In this case,
\eq{p \left(\frac{\alpha}{\bar{\alpha}}\right)^{N_0(y^n)} \leq \frac{\alpha p}{\bar{\alpha}} \left(\frac{\alpha}{\bar{\alpha}}\right)^{N_1(y^n)} < \bar{p} \left(\frac{\alpha}{\bar{\alpha}}\right)^{N_1(y^n)},}
which implies $P({\bf 0}, y^n)>P({\bf 1}, y^n)$, and hence $x^n=\bold{0}$ is the only maximizer. If $N_0(y^n) < N_1(y^n)$, then $\left(\frac{\alpha}{\bar{\alpha}}\right)^{N_0(y^n)} > \left(\frac{\alpha}{\bar{\alpha}}\right)^{N_1(y^n)}$. Since $p\geq\bar{p}$, we conclude that
\eq{p \left(\frac{\alpha}{\bar{\alpha}}\right)^{N_0(y^n)} > \bar{p} \left(\frac{\alpha}{\bar{\alpha}}\right)^{N_1(y^n)}.}
Consequently, $P({\bf 1}, y^n)>P({\bf 0}, y^n)$ and hence $x^n=\bold{1}$ is the only maximizer.
\end{proof}

Note that
\begin{eqnarray}
  \cP(X^n|Y^n) &=& \sum_{y^n\in\{0,1\}^n} \max_{x^n\in\{0,1\}^n} P(x^n,y^n)\nonumber\\
  &\stackrel{(a)}{=}& \sum_{y^n:N_0(y^n)>N_1(y^n)} P({\bf 0},y^n)\nonumber\\
  &&+\sum_{y^n:N_0(y^n) < N_1(y^n)} P({\bf 1},y^n)\nonumber\\
   &\stackrel{(b)}{=}& \bar{\alpha}^n \bar{r}^{n-1} \sum_{k=0}^{(n-1)/2} \binom{n}{k} \left(\frac{\alpha}{\bar{\alpha}}\right)^k,
   \label{End_Point_Epsilon_Memory}
\end{eqnarray}
where $(a)$ is due to Lemma~\ref{Lemma:JointProbability_MaximizationMemory} and $(b)$ comes from \eqref{eq:Prob0} and \eqref{eq:Prob1}.

In order to be able to use Theorem~\ref{Thm:GeneralizedLocalLinearity}, we first need to show that $\cP(X^n)<\cP(X^n|Y^n)$. Note that $1=\sum_{k=0}^n{n \choose k}\alpha^n\bar{\alpha}^{n-k}$ and hence $\bar{\alpha}^n\sum_{k=0}^n{n \choose k}\left(\frac{\alpha}{\bar{\alpha}}\right)^k=1$. We can therefore write
\begin{eqnarray}
	\frac{1}{\bar{\alpha}^n}&=&\sum_{k=0}^n{n \choose k}\left(\frac{\alpha}{\bar{\alpha}}\right)^k\nonumber\\
	&=&\sum_{k=0}^{(n-1)/2}{n \choose k}\left(\frac{\alpha}{\bar{\alpha}}\right)^k\left(1+\left(\frac{\alpha}{\bar{\alpha}}\right)^{n-2k}\right)\nonumber\\
	&\leq&\sum_{k=0}^{(n-1)/2}{n \choose k}\left(\frac{\alpha}{\bar{\alpha}}\right)^k\left(1+\frac{\alpha}{\bar{\alpha}}\right)\nonumber\\
	&<&\sum_{k=0}^{(n-1)/2}{n \choose k}\left(\frac{\alpha}{\bar{\alpha}}\right)^k\left(1+\frac{\bar{p}}{p}\right)\nonumber\\
	&=&\frac{1}{p}\sum_{k=0}^{(n-1)/2}{n \choose k}\left(\frac{\alpha}{\bar{\alpha}}\right)^k, \label{Eq:PcStrict}
\end{eqnarray}
which implies that $\cP(X^n)<\cP(X^n|Y^n)$.

Now that all the hypotheses of Theorem~\ref{Thm:GeneralizedLocalLinearity} are shown to be satisfied, we can use \eqref{eq:DerivativePsi} to study $\underline{\mathcalboondox{h}}'(\cP(X^n|Y^n))$. The following lemma is important in bounding $\underline{\mathcalboondox{h}}'(\cP(X^n|Y^n))$. 

\begin{lemma}\label{Lemma:bound_q_y}
Let $(X^n,Y^n)$ be as in the hypothesis of Theorem~\ref{Theorem_MarkovMemory}. Then, for all $y^n\in\{0,1\}^n$,
\eq{q(y^n) \geq \alpha^n.}
\end{lemma}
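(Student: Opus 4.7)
The plan is to prove this lemma by a direct computation: express $q(y^n)$ as a marginal over $X^n$, and observe that the conditional law $P_{Y^n \mid X^n}$ factorizes as a product of BSC transitions, each of which is bounded below by $\alpha$.

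More concretely, assumption ($\textnormal{b}$) tells us that, conditional on $X^n$, the coordinates of $Y^n$ are independent with $\Pr(Y_k = y_k \mid X_k = x_k) = \bar{\alpha}^{1 \oplus x_k \oplus y_k} \alpha^{x_k \oplus y_k}$. Writing
\begin{equation*}
q(y^n) \;=\; \sum_{x^n \in \{0,1\}^n} \Pr(X^n = x^n) \prod_{k=1}^n \Pr(Y_k = y_k \mid X_k = x_k),
\end{equation*}
the key observation is that since $\alpha \in (0, \tfrac{1}{2})$ we have $\bar{\alpha} > \alpha$, so each factor $\Pr(Y_k = y_k \mid X_k = x_k)$ is either $\bar{\alpha}$ or $\alpha$, and in particular is at least $\alpha$. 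Therefore $\prod_{k=1}^n \Pr(Y_k = y_k \mid X_k = x_k) \geq \alpha^n$ for every $x^n \in \{0,1\}^n$, and substituting yields
\begin{equation*}
q(y^n) \;\geq\; \alpha^n \sum_{x^n \in \{0,1\}^n} \Pr(X^n = x^n) \;=\; \alpha^n.
\end{equation*}

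There is no real obstacle here: the bound is uniform in $y^n$ and does not use the Markov structure of $X^n$ at all (only that $Y^n$ is the output of a memoryless $\mathsf{BSC}(\alpha)$ channel driven by $X^n$, which is assumption ($\textnormal{b}$)). The reason this lemma is worth isolating is that the constant $\alpha^n$ is exactly what will appear on the right-hand side of the upper bound $\underline{\mathcalboondox{h}}_n^n(\eps) \leq 1 - \zeta_n(\eps)\alpha^n$ in Theorem~\ref{Theorem_MarkovMemory}, via the formula \eqref{eq:DerivativePsi} in Theorem~\ref{Thm:GeneralizedLocalLinearity}: one needs a lower bound on $q_Y(y)$ to control the minimum in that expression from below.
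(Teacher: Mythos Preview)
Your proof is correct and follows essentially the same approach as the paper: bound each BSC transition probability $\Pr(Y_k=y_k\mid X_k=x_k)\in\{\alpha,\bar\alpha\}$ below by $\alpha$, then sum out $x^n$ using $\sum_{x^n}\Pr(X^n=x^n)=1$. The paper just unrolls the explicit Markov-chain formula for $\Pr(X^n=x^n)$ before performing the same two steps, whereas you (more cleanly) keep $\Pr(X^n=x^n)$ abstract and note that the Markov structure plays no role.
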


\begin{proof}
From \eqref{Jont_Dis_Thm4}, we have
\begin{eqnarray}
P(x^n,y^n) &=& (\bar{\alpha}\bar{r})^n \frac{\bar{p}}{\bar{r}} \left(\frac{p}{\bar{p}}\right)^{x_1} \left(\frac{\alpha}{\bar{\alpha}}\right)^{x_1 \oplus y_1} \Upsilon_n(x^n, y^n)\nonumber\\
&\geq & \left(\frac{\alpha}{\bar{\alpha}}\right)^n (\bar{\alpha}\bar{r})^n \frac{\bar{p}}{\bar{r}} \left(\frac{p}{\bar{p}}\right)^{x_1} \prod_{k=2}^n \left(\frac{r}{\bar{r}}\right)^{x_k\oplus x_{k-1}}\nonumber\\
&=& \alpha^n \bar{r}^n \frac{\bar{p}}{\bar{r}} \left(\frac{p}{\bar{p}}\right)^{x_1} \prod_{k=2}^n \left(\frac{r}{\bar{r}}\right)^{x_k\oplus x_{k-1}}\nonumber.
\end{eqnarray}
Summing over all $x^n\in\{0,1\}^n$, we obtain
\begin{equation}\label{Proof_memory}
q(y^n)\geq \alpha^n \bar{r}^{n-1} \bar{p} \sum_{x^n\in\{0,1\}^n}\left(\frac{p}{\bar{p}}\right)^{x_1} \prod_{k=2}^n \left(\frac{r}{\bar{r}}\right)^{x_k\oplus x_{k-1}.}
\end{equation}
On the other hand, it is straightforward to verify that
\begin{equation}
\label{Proof_memory2}
\begin{aligned}
1&=\sum_{x\in\{0,1\}^n} \Pr(X^n=x^n)\\
&= \bar{r}^{n-1}\bar{p} \sum_{x^n\in\{0,1\}^n}\left(\frac{p}{\bar{p}}\right)^{x_1} \prod_{k=2}^n\left(\frac{r}{\bar{r}}\right)^{x_k\oplus x_{k-1}}.
\end{aligned}
\end{equation}
Plugging \eqref{Proof_memory2} into \eqref{Proof_memory}, the result follows.
\end{proof}

By \eqref{eq:DerivativePsi} and the previous lemma,
\eq{\underline{\mathcalboondox{h}}'(\cP(X^n|Y^n)) \geq \min_{y^n,z^n\in\{0,1\}^n} \frac{\alpha^n}{P(x^n_{y^n},y^n) - P(x^n_{z^n},y^n)}.}
Since both $x^n_{y^n}$ and $x^n_{z^n}$ are either $\bold{0}$ or $\bold{1}$, we have to maximize
\eq{\vartheta\coloneqq
\begin{cases}(\bar{\alpha}\bar{r})^n \frac{\bar{p}}{\bar{r}} \left(\frac{\alpha}{\bar{\alpha}}\right)^{N_1(y^n)} - (\bar{\alpha}\bar{r})^n \frac{p}{\bar{r}} \left(\frac{\alpha}{\bar{\alpha}}\right)^{N_0(y^n)},& \text{if}~ y^n\in \mathcal{R}_0,\\(\bar{\alpha}\bar{r})^n \frac{p}{\bar{r}} \left(\frac{\alpha}{\bar{\alpha}}\right)^{N_0(y^n)} - (\bar{\alpha}\bar{r})^n \frac{\bar{p}}{\bar{r}} \left(\frac{\alpha}{\bar{\alpha}}\right)^{N_1(y^n)}, & \text{if}~ y^n\notin \mathcal{R}_0,\end{cases}
}
where $\mathcal{R}_0=\{y^n\in\{0,1\}^n:N_0(y^n)>N_1(y^n)\}$. Clearly, $\vartheta$ is maximized when $y^n={\bf 1}$ and thus
\eq{\underline{\mathcalboondox{h}}'(\cP(X^n|Y^n)) \geq \frac{\bar{r}\alpha^n}{p(\bar{\alpha}\bar{r})^n-\bar{p}(\alpha\bar{r})^n}.}
By \eqref{eq:PsiExtremePoint} and the fact that $\underline{\mathcalboondox{h}}_n^n(\eps)=\underline{\mathcalboondox{h}}(\eps^n)$,
\eq{\underline{\mathcalboondox{h}}_n^n(\eps) \leq 1 - \bar{r}\frac{\cP(X^n|Y^n)-\eps^n}{p(\bar{\alpha}\bar{r})^n-\bar{p}(\alpha\bar{r})^n} \alpha^n,} where $\cP(X^n|Y^n)$ is computed in \eqref{End_Point_Epsilon_Memory}.

The lower bound follows from considering the direction $\tilde{D}\in\ul{\D}(\rm{I}_{2^n})$, whose entries are all zero except $\tilde{D}(\bold{1}, \bold{0})=\lambda$ and $\tilde{D}(\bold{1}, \bold{1})=-\lambda$ for $\lambda=2^{-1/2}$. In particular, plugging $\tilde{D}$ into \eqref{eq:MasterEquationQuotient}, we obtain an upper bound for $\underline{\mathcalboondox{h}}'(\cP(X^n|Y^n))$ and thus a lower bound for $\underline{\mathcalboondox{h}}(\eps)$ for the desired range of $\eps$. Note that the filter $\rm{I_{2^n}}+\zeta_n(\eps)\tilde{D}$ corresponds to the $2^n$-ary Z-channel $\mathsf{Z}_n(\zeta_n(\eps))$.


\section{Proof of Proposition~\ref{Proposition_Parametric_Dis_Privacy}}\label{Appendix:Parametric}
Since $r=0$, the joint distribution $P_{\theta Y^n}$ can be equivalently written as the joint probability matrix $P=[P(x^n,y^n)]_{x^n, y^n\in \{0,1\}^n}$ with $x_1=x_2=\dots=x_n=\theta$.  As in the proof of Theorem~\ref{Theorem_MarkovMemory}, the hypotheses of Theorem~\ref{Thm:GeneralizedLocalLinearity} are fulfilled. In particular,
\eqn{eq:Jordan0}{\underline{\mathcalboondox{h}}'(\cP(\theta|Y^n)) = \min_{y^n,z^n\in\{0,1\}^n} \frac{q(y^n)}{P(x^n_{y^n},y^n) - P(x^n_{z^n},y^n)}.}
In this case, \eqref{Jont_Dis_Thm4} becomes
\eq{
P({\bf 0},y^n) = \bar{p} \bar{\alpha}^n \left(\frac{\alpha}{\bar{\alpha}}\right)^{N_1(y^n)}, }
and 
\eq{P({\bf 1},y^n) = p \bar{\alpha}^n \left(\frac{\alpha}{\bar{\alpha}}\right)^{N_0(y^n)}.}
In particular,
\eq{\underline{\mathcalboondox{h}}'(\cP(\theta|Y^n)) =  \min_{y^n, z^n\in\{0,1\}^n} \frac{p \bar{\alpha}^n \left(\frac{\alpha}{\bar{\alpha}}\right)^{N_0(y^n)}+\bar{p} \bar{\alpha}^n \left(\frac{\alpha}{\bar{\alpha}}\right)^{N_1(y^n)}}{P(x^n_{y^n},y^n) - P(x^n_{z^n},y^n)}.}
Lemma \ref{Lemma:JointProbability_MaximizationMemory} implies that both $x^n_{y^n}$ and $x^n_{z^n}$ are either ${\bf 0}$ or ${\bf 1}$. If $N_0(y^n)>N_1(y^n)$, then
\al{&\frac{p \bar{\alpha}^n \left(\frac{\alpha}{\bar{\alpha}}\right)^{N_0(y^n)}+\bar{p} \bar{\alpha}^n \left(\frac{\alpha}{\bar{\alpha}}\right)^{N_1(y^n)}}{P(x^n_{y^n},y^n) - P(x^n_{z^n},y^n)} \\
	&~~~~~~~~~~~~~~~~~~~~~\geq \frac{p \bar{\alpha}^n \left(\frac{\alpha}{\bar{\alpha}}\right)^{N_0(y^n)}+\bar{p} \bar{\alpha}^n \left(\frac{\alpha}{\bar{\alpha}}\right)^{N_1(y^n)}}{\bar{p} \bar{\alpha}^n \left(\frac{\alpha}{\bar{\alpha}}\right)^{N_1(y^n)} - p \bar{\alpha}^n \left(\frac{\alpha}{\bar{\alpha}}\right)^{N_0(y^n)}},}
with equality if and only if $N_1(z^n)>N_0(z^n)$. It is not hard to show that
\eqn{eq:Jordan1}{\frac{p \bar{\alpha}^n \left(\frac{\alpha}{\bar{\alpha}}\right)^{N_0(y^n)}+\bar{p} \bar{\alpha}^n \left(\frac{\alpha}{\bar{\alpha}}\right)^{N_1(y^n)}}{\bar{p} \bar{\alpha}^n \left(\frac{\alpha}{\bar{\alpha}}\right)^{N_1(y^n)} - p \bar{\alpha}^n \left(\frac{\alpha}{\bar{\alpha}}\right)^{N_0(y^n)}} \geq \frac{\bar{p}+p\left(\frac{\alpha}{\bar{\alpha}}\right)^n}{\bar{p} - p \left(\frac{\alpha}{\bar{\alpha}}\right)^n},}
with equality if and only if $y^n={\bf 0}$. Similarly, if $N_1(y^n)>N_0(y^n)$, then
\al{&\frac{p \bar{\alpha}^n \left(\frac{\alpha}{\bar{\alpha}}\right)^{N_0(y^n)}+\bar{p} \bar{\alpha}^n \left(\frac{\alpha}{\bar{\alpha}}\right)^{N_1(y^n)}}{P(x^n_{y^n},y^n) - P(x^n_{z^n},y^n)} \\
	&~~~~~~~~~~~~~~~~~~~~~~\geq \frac{p \bar{\alpha}^n \left(\frac{\alpha}{\bar{\alpha}}\right)^{N_0(y^n)}+\bar{p} \bar{\alpha}^n \left(\frac{\alpha}{\bar{\alpha}}\right)^{N_1(y^n)}}{p \bar{\alpha}^n \left(\frac{\alpha}{\bar{\alpha}}\right)^{N_0(y^n)} - \bar{p} \bar{\alpha}^n \left(\frac{\alpha}{\bar{\alpha}}\right)^{N_1(y^n)}},}
with equality if and only if $N_0(z^n)>N_1(z^n)$. As before,
\eqn{eq:Jordan2}{\frac{p \bar{\alpha}^n \left(\frac{\alpha}{\bar{\alpha}}\right)^{N_0(y^n)}+\bar{p} \bar{\alpha}^n \left(\frac{\alpha}{\bar{\alpha}}\right)^{N_1(y^n)}}{\bar{p} \bar{\alpha}^n \left(\frac{\alpha}{\bar{\alpha}}\right)^{N_1(y^n)} - p \bar{\alpha}^n \left(\frac{\alpha}{\bar{\alpha}}\right)^{N_0(y^n)}} \geq \frac{p+\bar{p}\left(\frac{\alpha}{\bar{\alpha}}\right)^n}{p - \bar{p} \left(\frac{\alpha}{\bar{\alpha}}\right)^n},}
with equality if and only if $y^n={\bf 1}$. From \eqref{eq:Jordan1} and \eqref{eq:Jordan2}, we conclude that
\eq{\underline{\mathcalboondox{h}}'(\cP(\theta|Y^n)) = \frac{p+\bar{p}\left(\frac{\alpha}{\bar{\alpha}}\right)^n}{p - \bar{p} \left(\frac{\alpha}{\bar{\alpha}}\right)^n}=\frac{p\bar{\alpha}^n+\bar{p}\alpha^n}{p\bar{\alpha}^n-\bar{p}\alpha^n},}
and $y_0={\bf 1}$ and $z_0={\bf 0}$ achieve the minimum in \eqref{eq:Jordan0}. From the last part of Theorem~\ref{Thm:GeneralizedLocalLinearity} the optimality of the $2^n$-ary Z-channel $\mathsf{Z}_n(\zeta_n(\eps))$ is evident.

\section{Proof of Theorem~\ref{Theorem_equivakebt_strong_MC}}\label{Appendidx:MC}
	From \eqref{Eq:Eta_MMSE} and \eqref{Maximal_correlation_Equivalent} we obtain that
	\eq{\inf_{f\in\S_U} \frac{\mmse(f(U)|V)}{\var(f(U))} = 1-\sup_{f\in\S_U}\eta^2_V(f(U)) = 1-\rho_m^2(U,V).}
	From the previous equation it is clear that $\rho_m^2(U, V)\leq \eps$ if and only if
	\begin{equation*}
	\mmse(f(U)|V)\geq (1-\eps)\var(f(U)),
	\end{equation*}
	for all $f\in\S_U$. By \eqref{Def:Strong_Privacy}, we obtain $Z_\gamma\in\Gamma(\eps)$ if and only if $\rho_m^2(X,Z_\gamma)\leq \eps$.

\section{Proof of Theorem~\ref{Lemma_Gaussian_Y_Arbit_X}}\label{Appendix_Bound}

	Without loss of generality, assume $\E(X)=\E(Y_\sG)=0$. Since $Y_\sG$ is Gaussian,  \eqref{Eq:Eta_MMSE} implies that
	\begin{eqnarray}
	\sM(\eps) &=& \inf_{\gamma:\rho_m^2(X,Z_\gamma)\leq\eps} \frac{\mmse(Y_{\sG}|Z_\gamma)}{\var(Y_{\sG})}\nonumber\\
	&=& 1 - \sup_{\gamma:\rho_m^2(X,Z_\gamma)\leq\eps} \rho_m^2(Y_\sG;Z_\gamma).\label{eq:MasterBounds}
	\end{eqnarray}
	A straightforward computation leads to
	\aln{
		\label{eq:MaximalCorrelationGG}\rho_m^2(Y_\sG,Z_\gamma) &= \rho^2(Y_\sG,Z_\gamma) = \frac{\gamma\var(Y_\sG)}{1+\gamma\var(Y_\sG)},\\
		\nonumber \rho_m^2(X,Z_\gamma) &\geq \rho^2(X,Z_\gamma) =  \rho^2(X,Y_\sG) \rho_m^2(Y_\sG,Z_\gamma).
	}
	The preceding inequality and  \eqref{eq:MasterBounds} imply
	\eq{\sM(\eps) \geq 1 - \sup_{\gamma:\rho_m^2(X,Z_\gamma)\leq\eps} \frac{\rho_m^2(X,Z_\gamma)}{\rho^2(X,Y_\sG)} \geq 1 - \frac{\eps}{\rho^2(X,Y_\sG)},}
	which proves the lower bound.
	
	The strong data processing inequality for maximal correlation \cite[Lemma 6]{Asoode_submitted} states that $\rho_m^2(X, Z_{\gamma})\leq \rho_m^2(X,Y_\sG)\rho_m^2(Y_\sG, Z_{\gamma})$. In particular, if \dsty{\rho_m^2(Y_\sG, Z_{\gamma}) \leq \frac{\eps}{\rho_m^2(X,Y)}}, then $\rho_m^2(X,Z_\gamma) \leq \eps$. Therefore, \eqref{eq:MasterBounds} implies
	\begin{eqnarray*}
		\sM(\eps) &\leq& 1 - \sup_{\gamma:\rho_m^2(Y_\sG, Z_{\gamma}) \leq \frac{\eps}{\rho_m^2(X,Y_\sG)}} \rho_m^2(Y_\sG;Z_\gamma)\\
		& =& 1 - \frac{\eps}{\rho_m^2(X,Y_\sG)},
	\end{eqnarray*}
	
	where the last equality follows from the continuity of $\gamma \mapsto \rho_m^2(Y_\sG, Z_{\gamma})$, established in \eqref{eq:MaximalCorrelationGG}, finishing the proof of the upper bound.

\section{Proof of Lemma~\ref{Lemma_Approx_S}}\label{Appendix:limsup}
	Let 
	\begin{equation}\label{Def_Gamma_eps}
	\gamma^*_{\eps}\coloneqq \max\{\gamma\geq 0:\rho_m^2(X_\sG, Z_\gamma)\leq \eps\}.
	\end{equation}
	Recall that 
	\begin{equation}\label{Correlation_Gamma_Eps}
	\rho^2_m(X, Z_\gamma)\geq \rho^2(X, Z_\gamma)=\frac{\gamma \rho^2(X,Y)\var(Y)}{1+\gamma\var(Y)}.
	\end{equation}
	Since $\eps\to 0$, we can assume that $\eps< \rho^2(X,Y)$. Thus,  from \eqref{Correlation_Gamma_Eps} we obtain 
	\begin{equation}\label{Correlation_Gamma_eps2}
	\gamma^*_{\eps}\leq \frac{\eps}{\var(Y)(\rho^2(X, Y)-\eps)}. 
	\end{equation} 
	In particular, $\gamma_{\eps}^*\to 0$ as $\eps\to 0$.
	Since $\gamma\mapsto\mmse(Y|Z_\gamma)$ is decreasing, we have that $\sM(\eps)=\mmse(Y|Z_{\gamma^*_\eps})$. 
	Therefore, the first-order approximation of $\sM(\cdot)$ around zero yields
	\begin{eqnarray*}
		\sM(\eps)&=&1+\frac{\gamma^*_{\eps}}{\var(Y)}\frac{\text{d}}{\text{d}\gamma^*_{\eps}}\mmse(Y|Z_{\gamma^*_{\eps}})\Big|_{\eps=0}+o(\gamma^*_{\eps})\\
		&\stackrel{(a)}{=}& 1-\var(Y)\gamma^*_{\eps}+o(\gamma^*_{\eps})\\
		&\stackrel{(b)}{\geq}& 1-\frac{\eps}{\rho^2(X, Y)}+o(\eps)
	\end{eqnarray*} 
	where $(a)$ follows from the fact that $\frac{\text{d}}{\text{d}\gamma}\mmse(Y|Z_{\gamma})=-\E[\var^2(Y|Z_\gamma)]$ \cite[Prop. 9]{MMSE_Guo} and $(b)$ follows from \eqref{Correlation_Gamma_eps2}.

\section*{Acknowledgment}
The authors would like to acknowledge two anonymous reviewers for their insightful comments and, in particular, one of them for the derivation in \eqref{Eq:PcStrict}. Furthermore, the first author acknowledges useful discussions with M. M\'edard and F. P. Calmon.

\bibliographystyle{IEEEtran}
\bibliography{bibliography}

\end{document}